\newtheorem{theorem}{\emph{\underline{Theorem}}}
\newtheorem{lemma}{\emph{\underline{Lemma}}}
\newtheorem{corollary}{\emph{\underline{Corollary}}}
\newtheorem{remark}{\bf \emph{\underline{Remark}}}
\def\phi{\varphi}
\def\({\left(}
\def\){\right)}
\def\b0{{\mathbf{0}}}
\newcommand{\diag}{\mathrm{diag}}
\def\BibTeX{{\mathrm B\kern-.05em{\sc i\kern-.025em b}\kern-.08em
    T\kern-.1667em\lower.7ex\hbox{E}\kern-.125emX}}
\begin{document}
\allowdisplaybreaks[4]
\title{
\begin{flushleft}
\author{Zhenyu\! Kang,~\IEEEmembership{\!Graduate\! Student~\!Member,~\!IEEE}, \!Changsheng~\!You,~\IEEEmembership{\!Member,~\!IEEE},\\	 and \!Rui \!Zhang,~\IEEEmembership{\!Fellow,~\!IEEE}   
\thanks{\noindent Z. Kang is with the Department of Electrical and Computer Engineering, National University of Singapore, Singapore 117583 (email: zhenyu\_kang@u.nus.edu). 
C. You is with the Department of Electronic and Electrical Engineering, Southern University of Science and Technology (SUSTech), Shenzhen 518055, China (e-mail: youcs@sustech.edu.cn).
R. Zhang is with the Department of Electrical and Computer Engineering, National University of Singapore, Singapore 117583.
He is also with the School of Science and Engineering, the Chinese University of Hong Kong (Shenzhen), China, 518172 (e-mail: elezhang@nus.edu.sg).
}
}\end{flushleft}
\huge 
$\text{Active-Passive \!IRS \!aided \!Wireless \!Communication:}$\\ New\! Hybrid\! Architecture \!and \\Elements \!Allocation\! Optimization}
\maketitle
\begin{abstract}
Intelligent reflecting surface (IRS) has emerged as a promising technology to enhance the wireless communication network coverage and capacity by dynamically controlling the radio signal propagation environment.
In contrast to the existing works that considered active or passive IRS only, we propose in this paper a new \emph{hybrid active-passive} IRS architecture that consists of both active and passive reflecting elements, thus achieving their combined advantages flexibly.
Under a practical channel setup with Rician fading where only the \emph{statistical} channel state information (CSI) is available, we study the hybrid IRS design in a multi-user communication system. Specifically, we formulate an optimization problem to maximize the achievable ergodic capacity of the worst-case user by designing the hybrid IRS beamforming and active/passive elements allocation based on the statistical CSI, subject to various practical constraints on the active-element amplification factor and amplification power consumption, as well as the total active and passive elements deployment budget.
To solve this challenging problem, we first approximate the ergodic capacity in a simpler form and then propose an efficient algorithm to solve the problem optimally. Moreover, we show that for the special case with all channels to be line-of-sight (LoS), only active elements need to be deployed when the total deployment budget is sufficiently small, while both active and passive elements should be deployed with a decreasing number ratio when the budget increases and exceeds a certain threshold.
Finally, numerical results are presented which demonstrate the performance gains of the proposed hybrid IRS architecture and its optimal design over the conventional schemes with active/passive IRS only under various practical system setups.
\end{abstract} 
\begin{IEEEkeywords}
Intelligent reflecting surface (IRS), hybrid active-passive IRS, ergodic capacity, Rician fading, IRS beamforming, elements allocation.
\end{IEEEkeywords}
\section{Introduction}
Although the fifth-generation (5G) wireless network is still under deployment, researchers have moved forward to define the next-generation or sixth-generation (6G) wireless network, with the aim for achieving more stringent performance, such as unprecedentedly high throughput, super-high reliability, ultra-low latency, extremely low power consumption, etc \cite{9040264,8766143}.
However, these targets may not be fully achieved by only relying on the existing technologies, such as massive multi-input multi-output (MIMO) and millimeter wave (mmWave) communications, which can attain enhanced performance but generally incur more substantial energy consumption and hardware cost.
On the other hand, wireless communication performance is fundamentally constrained by the wireless channel impairments such as path-loss, shadowing, and small-scale fading, which can be partially mitigated by conventional wireless communication techniques such as power control, adaptive modulation, diversity, dynamic beamforming, etc., but still remain random and uncontrolled at large. 
Recently, \emph{intelligent reflecting surface} (IRS) has emerged as a promising technology to address the above issues by leveraging massive low-cost reflecting elements to flexibly and dynamically control the radio signal propagation environment in favor of wireless communications/sensing, thus achieving substantially improved communication spectral/energy efficiency and sensing accuracy cost-effectively \cite{9326394,9724202}.

The existing works on IRS have mainly considered the wireless systems aided by \emph{passive IRS}.
Specifically, as illustrated in Fig. \ref{p-IRS}, the passive IRS is composed of a large number of passive reflecting elements with positive resistances (e.g., positive-intrinsic-negative (PIN) diodes, field-effect transistors (FETs), micro-electromechanical system (MEMS) switches) \cite{9326394}.
As such, each passive element can reflect the incident signal with a desired phase shift, while it has no signal processing/amplification capability due to the lack of transmit/receive radio frequency (RF) chains. Moreover, compared with the conventional half-duplex active relay, the passive IRS operates in a full-duplex mode and hence is free of amplification/processing noise as well as self-interference \cite{9119122}.
By properly adjusting the individual phase shifts of all passive reflecting elements with the reflection amplitude no larger than one, the reflected signal by IRS can be added constructively with that from the other propagation paths for enhancing the signal power at the intended receiver \cite{9362274,9241706} or destructively for suppressing the undesired interference \cite{9171881}.
Remarkably, it has been shown in \cite{8811733} that the passive IRS beamforming can achieve a \emph{squared power scaling order}, i.e., $\mathcal{O}(N^2)$ with $N$ denoting the number of reflecting elements, which is even higher than that of the massive MIMO with active arrays.
Extensive research has been conducted recently to efficiently incorporate passive IRS into wireless systems for various purposes, e.g., enhancing the communication throughput \cite{9714463}, reducing the outage probability \cite{9205879}, saving the transmit power \cite{8741198}, and extending the range of active relays \cite{9464248,9586067}, among others.
However, the performance gain of passive IRS is fundamentally constrained by the severe \emph{product-distance} path-loss of the reflected channel by IRS \cite{8888223}.
Two practical approaches to dealing with this problem are, respectively, deploying more passive elements at each IRS to enhance its aperture/beamforming gain and placing the passive IRSs closer to the transmitter and/or receiver for reducing the reflected channel product-distance \cite{8982186}. However, these solutions may not be suitable for practical scenarios when the space of IRS site is limited and/or its location cannot be freely selected.
\begin{figure}[t] \centering    
{\subfigure[{Passive IRS.}] {
\label{p-IRS}
\includegraphics[width=2in]{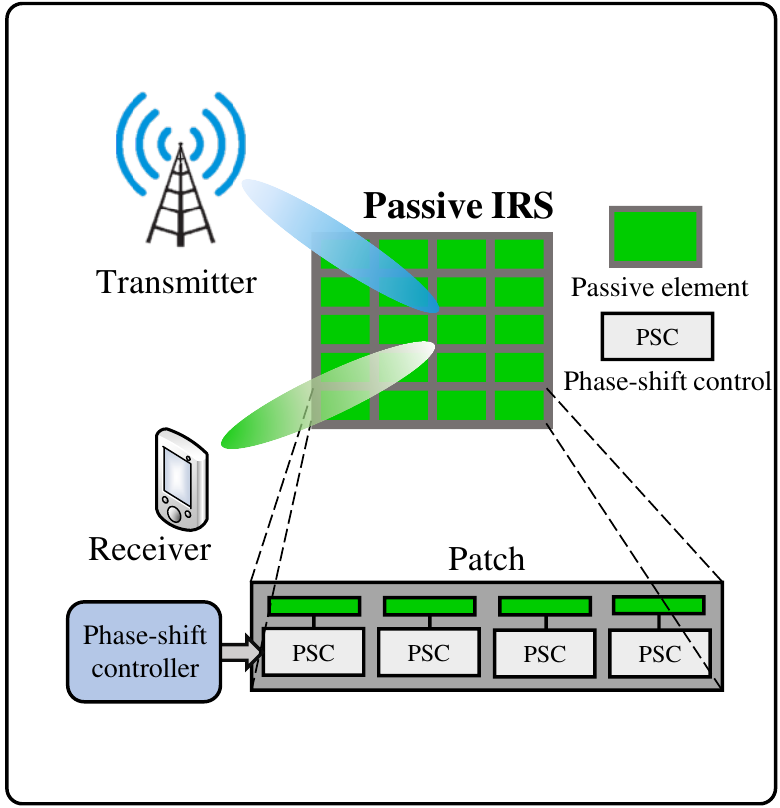}  
}}     
{\subfigure[{Active IRS.}] {\label{a-IRS}
\includegraphics[width=2in]{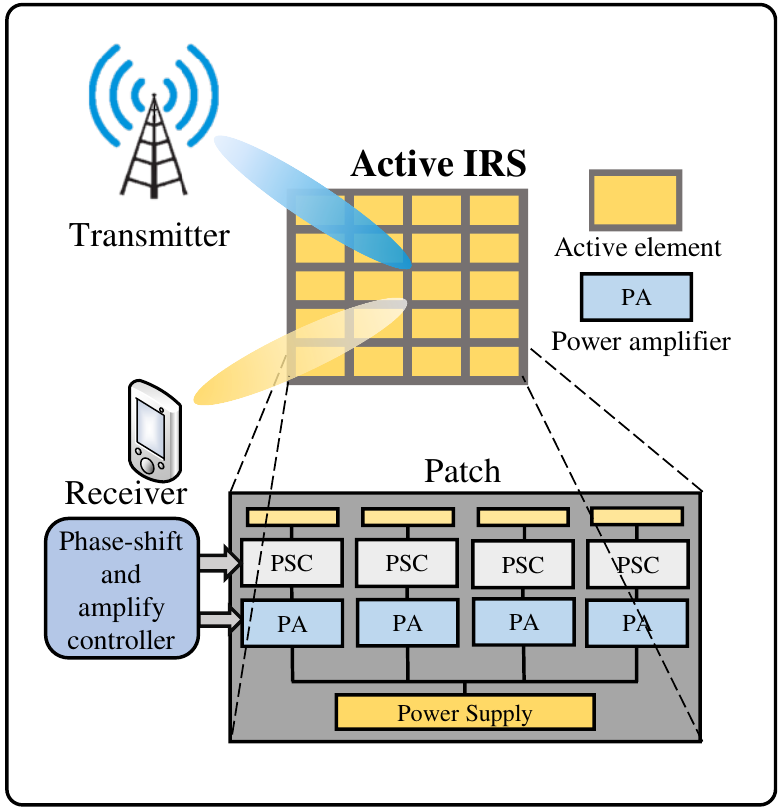}     
}}
{\subfigure[{Hybrid active-passive IRS.}] {\label{h-IRS}
\includegraphics[width=2in]{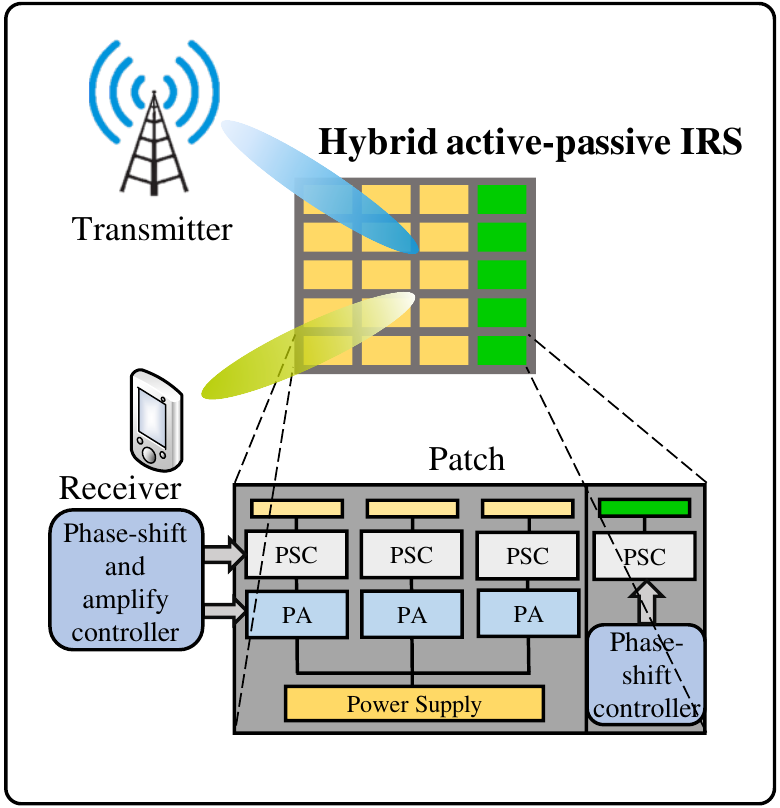}     
}}
{\caption{Different types of IRS architecture.}}
\end{figure}

To tackle the above issues, a new type of IRS, called \emph{active IRS}, has been recently proposed (see, e.g., \cite{9377648,9734027,9530750,9568854}). 
Specifically, the active IRS comprises a number of active reflecting elements, each equipped with an active load (or called negative resistance) such as the tunnel diode and negative impedance converter.
As illustrated in Fig. \ref{a-IRS}, each active load is connected to an additional power supply for signal amplification \cite{8403249,9219017}.
Therefore, the active IRS not only enables adjustable phase shifts as the passive IRS, but also allows the amplitude amplification (i.e., larger than one) of incident signals in a full-duplex mode, albeit at modestly higher hardware and energy cost than the passive IRS \cite{9377648}.
On the other hand, compared to the active relay that attaches RF chains to the antennas, the active IRS does not entail costly and power-hungry RF chain components \cite{9758764}.
The performance comparison between the active and passive IRSs has been recently studied in the literature.
For example, given the active-IRS location and power budget, it has been shown that the active IRS can achieve higher spectral efficiency \cite{9377648}, energy efficiency \cite{9568854}, and reliability \cite{9530403} than the passive IRS.
Besides, the authors in \cite{9530750} further optimized the IRS placement for both the passive- and active-IRS aided systems for rate maximization. It was shown that the passive IRS achieves higher rate performance than the active IRS with their respectively optimized placement when the number of reflecting elements is large and/or the active-element amplification power is small. 
Moreover, the authors in \cite{9734027} considered the same power budget constraint for both the passive- and active-IRS aided systems, where both the base station's (BS's) transmit power and active IRS's amplification power are considered in the case of active IRS. It was revealed that the active IRS outperforms the passive IRS only when the number of reflecting elements is small and/or the amplification power of the active IRS is sufficiently large.
\begin{figure}[t]
\centerline{\includegraphics[width=5in]{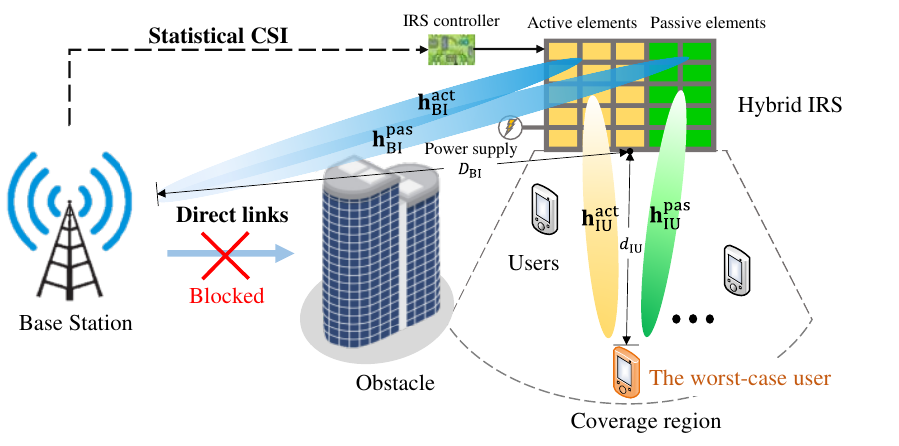}}
\caption{A hybrid active-passive IRS aided wireless communication system.}\label{sysmod}
\end{figure}
To summarize, the existing works on IRS have shown that passive and active IRSs have complementary advantages.
Specifically, the passive IRS has a higher asymptotic beamforming gain than the active IRS (i.e., $\mathcal{O}(N^2)$ versus $\mathcal{O}(N)$), thus is more appealing when the number of reflecting elements $N$ is large \cite{9530750,9377648}. In contrast, the active IRS provides additional power amplification gain, which leads to a much higher signal-to-noise ratio (SNR) than the passive IRS when $N$ is relatively small \cite{9723093,9716895,8403249}.
Besides, the active IRS generally incurs higher cost and power consumption than the passive IRS. These thus indicate that given a total budget on the IRS deployment cost (or equivalently the number of active/passive reflecting elements to be deployed), the conventional IRS architectures with either passive or active elements only in general may not achieve the optimum communication performance.

Motivated by the above, we propose in this paper a new \emph{hybrid active-passive IRS}\footnote{We use the term hybrid IRS to denote this new architecture hereafter for brevity.} architecture as shown in Fig. \ref{h-IRS} to achieve the advantages of both passive and active IRSs for further improving the performance over that with active or passive IRS alone.
Specifically, the hybrid IRS is composed of two co-located sub-surfaces, each consisting of a certain number of passive and active reflecting elements, respectively.  
In particular, we design the active versus passive reflecting elements allocation at the hybrid IRS under a given deployment budget, for optimally balancing the trade-off between the unique power amplification gain of active IRS and the higher beamforming gain of passive IRS than active IRS.
To this end, we consider a hybrid active-passive IRS aided multi-user communication system as shown in Fig. \ref{sysmod}, where a BS transmits independent data to a cluster of users.
A hybrid IRS is properly deployed at the edge of this user cluster to serve the users in its half-space reflection region over different time slots.
We consider a given deployment budget for the hybrid IRS with different costs of each active and passive reflecting element based on practical models, where an active element generally incurs higher cost than its passive counterpart.
To reduce the real-time channel estimation overhead and avoid frequent IRS reflection adjustment, we consider a practical approach that designs the hybrid IRS beamforming based on the \emph{statistical} channel state information (CSI) only, assuming the practical Rician fading channel model (i.e., only the channel path-loss parameters and Rician fading factors are assumed to be known), instead of requiring the knowledge of the instantaneous CSI of all links involved.
In the following, we summarize the main contributions of this paper.

\begin{itemize}
\item First, to guarantee the achievable rate performance of all users, we formulate an optimization problem to maximize the ergodic capacity of the worst-case user located at the boundary of the IRS reflection region.
Specifically, we assume the statistical CSI available only and jointly optimize the active/passive reflecting elements allocation, their phase shifts, and the amplification factors of active elements, subject to various practical constraints on the active-element amplification factor and amplification power consumption, as well as the total active and passive elements deployment budget.
This problem, however, is shown to be non-convex, which is difficult to be optimally solved in general.
To address this difficulty, we approximate the ergodic capacity of the worst-case user with high accuracy and thereby reformulate the original problem in a simpler form.

\item Next, we propose an efficient algorithm to solve the reformulated problem. First, we jointly optimize all elements' phase shifts and the amplification factors of active elements based on the statistical CSI, and obtain a closed-form expression for the achievable ergodic capacity with a given elements allocation.
Then, we apply the one-dimensional search to find the optimal active/passive elements allocation to maximize the ergodic capacity. 
To obtain useful insight into the optimal elements allocation, we further consider two special cases where the involved channels are line-of-sight (LoS) and follow Rayleigh fading, respectively. 
It is shown that in the former case with LoS paths, only active elements need to be deployed when the total deployment budget is sufficiently small, while both active and passive elements should be deployed with a decreasing number ratio when the budget increases and exceeds a certain threshold.

\item Last, we present extensive numerical results to evaluate the effectiveness of our proposed hybrid IRS architecture and its optimized design for rate maximization.
We show that the hybrid IRS with optimized elements allocation outperforms the conventional active/passive-only IRS architecture as well as other benchmarks. 
Moreover, the optimal active/passive elements allocation under the general Rician fading channel and that under the LoS channel are presented, which are in accordance with our theoretical analysis.
The effects of several key parameters such as the Rician fading factor, active-element amplification power, and active/passive-element deployment cost on the capacity performance and optimal active/passive elements allocation are also investigated.
\end{itemize}

It is worth noting that the authors in \cite{48550,9733238} considered a hybrid relay-reflecting intelligent surface architecture with a few relaying elements connected to power amplifiers and RF chains, which, however, significantly differs from our proposed hybrid IRS architecture with both active and passive reflecting elements. 
For example, in \cite{48550,9733238}, the relaying elements forward the signals with power-hungry RF chains, while the active reflecting elements in our proposed hybrid IRS architecture adopt the negative resistance components of much lower power consumption for signal amplification.
The remainder of this paper is organized as follows. The system model is first introduced in Section \ref{sec_sysmod}, based on which we formulate an optimization problem and approximate the worst-case user's ergodic capacity in Section \ref{sec_prob_approx}. 
In Section \ref{sec_design}, we elaborate the algorithm to solve the reformulated optimization problem, and present theoretical results that provide useful insight into the optimal IRS active/passive elements allocation.
Simulation results and pertinent discussions are presented in Section \ref{sec_sim}. Finally, the conclusions are drawn in Section \ref{sec_conclu}.

\emph{Notations}: 
Superscript $(\cdot)^H$ stands for the Hermitian transpose operation. $\mathbb{C}^{a \times b}$ denotes the space of $a \times b$ complex-valued matrices, $\mathbb{N}$ denotes the set of natural numbers, and $\mathbb{R}^+$ denotes the set of positive real numbers. 
The operation $\mathbb{E}\left\{\cdot\right\}$ returns the expected value of a random variable, $\arg(\cdot)$ returns the angle of a complex number, and $\diag{(\boldsymbol{x})}$ returns a diagonal matrix with the elements in $\boldsymbol{x}$ on its main diagonal. The notation $\jmath$ represents the imaginary unit, $\otimes$ denotes the Kronecker product, $\mathcal{C} \mathcal{N}(\mu,\sigma^2)$ denotes the circularly symmetric complex Gaussian (CSCG) random variable with mean of $\mu$ and variance of $\sigma^2$,
$[\cdot]_{m,n}$ denotes the $(m,n)$-th entry of a matrix, $[\cdot]_{m}$ denotes the $m$-th entry of a vector, and $\lfloor
x\rfloor$ denotes the largest integer that does not exceed the real number $x$.

\section{System Model}\label{sec_sysmod}
As illustrated in Fig.~\ref{sysmod}, we consider a hybrid active-passive IRS aided wireless communication system, where a single-antenna BS transmits independent data to a cluster of single-antenna users\footnote{The proposed hybrid IRS and its results obtained in this paper can be extended to the case with multi-antenna BS and multiple IRSs/user clusters, which will be investigated in our future work.}.
We assume that the BS and its served users are separated by long distance and the direct links between them are negligible due to the large channel path-loss and/or severe blockage. 
As such, a hybrid IRS comprising both active and passive reflecting elements is properly placed to serve the users in its half-space reflection region.
Moreover, we consider the time division multiple access (TDMA) scheme, where the users are served by the BS over orthogonal time slots\footnote{Under this setup, the TDMA scheme has been shown to outperform the non-orthogonal multiple access (NOMA) and orthogonal frequency division multiple access (OFDMA) schemes in terms of energy and spectral efficiency \cite{8970580}.}. 
To guarantee the system performance among all users, we consider the worst-case user performance at the boundary of the IRS reflection region with the IRS-user distance $d_{\mathrm{IU}}$ and BS-IRS distance $D_{\mathrm{BI}}$ (see Fig. \ref{sysmod}).

For ease of implementation, we assume that the hybrid IRS comprises two co-located sub-surfaces with $N_{\mathrm{pas}}$ passive and $N_{\mathrm{act}}$ active reflecting elements, respectively (see Fig. \ref{sysmod}).
Specifically, we denote by 
$\mathbf{\Psi}^{\mathrm{pas}}\triangleq \diag{(e^{\jmath\phi^{\mathrm{pas}}_1},\cdots,e^{\jmath\phi^{\mathrm{pas}}_{N_{\mathrm{pas}}}})}$
the reflection matrix of the passive sub-surface, where $\phi^{\mathrm{pas}}_n$ denotes the phase shift of the $n$-th passive element with $n\in\mathcal{N}_{\mathrm{pas}}\triangleq \{1,\cdots,N_{\mathrm{pas}}\}$, and the reflection amplitude of each passive element is set as one (i.e., its maximum value). The cost of each passive element is denoted by $W_{\mathrm{pas}}$.
On the other hand, as the active sub-surface can simultaneously amplify the signal and tune its phase shift, we denote by $\mathbf{\Psi}^{\mathrm{act}}\triangleq\mathbf{A}^{\mathrm{act}}\mathbf{\Phi}^{\mathrm{act}}$ the reflection matrix of the active sub-surface, where $\mathbf{A}^{\mathrm{act}}\triangleq\diag{(\alpha_1,\cdots,\alpha_{N_{\mathrm{act}}})}$ and $\mathbf{\Phi}^{\mathrm{act}}\triangleq
\diag{(e^{\jmath\phi^{\mathrm{act}}_1},\cdots,e^{\jmath\phi^{\mathrm{act}}_{N_{\mathrm{act}}}})}$
denote respectively its reflection amplification matrix and phase-shift matrix with $\alpha_n$ and $\phi^{\mathrm{act}}_n$ representing the amplification factor and phase shift of each active element $n\in\mathcal{N}_{\mathrm{act}}\triangleq\{1,\cdots,N_{\mathrm{act}}\}$. 
To ensure that each active reflecting element amplifies the signal, we impose a constraint on the amplification factor for each active element as $\alpha_n\geq \alpha_{\min},\forall n\in\mathcal{N}_{\mathrm{act}}$ with $\alpha_{\min}\geq 1$ \cite{9377648}. Moreover, the limited load of each active element \cite{8403249} leads to the following constraint on its maximum amplification factor: $\alpha_n\leq\alpha_{\max},\forall n\in\mathcal{N}_{\mathrm{act}}$ with $\alpha_{\max}>\alpha_{\min}$.
Let $W_{\mathrm{act}}$ represent the deployment cost of each active element, which in general is larger than that of each passive element (i.e., $W_{\mathrm{act}}>W_{\mathrm{pas}}$) due to its more sophisticated hardware (i.e., additional power amplifier and amplification control circuit \cite{9377648}) and higher static operation power (e.g., 6-20 mW for active element \cite{7920385} versus 5 mW for passive element \cite{8888223}). 
Moreover, we denote $W_0$ as the total deployment budget for the hybrid IRS such that $N_{\mathrm{act}}W_{\mathrm{act}}+N_{\mathrm{pas}}W_{\mathrm{pas}}\leq W_0$.

We assume the practical Rician fading channel model for all involved links. As such, the baseband equivalent channel from the BS to the active IRS sub-surface, denoted by $\mathbf{h}_{\mathrm{BI}}^{\mathrm{act}}\in\mathbb{C}^{N_{\mathrm{act}}\times 1}$, can be modeled as 
\begin{equation}
    \mathbf{h}_{\mathrm{BI}}^{\mathrm{act}}=\sqrt{\frac{K_{1}}{K_{1}+1}}\bar{\mathbf{h}}_{\mathrm{BI}}^{\mathrm{act}}+\sqrt{\frac{1}{K_{1}+1}}\tilde{\mathbf{h}}_{\mathrm{BI}}^{\mathrm{act}},
\end{equation}
where $K_{1}$ is the Rician fading factor of the BS$\to$IRS link, $\bar{\mathbf{h}}_{\mathrm{BI}}^{\mathrm{act}}\in\mathbb{C}^{N_{\mathrm{act}} \times 1}$ is the LoS component, and $\tilde{\mathbf{h}}_{\mathrm{BI}}^{\mathrm{act}}\in\mathbb{C}^{N_{\mathrm{act}} \times 1}$ is the non-LoS (NLoS) component.
Specifically, the LoS component can be modeled as $\bar{\mathbf{h}}_{\mathrm{BI}}^{\mathrm{act}}={h}_{\mathrm{BI}}^{\mathrm{act}}\boldsymbol{a}_{\mathrm{r}}\left(\theta_{\mathrm{BI}}^{\mathrm{r}},\vartheta_{\mathrm{BI}}^{\mathrm{r}}, N_{\mathrm{act}}\right)$, where ${h}_{\mathrm{BI}}^{\mathrm{act}}\triangleq\sqrt{\beta}/D_{\mathrm{BI}}e^{-\jmath\frac{2\pi}{\lambda}D_{\mathrm{BI}}}$ denotes the complex channel gain with $\lambda$ and $\beta$ representing respectively the carrier wavelength and the reference channel gain at a distance of 1 meter (m); and $\theta_{\mathrm{BI}}^{\mathrm{r}}\left(\vartheta_{\mathrm{BI}}^{\mathrm{r}}\right) \in[0, \pi]$ represents the azimuth (elevation) angle-of-arrival (AoA) at the IRS.
Moreover, $\boldsymbol{a}_{\mathrm{r}}\left(\theta_{\mathrm{BI}}^{\mathrm{r}}, \vartheta_{\mathrm{BI}}^{\mathrm{r}}, N_{\mathrm{act}}\right)$ denotes the receive response vector, which is given by $\boldsymbol{a}_{\mathrm{r}}\left(\theta_{\mathrm{BI}}^{\mathrm{r}}, \vartheta_{\mathrm{BI}}^{\mathrm{r}}, N_{\mathrm{act}}\right)=\boldsymbol{u}\left(\frac{2 d_{\mathrm{I}}}{\lambda} \cos \left(\theta_{\mathrm{BI}}^{\mathrm{r}}\right) \sin \left(\vartheta_{\mathrm{BI}}^{\mathrm{r}}\right), N_{\mathrm{x}}\right) \otimes$ $\boldsymbol{u}\left(\frac{2 d_{\mathrm{I}}}{\lambda} \sin \left(\theta_{\mathrm{BI}}^{\mathrm{r}}\right) \sin \left(\vartheta_{\mathrm{BI}}^{\mathrm{r}}\right), N_{\mathrm{y}}\right)$, with $\boldsymbol{u}(\varsigma, M)=[1, e^{-\jmath \pi \varsigma},$ $\ldots, $ $e^{-(M-1) \jmath \pi \varsigma}]^{T}$ representing the steering vector function, $d_{\mathrm{I}}$, $N_{\mathrm{x}}$, and $N_{\mathrm{y}}$ denoting the distance between adjacent reflecting elements and the number of active reflecting elements along the $x$- and $y$-axis on IRS surface, respectively.
Besides, the NLoS component, $\tilde{\mathbf{h}}_{\mathrm{BI}}^{\mathrm{act}}$, follows the complex Gaussian distribution with each entry $[\tilde{\mathbf{h}}_{\mathrm{BI}}^{\mathrm{act}}]_{n}\sim\frac{\sqrt{\beta}}{D_{\mathrm{BI}}}\mathcal{C} \mathcal{N}(0,1), \forall n\in\mathcal{N}_{\mathrm{act}}$.
Similarly, the baseband equivalent channel from the active IRS sub-surface to the worst-case user, $\mathbf{h}_{\mathrm{IU}}^{\mathrm{act}}\in\mathbb{C}^{N_{\mathrm{act}}\times 1}$, can be modeled as
\begin{equation}
    \mathbf{h}_{\mathrm{IU}}^{\mathrm{act}}=\sqrt{\frac{K_{2}}{K_{2}+1}}\bar{\mathbf{h}}_{\mathrm{IU}}^{\mathrm{act}}+\sqrt{\frac{1}{K_{2}+1}}\tilde{\mathbf{h}}_{\mathrm{IU}}^{\mathrm{act}},
\end{equation}
where $K_{2}$ denotes the Rician fading factor of the IRS$\to$user link, $\bar{\mathbf{h}}_{\mathrm{IU}}^{\mathrm{act}}\in\mathbb{C}^{N_{\mathrm{act}} \times 1}$ denotes the LoS component, and $\tilde{\mathbf{h}}_{\mathrm{IU}}^{\mathrm{act}}\in\mathbb{C}^{N_{\mathrm{act}} \times 1}$ denotes the NLoS component.
The baseband equivalent channel from the BS to IRS passive sub-surface, $\mathbf{h}_{\mathrm{BI}}^{\mathrm{pas}}$, and that from the IRS passive sub-surface to the worst-case user, $\mathbf{h}_{\mathrm{IU}}^{\mathrm{pas}}$, can be defined similarly as $\mathbf{h}_{\mathrm{BI}}^{\mathrm{act}}$ and $\mathbf{h}_{\mathrm{IU}}^{\mathrm{act}}$, with the details omitted for brevity. 

Based on the above, the received signal at the worst-case user aided by the hybrid IRS is given by
\begin{equation}
    y = (\mathbf{h}_{\mathrm{IU}}^{\mathrm{act}})^H\mathbf{\Psi}^{\mathrm{act}}\mathbf{h}_{\mathrm{BI}}^{\mathrm{act}}s+(\mathbf{h}_{\mathrm{IU}}^{\mathrm{pas}})^H\mathbf{\Psi}^{\mathrm{pas}}\mathbf{h}_{\mathrm{BI}}^{\mathrm{pas}}s+(\mathbf{h}_{\mathrm{IU}}^{\mathrm{act}})^H\mathbf{\Psi}^{\mathrm{act}}\mathbf{z}_{\mathrm{I}}+z_0,
\end{equation}
where $s\in\mathbb{C}$ denotes the information symbol with $\mathbb{E}\left\{\left|s\right|^{2}\right\}=P_{\mathrm{B}}$, and $P_{\mathrm{B}}$ denotes the transmit power of the BS. 
Moreover, $\mathbf{z}_{\rm{I}}\in\mathbb{C}^{N_{\mathrm{act}} \times 1}$ is the thermal noise introduced by the active elements due to signal amplification, which is assumed to follow the independent CSCG distribution, i.e., $\mathbf{z}_{\rm{I}} \sim \mathcal{C} \mathcal{N}\left(\mathbf{0}_{N_{\mathrm{act}}}, \sigma_{\mathrm{I}}^{2} \mathbf{I}_{N_{\mathrm{act}}}\right)$ with $\sigma_{\mathrm{I}}^{2}$ denoting the amplification noise power, and $z_0\sim\mathcal{C} \mathcal{N}\left(0, \sigma_0^{2}\right)$ is the additive white Gaussian noise (AWGN) at the user. 
Note that at the user's receiver, the desired signal is superposed by the reflected signals over both active and passive elements, while the noise is due to the amplification noise at active elements and the thermal noise at the receiver.

As such, the receiver SNR at the worst-case user is given by
\begin{equation}
    \gamma = \frac{P_{\mathrm{B}}|(\mathbf{h}_{\mathrm{IU}}^{\mathrm{act}})^H\mathbf{\Psi}^{\mathrm{act}}\mathbf{h}_{\mathrm{BI}}^{\mathrm{act}}+(\mathbf{h}_{\mathrm{IU}}^{\mathrm{pas}})^H\mathbf{\Psi}^{\mathrm{pas}}\mathbf{h}_{\mathrm{BI}}^{\mathrm{pas}}|^2}{\sigma_{\mathrm{I}}^2\|(\mathbf{h}_{\mathrm{IU}}^{\mathrm{act}})^H\mathbf{\Psi}^{\mathrm{act}}\|^2+\sigma_0^2}.\label{snr_hybrid}
\end{equation}
Thus, the ergodic capacity achieved by the worst-case user in the hybrid IRS aided wireless communication system is given by
\begin{equation}
    C=\mathbb{E}\left\{\log _{2}(1+\gamma)\right\},\label{ergo_capa}
\end{equation}
where the expectation is taken over the random NLoS components in all channels involved.
\section{Problem Formulation and Ergodic Capacity Analysis}\label{sec_prob_approx}
We aim to maximize the ergodic capacity of the worst-case user subject to a total deployment budget of $W_0$ by optimizing the numbers of active and passive elements,
$N_{\mathrm{act}}$ and $N_{\mathrm{pas}}$, the IRS phase shifts, \{$\mathbf{\Phi}^{\mathrm{act}}$, $\mathbf{\Psi}^{\mathrm{pas}}$\}, and the active-element amplification matrix, $\mathbf{A}^{\mathrm{act}}$. This problem can be formulated as follows.
\begin{align}
    &\hspace{-0.2cm}\mathrm{(P1)}~~~~~\max_{\mathbf{\Phi}^{\mathrm{act}},\mathbf{\Psi}^{\mathrm{pas}},\mathbf{A}^{\mathrm{act}},N_{\mathrm{act}},N_{\mathrm{pas}}}
    \quad~~\mathbb{E}\left\{\log _{2}\left(1\!+\!\frac{P_{\mathrm{B}}|(\mathbf{h}_{\mathrm{IU}}^{\mathrm{act}})^H\mathbf{\Psi}^{\mathrm{act}}\mathbf{h}_{\mathrm{BI}}^{\mathrm{act}}\!+\!(\mathbf{h}_{\mathrm{IU}}^{\mathrm{pas}})^H\mathbf{\Psi}^{\mathrm{pas}}\mathbf{h}_{\mathrm{BI}}^{\mathrm{pas}}|^2}{\sigma_{\mathrm{I}}^2\|(\mathbf{h}_{\mathrm{IU}}^{\mathrm{act}})^H\mathbf{\Psi}^{\mathrm{act}}\|^2\!+\!\sigma_0^2}\right)\right\} \label{obj_func_orig}\\
    &\qquad\qquad~~~~\!~\quad~\text{s.t.} \qquad\qquad\qquad~~0<\phi_n^{\mathrm{act}}\leq 2\pi,\forall n\in\mathcal{N}_{\mathrm{act}},\label{cons_phase_act}\\
    &\qquad\qquad~~~~~~~~\quad~\qquad\qquad\qquad~~0<\phi_n^{\mathrm{pas}}\leq 2\pi,\forall n\in\mathcal{N}_{\mathrm{pas}},\\
    &\qquad\qquad~~~~~~~~\quad~\qquad\qquad\qquad~~\alpha_{\min}\leq \alpha_n\leq \alpha_{\max},\forall n\in\mathcal{N}_{\mathrm{act}},\label{cons_alpha}\\
    &\qquad\qquad~~~~~~~~\quad~\qquad\qquad\qquad~~ \mathbb{E}\left\{P_{\mathrm{B}}\|\mathbf{\Psi}^{\mathrm{act}}\mathbf{h}_{\mathrm{BI}}^{\mathrm{act}}\|^2+\sigma_{\mathrm{I}}^2\|\mathbf{\Psi}^{\mathrm{act}}\|^2\right\}\leq P_{\mathrm{I}},\label{cstr_power_HI}\\
    &\qquad\qquad~~~~~~~~\quad~\qquad\qquad\qquad~~N_{\mathrm{act}}W_{\mathrm{act}}+N_{\mathrm{pas}}W_{\mathrm{pas}}\leq W_0,\label{cons_C}\\
    &\qquad\qquad~~~~~~~~\quad~\qquad\qquad\qquad~~N_{\mathrm{act}}\in\mathbb{N},N_{\mathrm{pas}}\in\mathbb{N}\label{cons_C_AnP},
\end{align}
where the constraint \eqref{cstr_power_HI} indicates that the average amplification power of all active elements over the Rician fading channels is constrained by the average amplification power budget, $P_{\mathrm{I}}$.
Note that different from the existing works on IRS-aided wireless communications that generally require the instantaneous CSI knowledge (e.g., \cite{9133142}), we consider a practical scenario where only the statistical CSI, i.e., $\{\bar{\mathbf{h}}_{\mathrm{IU}}^{\mathrm{act}},\bar{\mathbf{h}}_{\mathrm{BI}}^{\mathrm{act}},$ $\bar{\mathbf{h}}_{\mathrm{IU}}^{\mathrm{pas}},$ $\bar{\mathbf{h}}_{\mathrm{BI}}^{\mathrm{pas}},$ $K_1,$ $K_2\}$ is known \textit{a priori}, which suffices for the design of active/passive elements allocation for maximizing the ergodic capacity (in the worst case).
This approach also reduces the real-time channel estimation overhead and avoids frequent IRS reflection adjustment for each user. 

For problem (P1), note that it includes the conventional IRS architectures with all-passive and all-active elements as
two special cases. Specifically, when $N_{\mathrm{act}}=0$, the hybrid IRS reduces to the conventional passive IRS and we have $\mathbf{\Psi}^{\mathrm{act}}=\mathbf{0}_{N_{\mathrm{act}}\times N_{\mathrm{act}}}$. On the other hand, when $N_{\mathrm{pas}}=0$, it reduces to the conventional active IRS and thus $\mathbf{\Psi}^{\mathrm{pas}}=\mathbf{0}_{N_{\mathrm{pas}}\times N_{\mathrm{pas}}}$.
However, problem (P1) is challenging to solve even in the above two special cases, since the phase shifts are coupled with the amplification factors in the function of the ergodic capacity (see \eqref{snr_hybrid} and \eqref{ergo_capa}). Moreover, the numbers of active and passive elements are discrete, rendering the design objective a complicated function and the constraints in \eqref{cstr_power_HI}--\eqref{cons_C_AnP} non-convex.

To address the above issues, we first analyze the ergodic capacity to approximate the objective function of problem (P1) in a simpler form.

{\color{black}\begin{lemma}\label{lem_C_approx}
\textbf{\emph{(Ergodic Capacity Approximation)}} \emph{The ergodic capacity in \eqref{obj_func_orig} can be approximated by}
\begin{align}
    C\approx \tilde{C}\triangleq\log _{2}\left(1+\frac{x_{\mathrm{L}}+x_{\mathrm{NL,act}}+x_{\mathrm{NL,pas}}}{z_{\mathrm{L,act}}+z_{\mathrm{NL,act}}+\sigma_0^2}\right),\label{sig_approx}
\end{align}
\emph{where}
\begin{align}
    &x_{\mathrm{L}} \triangleq \frac{K_1K_2P_{\mathrm{B}}}{(K_1\!+\!1)(K_2\!+\!1)}\left|(\bar{\mathbf{h}}_{\mathrm{IU}}^{\mathrm{act}})^H\mathbf{\Psi}^{\mathrm{act}}\bar{\mathbf{h}}_{\mathrm{BI}}^{\mathrm{act}}\!+\!(\bar{\mathbf{h}}_{\mathrm{IU}}^{\mathrm{pas}})^H\mathbf{\Psi}^{\mathrm{pas}}\bar{\mathbf{h}}_{\mathrm{BI}}^{\mathrm{pas}}\right|^2,\label{x_1}\\
    &x_{\mathrm{NL,act}}\triangleq\frac{P_{\mathrm{B}}}{(K_1\!+\!1)(K_2\!+\!1)}\Big(\frac{K_1\beta}{d_{\mathrm{IU}}^2}\|\mathbf{\Psi}^{\mathrm{act}}\bar{\mathbf{h}}_{\mathrm{BI}}^{\mathrm{act}}\|^2\!+\!\frac{K_2\beta}{D_{\mathrm{BI}}^2}\|(\bar{\mathbf{h}}_{\mathrm{IU}}^{\mathrm{act}})^H\mathbf{\Psi}^{\mathrm{act}}\|^2\!+\!\frac{\beta^2}{D_{\mathrm{BI}}^2d_{\mathrm{IU}}^2}\sum_{n=1}^{N_{\mathrm{act}}}\alpha^2_{n}\Big),\label{x_2}\\
    &x_{\mathrm{NL,pas}}\triangleq\frac{P_{\mathrm{B}}}{(K_1\!+\!1)(K_2\!+\!1)}\Big(\frac{K_1\beta}{d_{\mathrm{IU}}^2}\|\mathbf{\Psi}^{\mathrm{pas}}\bar{\mathbf{h}}_{\mathrm{BI}}^{\mathrm{pas}}\|^2\!+\!\frac{K_2\beta}{D_{\mathrm{BI}}^2}\|(\bar{\mathbf{h}}_{\mathrm{IU}}^{\mathrm{pas}})^H\mathbf{\Psi}^{\mathrm{pas}}\|^2\!+\!\frac{\beta^2N_{\mathrm{pas}}}{D_{\mathrm{BI}}^2d_{\mathrm{IU}}^2}\Big),\label{x_3}\\
    &{z_{\mathrm{L,act}}}\triangleq{\frac{K_2\sigma_{\rm I}^2}{K_2\!+\!1}\|(\bar{\mathbf{h}}_{\mathrm{IU}}^{\mathrm{act}})^H\mathbf{\Psi}^{\mathrm{act}}\|^2},\quad
    {z_{\mathrm{NL,act}}}\triangleq{\frac{\sigma_{\rm I}^2}{K_2\!+\!1}\|(\tilde{\mathbf{h}}_{\mathrm{IU}}^{\mathrm{act}})^H\mathbf{\Psi}^{\mathrm{act}}\|^2}.
 \end{align}
\end{lemma}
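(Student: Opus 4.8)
The plan is to trade the intractable $\mathbb{E}\{\log_2(1+\gamma)\}$ for a deterministic surrogate by (i) moving the expectation inside the logarithm through a ratio-of-means approximation, and (ii) evaluating the resulting numerator and denominator expectations in closed form via the Rician LoS/NLoS split. Writing $\gamma=X/Y$ with numerator $X=P_{\mathrm{B}}\left|A_{\mathrm{act}}+A_{\mathrm{pas}}\right|^2$, where $A_{\mathrm{act}}\triangleq(\mathbf{h}_{\mathrm{IU}}^{\mathrm{act}})^H\mathbf{\Psi}^{\mathrm{act}}\mathbf{h}_{\mathrm{BI}}^{\mathrm{act}}$ and $A_{\mathrm{pas}}\triangleq(\mathbf{h}_{\mathrm{IU}}^{\mathrm{pas}})^H\mathbf{\Psi}^{\mathrm{pas}}\mathbf{h}_{\mathrm{BI}}^{\mathrm{pas}}$, and denominator $Y=\sigma_{\mathrm{I}}^2\|(\mathbf{h}_{\mathrm{IU}}^{\mathrm{act}})^H\mathbf{\Psi}^{\mathrm{act}}\|^2+\sigma_0^2$, I would first invoke the standard approximation
\[
\mathbb{E}\{\log_2(1+X/Y)\}\approx\log_2\!\left(1+\frac{\mathbb{E}\{X\}}{\mathbb{E}\{Y\}}\right),
\]
which is accurate whenever $X$ and $Y$ concentrate about their means, a channel-hardening effect that holds here since both are sums over the (many) reflecting elements. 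A second-order Taylor expansion of $\log_2(1+\cdot)$ about $\mathbb{E}\{X\}/\mathbb{E}\{Y\}$ supplies the heuristic and bounds the discarded curvature term.

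The bulk of the work is the numerator expectation $\mathbb{E}\{X\}$. Substituting the Rician decompositions $\mathbf{h}=\sqrt{K/(K+1)}\,\bar{\mathbf{h}}+\sqrt{1/(K+1)}\,\tilde{\mathbf{h}}$ into each of $A_{\mathrm{act}},A_{\mathrm{pas}}$ expands $|A_{\mathrm{act}}+A_{\mathrm{pas}}|^2$ into LoS--LoS, mixed LoS--NLoS, and NLoS--NLoS contributions. Because every NLoS vector is zero-mean and independent across links and across the active/passive sub-surfaces, all cross terms pairing a deterministic factor with a single NLoS factor vanish in expectation, and the active--passive random cross term factors into zero. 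What survives is exactly the coherent recombination of the two deterministic LoS--LoS terms, $\bigl|\bar A_{\mathrm{act}}+\bar A_{\mathrm{pas}}\bigr|^2$, yielding $x_{\mathrm{L}}$, plus the per-link variance contributions yielding $x_{\mathrm{NL,act}}$ and $x_{\mathrm{NL,pas}}$. The three pieces inside each $x_{\mathrm{NL}}$ follow from two elementary identities: $\mathbb{E}\{|\mathbf{v}^H\tilde{\mathbf{h}}|^2\}=\sigma^2\|\mathbf{v}\|^2$ for the mixed terms (with $\sigma^2=\beta/D_{\mathrm{BI}}^2$ or $\beta/d_{\mathrm{IU}}^2$ depending on which link is NLoS), and $\mathbb{E}\{|(\tilde{\mathbf{h}}_{\mathrm{IU}})^H\mathbf{\Psi}\tilde{\mathbf{h}}_{\mathrm{BI}}|^2\}=\tfrac{\beta^2}{D_{\mathrm{BI}}^2 d_{\mathrm{IU}}^2}\sum_n|[\mathbf{\Psi}]_{n,n}|^2$ for the NLoS--NLoS term; the latter gives $\sum_n\alpha_n^2$ for the active surface (since $|[\mathbf{\Psi}^{\mathrm{act}}]_{n,n}|^2=\alpha_n^2$) and $N_{\mathrm{pas}}$ for the passive surface (unit-modulus reflection).

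For the denominator, $\mathbb{E}\{Y\}=\sigma_{\mathrm{I}}^2\,\mathbb{E}\{\|(\mathbf{h}_{\mathrm{IU}}^{\mathrm{act}})^H\mathbf{\Psi}^{\mathrm{act}}\|^2\}+\sigma_0^2$, the same LoS/NLoS split (with the LoS--NLoS cross term again killed by zero mean) separates the deterministic part $z_{\mathrm{L,act}}$ from the amplification-noise NLoS part $z_{\mathrm{NL,act}}$, whose expectation is $\tfrac{\sigma_{\mathrm{I}}^2\beta}{(K_2+1)d_{\mathrm{IU}}^2}\sum_n\alpha_n^2$; these three expectations then assemble into \eqref{sig_approx}. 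I expect the genuine obstacle to be \emph{not} the term-by-term expansions, which are mechanical, but the first step: arguing that the ratio-of-means substitution is legitimate and quantifying its error, since $X$ and $Y$ are correlated (both contain $\mathbf{h}_{\mathrm{IU}}^{\mathrm{act}}$) and $Y$ retains its own NLoS fluctuation. I would address this by verifying numerically that the approximation is tight across the operating regime and, if a bound is desired, by controlling the neglected second-order term through the variances of $X$ and $Y$ computed above.
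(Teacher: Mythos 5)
Your proposal is correct and follows essentially the same route as the paper's proof: the paper likewise moves the expectation inside the logarithm via the ratio-of-means approximation (invoking Lemma 1 of \cite{6816003}, where you argue it by concentration and a Taylor heuristic), expands the numerator into the LoS/NLoS cross terms whose zero-mean mixed products vanish in expectation, and splits the amplification-noise term into $z_{\mathrm{L,act}}+z_{\mathrm{NL,act}}$ exactly as you do. The only (cosmetic) difference is that you explicitly evaluate $\mathbb{E}\bigl\{\|(\tilde{\mathbf{h}}_{\mathrm{IU}}^{\mathrm{act}})^H\mathbf{\Psi}^{\mathrm{act}}\|^2\bigr\}=\tfrac{\beta}{d_{\mathrm{IU}}^2}\sum_{n}\alpha_n^2$ in closed form, a step the paper leaves implicit until it is used later in \eqref{n_act}.
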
}
\begin{proof}
See Appendix \ref{proof_lem1}.
\end{proof}
The accuracy of the approximation in Lemma \ref{lem_C_approx} will be numerically verified in Section \ref{sec_sim}.
Similarly, the average amplification power consumption can be expressed as
\begin{align}
    \!\!\!\!\!\mathbb{E}\left\{P_{\mathrm{B}}\|\mathbf{\Psi}^{\mathrm{act}}\mathbf{h}_{\mathrm{BI}}^{\mathrm{act}}\|^2\!+\!\sigma_{\mathrm{I}}^2\|\mathbf{\Psi}^{\mathrm{act}}\|^2\right\}\!=
    \!\frac{P_{\mathrm{B}}}{K_2\!+\!1}\left(K_2\|\mathbf{\Psi}^{\mathrm{act}}\bar{\mathbf{h}}_{\mathrm{BI}}^{\mathrm{act}}\|^2\!+\!\mathbb{E}\left\{\|\mathbf{\Psi}^{\mathrm{act}}\tilde{\mathbf{h}}_{\mathrm{BI}}^{\mathrm{act}}\|^2\right\}\right)\!+\!\sum_{n=1}^{N_{\mathrm{act}}}\sigma_{\mathrm{I}}^2\alpha^2_{n}\label{power_decomp}.
\end{align}
As such, problem (P1) is reformatted as follows by using Lemma \ref{lem_C_approx} and substituting \eqref{power_decomp} into \eqref{cstr_power_HI},
\begin{align}
    &\mathrm{(P2)}\max_{\mathbf{\Phi}^{\mathrm{act}},\mathbf{\Psi}^{\mathrm{pas}},\mathbf{A}^{\mathrm{act}},N_{\mathrm{act}},N_{\mathrm{pas}}}
    \quad~~\log _{2}\left(1+\frac{x_{\mathrm{L}}+x_{\mathrm{NL,act}}+x_{\mathrm{NL,pas}}}{z_{\mathrm{L,act}}+z_{\mathrm{NL,act}}+\sigma_0^2}\right) \\
    &\qquad\qquad~~~~\!~\text{s.t.} \qquad\qquad~~\eqref{cons_phase_act}-\eqref{cons_alpha}, \eqref{cons_C},\eqref{cons_C_AnP}\nonumber,\\
    &\qquad~~~~~~~\qquad\qquad\qquad~~ \frac{P_{\mathrm{B}}}{K_2\!+\!1}\left(K_2\|\mathbf{\Psi}^{\mathrm{act}}\bar{\mathbf{h}}_{\mathrm{BI}}^{\mathrm{act}}\|^2\!+\!\mathbb{E}\left\{\|\mathbf{\Psi}^{\mathrm{act}}\tilde{\mathbf{h}}_{\mathrm{BI}}^{\mathrm{act}}\|^2\right\}\right)\!+\!\sum_{n=1}^{N_{\mathrm{act}}}\sigma_{\mathrm{I}}^2\alpha^2_{n}\leq\! P_{\mathrm{I}}.\label{cstr_power_HI_decomp}
\end{align}

\section{Optimal Solution to Problem (P2)}\label{sec_design}
In this section, we aim to optimally solve problem (P2) and gain useful insight into the optimal active/passive elements allocation at the hybrid IRS.
Specifically, we first obtain the optimal hybrid IRS beamforming and active/passive elements allocation based on the statistical CSI under the general Rician fading channel model. Then, we consider two special channel setups, i.e., the LoS and Rayleigh fading channel models, to gain useful insight into the optimal active/passive elements allocation.

\subsection{IRS Phase Shift Optimization}
Given any feasible active/passive elements allocation and active-element amplification factors, it can be easily shown that the approximated ergodic capacity in \eqref{sig_approx} is maximized when the LoS components of the IRS-associated channels, i.e., $(\bar{\mathbf{h}}_{\mathrm{IU}}^{\mathrm{act}})^H\mathbf{\Psi}^{\mathrm{act}}\bar{\mathbf{h}}_{\mathrm{BI}}^{\mathrm{act}}$ and $(\bar{\mathbf{h}}_{\mathrm{IU}}^{\mathrm{pas}})^H\mathbf{\Psi}^{\mathrm{pas}}\bar{\mathbf{h}}_{\mathrm{BI}}^{\mathrm{pas}}$, are phase-aligned. The optimal IRS phase shifts are thus given by
\begin{align}
    &\phi_{n}^{\mathrm{act}} = \arg([\bar{\mathbf{h}}_{\mathrm{IU}}^{\mathrm{act}}]_n)-\arg([\bar{\mathbf{h}}_{\mathrm{BI}}^{\mathrm{act}}]_n),\forall n\in\mathcal{N}_{\mathrm{act}},\label{opt_phase_1}\\
    &\phi_{n}^{\mathrm{pas}} = \arg([\bar{\mathbf{h}}_{\mathrm{IU}}^{\mathrm{pas}}]_n)-\arg([\bar{\mathbf{h}}_{\mathrm{BI}}^{\mathrm{pas}}]_n),\forall n\in \mathcal{N}_{\mathrm{pas}},\label{opt_phase_2}
\end{align}
where the optimal phase shifts of the active and passive sub-surfaces have the similar form.
Then, by substituting the optimal phase shifts in \eqref{opt_phase_1} and \eqref{opt_phase_2} into \eqref{sig_approx}, we have
\begin{align}
     &x_{\mathrm{L}}=\gamma_1\left(\sum_{n=1}^{N_{\mathrm{act}}}\alpha_n+N_{\mathrm{pas}}\right)^2P_{\mathrm{B}}\beta^2/D^2_{\mathrm{BI}}d^2_{\mathrm{IU}},\\
     &x_{\mathrm{NL,act}}+x_{\mathrm{NL,pas}}=\gamma_2\left(\sum_{n=1}^{N_{\mathrm{act}}}\alpha_n^2+N_{\mathrm{pas}}\right)P_{\mathrm{B}}\beta^2(K_1+K_2+1)/D^2_{\mathrm{BI}}d^2_{\mathrm{IU}},\label{x_NLoS}\\
     &z_{\mathrm{L,act}}+z_{\mathrm{NL,act}}=\sum_{n=1}^{N_{\mathrm{act}}}\alpha_n^2\sigma_{\mathrm{I}}^2\beta/d_{\mathrm{IU}}^{2},\label{n_act}
\end{align}
where 
\begin{equation}
    \gamma_1 \triangleq \frac{K_1K_2}{(K_1+1)(K_2+1)}, \gamma_2 \triangleq \frac{1}{(K_1+1)(K_2+1)}.\label{gamma_approx}
\end{equation}
Note that the considered hybrid IRS beamforming with the statistical CSI only is designed based on the LoS components while ignoring the NLoS components of the involved channels.

\subsection{Amplification Factor Optimization for Active Elements}

In this subsection, we optimize the amplification factor given any feasible active/passive elements allocation and optimal IRS phase shifts (see \eqref{opt_phase_1} and \eqref{opt_phase_2}). To this end, we first present an important lemma as follows.
\begin{lemma}\label{lem_min}
\textbf{\emph{(Minimum amplification power for active elements)}}
\emph{Given the total deployment budget $W_0$, the hybrid IRS should employ passive elements only (i.e., $N_{\mathrm{act}}=0$) when}
\begin{equation}
    P_{\mathrm{I}} < \alpha^2_{\min}\left(P_{\mathrm{B}}\beta/D^2_{\mathrm{BI}}+\sigma^2_{\mathrm{I}}\right),\label{lem_pas}
\end{equation}
\emph{and employ active elements (i.e., $N_{\mathrm{act}}>0$) otherwise.}
\end{lemma}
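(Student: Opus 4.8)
The plan is to reduce the amplification-power constraint \eqref{cstr_power_HI_decomp} to a transparent per-element form, read off the minimum power a single active element must consume, and then settle feasibility and optimality on the two sides of the threshold in \eqref{lem_pas}. The crucial first step is to evaluate the two norm terms in \eqref{cstr_power_HI_decomp} under the Rician statistics. Since $\mathbf{\Phi}^{\mathrm{act}}$ is unit-modulus, every entry of the LoS vector $\bar{\mathbf{h}}_{\mathrm{BI}}^{\mathrm{act}}$ has magnitude $\sqrt{\beta}/D_{\mathrm{BI}}$, and each NLoS entry satisfies $\mathbb{E}\{|[\tilde{\mathbf{h}}_{\mathrm{BI}}^{\mathrm{act}}]_n|^2\}=\beta/D_{\mathrm{BI}}^2$, both $\|\mathbf{\Psi}^{\mathrm{act}}\bar{\mathbf{h}}_{\mathrm{BI}}^{\mathrm{act}}\|^2$ and $\mathbb{E}\{\|\mathbf{\Psi}^{\mathrm{act}}\tilde{\mathbf{h}}_{\mathrm{BI}}^{\mathrm{act}}\|^2\}$ collapse to $(\beta/D_{\mathrm{BI}}^2)\sum_{n}\alpha_n^2$. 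The $(K_2+1)$ factors then cancel and the constraint reduces to the clean bound
\[
\sum_{n=1}^{N_{\mathrm{act}}}\alpha_n^2\left(P_{\mathrm{B}}\beta/D_{\mathrm{BI}}^2+\sigma_{\mathrm{I}}^2\right)\leq P_{\mathrm{I}}.
\]

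For the forward direction I would invoke the amplitude floor $\alpha_n\geq\alpha_{\min}$ from \eqref{cons_alpha}: each summand is at least $\alpha_{\min}^2(P_{\mathrm{B}}\beta/D_{\mathrm{BI}}^2+\sigma_{\mathrm{I}}^2)$, so the left side is lower bounded by $N_{\mathrm{act}}\alpha_{\min}^2(P_{\mathrm{B}}\beta/D_{\mathrm{BI}}^2+\sigma_{\mathrm{I}}^2)$. Power feasibility therefore forces $N_{\mathrm{act}}\leq P_{\mathrm{I}}/[\alpha_{\min}^2(P_{\mathrm{B}}\beta/D_{\mathrm{BI}}^2+\sigma_{\mathrm{I}}^2)]$, and whenever \eqref{lem_pas} holds this upper bound is strictly below one. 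Since $N_{\mathrm{act}}$ is a nonnegative integer, it must then equal zero, i.e., only passive elements can be deployed. Conversely, when \eqref{lem_pas} fails, the single-active configuration $N_{\mathrm{act}}=1$ with $\alpha_1=\alpha_{\min}$ already satisfies the reduced bound, so at least one active element is admissible; this is what pins down the lower end of the subsequent one-dimensional search over $N_{\mathrm{act}}$.

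For the optimality half of the ``otherwise'' claim I would argue that, once an active element is affordable, it is strictly beneficial to deploy it. Using the phase-aligned SNR assembled from \eqref{x_1}--\eqref{n_act}, I would compare the best all-passive allocation against one that reallocates part of the budget to a minimum-gain active element, and show that the signal numerator, which scales with the coherent sum $(\sum_n\alpha_n+N_{\mathrm{pas}})^2$ where each $\alpha_n\geq\alpha_{\min}\geq1$ contributes at least as much as a passive element plus the additional amplified-power terms, strictly dominates the accompanying growth of the noise denominator $\sum_n\alpha_n^2\sigma_{\mathrm{I}}^2\beta/d_{\mathrm{IU}}^2+\sigma_0^2$. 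I expect this optimality step to be the main obstacle: unlike the feasibility bound it is not pure algebra, because trading deployment budget from passive to the costlier active elements (recall $W_{\mathrm{act}}>W_{\mathrm{pas}}$) simultaneously perturbs the beamforming gain, the amplification noise, and the amplification-power budget, so the monotonicity/exchange argument has to be carried out jointly rather than term by term.
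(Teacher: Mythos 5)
Your reduction of the amplification-power constraint and your forward direction coincide with the paper's own proof: the paper's one-line argument is precisely that under \eqref{lem_pas} the maximum amplification factor supportable by $P_{\mathrm{I}}$ falls below $\alpha_{\min}$, which is equivalent to your observation that feasibility forces $N_{\mathrm{act}}\leq P_{\mathrm{I}}/[\alpha_{\min}^2(P_{\mathrm{B}}\beta/D_{\mathrm{BI}}^2+\sigma_{\mathrm{I}}^2)]<1$ and hence $N_{\mathrm{act}}=0$. Your evaluation of the two norm terms (each collapsing to $(\beta/D_{\mathrm{BI}}^2)\sum_n\alpha_n^2$, with the $(K+1)$ factors cancelling) is exactly how the paper arrives at the constraint \eqref{cstr_power_HI_decomp_2} in (P3), so up to this point you have reproduced the intended proof with more detail than the paper provides.

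Where you go beyond the paper is the third paragraph, and there you should stop rather than push through: the strict-optimality claim you are trying to establish---that once an active element is affordable it is strictly beneficial to deploy one---is \emph{false} for general Rician factors, and the obstacle you sense is real, not a technicality. The paper's own Lemma \ref{lem_opt_ea_NLoS} exhibits a regime (Rayleigh fading, $K_1=K_2=0$, with the favorable power condition \eqref{cons_pb} satisfied, i.e., \eqref{lem_pas} failing) in which the optimum is $N^*_{\mathrm{act}}=0$; likewise Theorem \ref{lem_thres}, case 3, shows that under LoS channels with $W_0>W_{\mathrm{H-P}}$ the fully-passive configuration is capacity-optimal regardless of \eqref{lem_pas}. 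The reason your exchange argument cannot close is that the coherent-gain advantage of an active element over a passive one exists only through the LoS terms (weighted by $\gamma_1$), while the amplification noise $\sum_n\alpha_n^2\sigma_{\mathrm{I}}^2\beta/d_{\mathrm{IU}}^2$ is incurred unconditionally; for small $K_1,K_2$ or large $W_0$ the trade goes the other way. The lemma's ``otherwise'' clause must therefore be read as a feasibility statement (active elements \emph{may} be employed, delimiting the search range for $N_{\mathrm{act}}$), and for that reading your witness $N_{\mathrm{act}}=1$, $\alpha_1=\alpha_{\min}$ already completes the proof---which is all the paper itself establishes.
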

\begin{proof}
{It can be shown that when \eqref{lem_pas} holds, the maximum amplification factor that $P_{\mathrm{I}}$ can support is smaller than its allowable minimum value, i.e., $\alpha_{n}<\alpha_{\min},\forall n\in \mathcal{N}_{\mathrm{act}}$, thus making it infeasible to satisfy $\alpha_{n}\geq\alpha_{\min}$.}
\end{proof}

Next, we consider the non-trivial case with $P_{\mathrm{I}} \geq \alpha^2_{\min}\left(P_{\mathrm{B}}\beta/D^2_{\mathrm{BI}}+\sigma^2_{\mathrm{I}}\right)$ and hence $N_{\mathrm{act}}>0$. Given the optimal phase shift design in \eqref{opt_phase_1} and \eqref{opt_phase_2}, problem (P2) reduces to the following problem for optimizing the amplification factors of the $N_{\mathrm{act}}$ active elements.
\begin{align}
    &\mathrm{(P3)}~~~~~\max_{\mathbf{A}^{\mathrm{act}}}
    \quad~~\log _{2}\left(1+\frac{x_{\mathrm{L}}+x_{\mathrm{NL,act}}+x_{\mathrm{NL,pas}}}{z_{\mathrm{L,act}}+z_{\mathrm{NL,act}}+\sigma_0^2}\right) \\
    &\qquad\!~~~~\quad~\text{s.t.} \qquad~~\eqref{cons_alpha}\nonumber,\\
    &\qquad~\qquad\qquad\qquad\sum_{n=1}^{N_{\mathrm{act}}}\alpha^2_{n}\left(P_{\mathrm{B}}\beta/D^2_{\mathrm{BI}}+\sigma^2_{\mathrm{I}}\right)\leq P_{\mathrm{I}}.\label{cstr_power_HI_decomp_2}
\end{align}
The constraints in \eqref{cons_alpha} and \eqref{cstr_power_HI_decomp_2} indicate that the amplification factors are bounded by both the hardware limitation and the power budget for all active elements. 
To simplify the analysis, we have the following lemma.
\begin{lemma}\label{lem_P_I_range}
\textbf{\emph{(Favorable amplification power condition)}}
\emph{For problem (P3), the constraint in \eqref{cons_alpha} is always satisfied if}
\begin{equation}
\alpha^2_{\min}\left(P_{\mathrm{B}}\beta/D^2_{\mathrm{BI}}+\sigma^2_{\mathrm{I}}\right)\leq P_{\mathrm{I}}\leq W_0\alpha^2_{\max}\left(P_{\mathrm{B}}\beta/D^2_{\mathrm{BI}}+\sigma^2_{\mathrm{I}}\right)/W_{\mathrm{act}}.\label{cons_pb}
\end{equation}
\end{lemma}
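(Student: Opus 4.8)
The plan is to collapse problem (P3) to a scalar problem in a single amplification level and then translate the hardware box \eqref{cons_alpha} into an admissible range for $N_{\mathrm{act}}$. First I would argue that, for any fixed feasible $N_{\mathrm{act}}$, an optimal solution of (P3) uses a common amplification factor $\alpha_n\equiv\alpha$: after the phase-shift substitution the dominant LoS signal term obeys $x_{\mathrm{L}}\propto(\sum_n\alpha_n)^2$, whereas the remaining signal term, the amplification-noise term, and the power expenditure in \eqref{cstr_power_HI_decomp_2} all depend on the $\alpha_n$ only through $\sum_n\alpha_n^2$. Hence, for a fixed value of $\sum_n\alpha_n^2$, a Cauchy--Schwarz argument ($\sum_n\alpha_n\le\sqrt{N_{\mathrm{act}}\sum_n\alpha_n^2}$, with equality iff the $\alpha_n$ are equal) shows that equalizing the amplification factors maximizes $x_{\mathrm{L}}$ without altering any other term, and therefore cannot decrease the objective. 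With $\alpha_n\equiv\alpha$, constraint \eqref{cstr_power_HI_decomp_2} reads $N_{\mathrm{act}}\alpha^2(P_{\mathrm{B}}\beta/D^2_{\mathrm{BI}}+\sigma^2_{\mathrm{I}})\le P_{\mathrm{I}}$, i.e. $\alpha\le\alpha_{\mathrm b}:=\sqrt{P_{\mathrm{I}}/\big(N_{\mathrm{act}}(P_{\mathrm{B}}\beta/D^2_{\mathrm{BI}}+\sigma^2_{\mathrm{I}})\big)}$.

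Next I would establish that the objective is maximized by exhausting the amplification-power budget, so that the operative amplification level is exactly $\alpha_{\mathrm b}$; the box \eqref{cons_alpha} then holds iff $\alpha_{\min}\le\alpha_{\mathrm b}\le\alpha_{\max}$. Squaring and rearranging converts this two-sided bound on $\alpha_{\mathrm b}$ into a two-sided bound on the element count,
\begin{equation}
\frac{P_{\mathrm{I}}}{\alpha^2_{\max}(P_{\mathrm{B}}\beta/D^2_{\mathrm{BI}}+\sigma^2_{\mathrm{I}})}\le N_{\mathrm{act}}\le\frac{P_{\mathrm{I}}}{\alpha^2_{\min}(P_{\mathrm{B}}\beta/D^2_{\mathrm{BI}}+\sigma^2_{\mathrm{I}})}.\nonumber
\end{equation}
On the other hand, the deployment budget \eqref{cons_C} together with $N_{\mathrm{pas}}\ge 0$ and the non-trivial case $N_{\mathrm{act}}\ge 1$ confines the admissible count to $1\le N_{\mathrm{act}}\le W_0/W_{\mathrm{act}}$.

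The final step is to intersect the two ranges: a deployable $N_{\mathrm{act}}$ that keeps $\alpha_{\mathrm b}$ inside $[\alpha_{\min},\alpha_{\max}]$ exists precisely when the lower end $P_{\mathrm{I}}/(\alpha^2_{\max}(\cdots))$ of the first interval does not exceed $W_0/W_{\mathrm{act}}$ and the upper end $P_{\mathrm{I}}/(\alpha^2_{\min}(\cdots))$ is at least $1$. Rearranging these two inequalities yields exactly \eqref{cons_pb}, so under \eqref{cons_pb} the hardware box can always be met without binding, which is the content of the lemma.

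I expect the main obstacle to be the second step, namely certifying that the amplification-power budget is the operative constraint so that $\alpha=\alpha_{\mathrm b}$. The reduced objective is a ratio of quadratics in $\alpha$ whose numerator is dominated by the $\mathcal{O}(\alpha^2)$ LoS beamforming gain but whose denominator also grows like $\alpha^2$ through the amplification noise; when $N_{\mathrm{pas}}>0$ this ratio need not be globally monotone and may peak at a finite interior value of $\alpha$. I would therefore need to verify that, in the favorable regime delimited by \eqref{cons_pb}, this interior stationary point lies at or above $\alpha_{\mathrm b}$ (so the budget binds first), or else restate the conclusion as the existence of a non-empty set of deployable $N_{\mathrm{act}}$ for which the optimal amplification lands in $[\alpha_{\min},\alpha_{\max}]$. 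A secondary, minor technicality is the integrality of $N_{\mathrm{act}}$: the continuous interval overlap must be shown to contain an integer, which is immediate when the overlap has length at least one and otherwise needs a short rounding argument.
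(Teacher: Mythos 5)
Your argument is correct and, at its core, the same as the paper's: the paper obtains \eqref{cons_pb} by combining the lower threshold of Lemma~\ref{lem_min} (below which even a single active element cannot reach $\alpha_{\min}$ under the power budget) with the upper threshold \eqref{lem_max} (above which all $W_0/W_{\mathrm{act}}$ deployable elements saturate at $\alpha_{\max}$), and concludes that in between an appropriate $N_{\mathrm{act}}\in[0,W_0/W_{\mathrm{act}}]$ makes the full-power per-element level land in $[\alpha_{\min},\alpha_{\max}]$ --- precisely your interval-intersection step, just stated less explicitly. The obstacle you flag (certifying that the power budget binds, given that the SNR need not be monotone in $\alpha$ when $N_{\mathrm{pas}}>0$) is immaterial for this lemma, which is a pure feasibility statement matching your own fallback restatement; the equal-$\alpha$ Cauchy--Schwarz reduction and the activeness of \eqref{cstr_power_HI_decomp_2} belong to the subsequent Lemma~\ref{lem1} (where the paper asserts rather than verifies activeness), and the integrality of $N_{\mathrm{act}}$ that you carefully note is glossed over by the paper itself, which treats $N_{\mathrm{act}}$ continuously here and relaxes to $\tilde N_{\mathrm{act}}$ later.
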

\begin{proof}
First, it can be shown that when 
\begin{equation}
    P_{\mathrm{I}}\geq W_0\alpha^2_{\max}\left(P_{\mathrm{B}}\beta/D^2_{\mathrm{BI}}+\sigma^2_{\mathrm{I}}\right)/W_{\mathrm{act}},\label{lem_max}
\end{equation}
the amplification factors of $N_{\mathrm{act}}$ active elements are equal to their maximum value, i.e., $\alpha_{n}=\alpha_{\max},\forall n\in\mathcal{N}_{\mathrm{act}}$, where the power amplification is constrained by the maximum load.
Then, combining \eqref{lem_max} and Lemma \ref{lem_min}, there always exists a feasible $\alpha_{n}\in[\alpha_{\min},\alpha_{\max}]$ such that the constraint in \eqref{cons_alpha} of problem (P3) is satisfied if \eqref{cons_pb} holds by choosing an appropriate $N_{\mathrm{act}}\in[0,W_0/W_{\mathrm{act}}]$.
\end{proof}

Lemma \ref{lem_P_I_range} gives the favorable amplification power condition for active elements that are able to operate in the amplification mode (i.e., $\alpha_{n}\geq\alpha_{\min}$) and make full use of the amplification power (i.e., $\alpha_{n}\leq\alpha_{\max}$).
In contrast, if the amplification power is too large, i.e., satisfying \eqref{lem_max}, the amplification factor of each active element is constrained by its hardware limitation, i.e., $\alpha_n=\alpha_{\max},\forall n\in\mathcal{N}_{\mathrm{act}}$. On the other hand, if the amplification power is insufficient, i.e., when \eqref{lem_pas} holds, the active elements cannot operate in the amplification mode, i.e., $\alpha_n< \alpha_{\min},\exists n\in \mathcal{N}_{\mathrm{act}}$, thus making problem (P3) infeasible. 
Therefore, in the sequel, we only consider the case where $P_{\mathrm{I}}$ satisfies the favorable amplification power condition given in \eqref{cons_pb} in order to draw useful insight into the optimal IRS active/passive elements allocation.
\begin{lemma}\label{lem1}
\textbf{\emph{(Optimal amplification factors)}}
\emph{Given \eqref{cons_pb} and $N_{\mathrm{act}}>0$, the optimal amplification factor for each active element in problem (P3) is given by}
\begin{align}
    \alpha_{n}=\alpha^* \triangleq \sqrt{\frac{P_{\mathrm{I}}/N_{\mathrm{act}}}{P_{\mathrm{B}}\beta/D^2_{\mathrm{BI}}+\sigma^2_{\mathrm{I}}}}, \forall n \in\mathcal{N}_{\mathrm{act}}.\label{opt_a_n}
\end{align}
\end{lemma}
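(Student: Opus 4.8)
The plan is to exploit the monotonicity of the logarithm together with the symmetric structure the SINR inherits once the optimal phase shifts have been substituted. Since $\log_2(1+\gamma)$ is strictly increasing in its argument, maximizing the objective of (P3) is equivalent to maximizing the SINR ratio in \eqref{sig_approx}. After inserting the simplified expressions \eqref{x_NLoS} and \eqref{n_act}, the amplification factors enter only through the two symmetric sums $S_1\triangleq\sum_{n=1}^{N_{\mathrm{act}}}\alpha_n$ and $S_2\triangleq\sum_{n=1}^{N_{\mathrm{act}}}\alpha_n^2$: the numerator equals $\gamma_1(S_1+N_{\mathrm{pas}})^2 P_{\mathrm{B}}\beta^2/(D_{\mathrm{BI}}^2 d_{\mathrm{IU}}^2)+\gamma_2(S_2+N_{\mathrm{pas}})(K_1+K_2+1)P_{\mathrm{B}}\beta^2/(D_{\mathrm{BI}}^2 d_{\mathrm{IU}}^2)$, whereas the denominator $S_2\sigma_{\mathrm{I}}^2\beta/d_{\mathrm{IU}}^2+\sigma_0^2$ depends on $S_2$ alone. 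This decoupling is the key observation, as it separates the problem into how to split a prescribed $S_2$ among the elements and how large to make $S_2$.

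First I would fix $S_2$ and prove that the equal allocation is optimal. With $S_2$ held constant the denominator is fixed and the numerator is strictly increasing in $S_1$ (because $S_1+N_{\mathrm{pas}}>0$), so the SINR is maximized by maximizing $S_1$. By the Cauchy--Schwarz inequality, $S_1^2=\big(\sum_n\alpha_n\big)^2\le N_{\mathrm{act}}\sum_n\alpha_n^2=N_{\mathrm{act}}S_2$, with equality if and only if all $\alpha_n$ coincide. Hence every optimizer of (P3) must employ a common amplification factor $\alpha_n\equiv\alpha$, and feasibility of this equal allocation inside $[\alpha_{\min},\alpha_{\max}]$ is precisely what the favorable power condition \eqref{cons_pb} guarantees through Lemma~\ref{lem_P_I_range}.

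Next I would fix the scale. Under $\alpha_n\equiv\alpha$ we have $S_1=N_{\mathrm{act}}\alpha$ and $S_2=N_{\mathrm{act}}\alpha^2$, so (P3) collapses to a single-variable problem over $\alpha\in[\alpha_{\min},\alpha_{\max}]$ subject to $N_{\mathrm{act}}\alpha^2\left(P_{\mathrm{B}}\beta/D_{\mathrm{BI}}^2+\sigma_{\mathrm{I}}^2\right)\le P_{\mathrm{I}}$. I would then show that the SINR is increasing in $\alpha$ over the operating interval, so that the power constraint \eqref{cstr_power_HI_decomp_2} is active at the optimum; setting it to equality gives $N_{\mathrm{act}}\alpha^2\left(P_{\mathrm{B}}\beta/D_{\mathrm{BI}}^2+\sigma_{\mathrm{I}}^2\right)=P_{\mathrm{I}}$, which is exactly the claimed $\alpha^*$ in \eqref{opt_a_n}. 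The upper bound in \eqref{cons_pb} is what ensures $\alpha^*\le\alpha_{\max}$, so that the hardware cap in \eqref{cons_alpha} is not the binding constraint.

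I expect this scale step to be the main obstacle. Raising the common $\alpha$ simultaneously amplifies the coherent signal power in the numerator (through $(N_{\mathrm{act}}\alpha+N_{\mathrm{pas}})^2$) and the amplification-noise power in the denominator (through $N_{\mathrm{act}}\alpha^2$), so monotonicity is not automatic and must be verified rather than asserted. The cleanest route is to write the reduced SINR as a ratio of a quadratic in $\alpha$ over another quadratic in $\alpha$, differentiate, and check that the sign of the derivative's numerator stays nonnegative throughout the feasible interval $[\alpha_{\min},\alpha^*]$, invoking $N_{\mathrm{act}}>0$ and the favorable power condition \eqref{cons_pb} to control the boundary behavior. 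Once monotonicity on $[\alpha_{\min},\alpha^*]$ is secured, the power constraint binds and, combined with the equal-allocation result, the expression \eqref{opt_a_n} follows.
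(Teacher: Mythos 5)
Your proposal follows essentially the same route as the paper's proof in Appendix \ref{proof_lem2}: both observe that, after phase alignment, the objective depends on the amplification factors only through $S_1=\sum_{n}\alpha_n$ and $S_2=\sum_{n}\alpha_n^2$, both invoke Cauchy--Schwarz (with equality iff all $\alpha_n$ coincide) to force a common amplification factor, and both conclude by saturating the power constraint \eqref{cstr_power_HI_decomp_2}. The only structural difference is the order of the two steps: the paper \emph{first} asserts that \eqref{cstr_power_HI_decomp_2} is active at the optimum, which freezes $S_2=A_{\mathrm{sum}}$ and hence the NLoS signal and amplification-noise terms, and then applies Cauchy--Schwarz to the remaining LoS term $x_{\mathrm{L}}\propto (S_1+N_{\mathrm{pas}})^2$; you apply Cauchy--Schwarz at an arbitrary fixed $S_2$ first and then try to establish activeness via monotonicity of the reduced single-variable SINR. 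Your equal-allocation step is sound and identical in substance to the paper's.

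The gap is in the scale step you yourself flag as the main obstacle: the claimed monotonicity of the reduced SINR in the common $\alpha$ is \emph{not} unconditionally true, so the derivative check you defer would not close for all parameters admitted by the lemma. Writing the reduced SINR as $f(\alpha)=\bigl(c_1(N\alpha+P)^2+c_2(N\alpha^2+P)\bigr)/\bigl(c_3N\alpha^2+\sigma_0^2\bigr)$ with $N=N_{\mathrm{act}}$, $P=N_{\mathrm{pas}}$, $c_1=\gamma_1P_{\mathrm{B}}\beta^2/D_{\mathrm{BI}}^2d_{\mathrm{IU}}^2$, $c_2=\gamma_2(K_1+K_2+1)P_{\mathrm{B}}\beta^2/D_{\mathrm{BI}}^2d_{\mathrm{IU}}^2$, and $c_3=\sigma_{\mathrm{I}}^2\beta/d_{\mathrm{IU}}^2$, a direct computation gives
\begin{equation}
f'(\alpha)\;\propto\; c_1\left(N\alpha+P\right)\left(\sigma_0^2-c_3\alpha P\right)+c_2\,\alpha\left(\sigma_0^2-c_3P\right),
\end{equation}
which turns negative once $N_{\mathrm{pas}}$ is large relative to $\sigma_0^2 d_{\mathrm{IU}}^2/(\sigma_{\mathrm{I}}^2\beta\alpha)$ (e.g., under LoS channels, where $c_2\to0$): the SINR then peaks strictly inside the power-feasible interval and the optimizer lies \emph{below} the power-saturating value in \eqref{opt_a_n}. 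The favorable condition \eqref{cons_pb} cannot rescue this, since it only relates $P_{\mathrm{I}}$ to $\alpha_{\min}$ and $\alpha_{\max}$ and says nothing about $N_{\mathrm{pas}}$ versus the noise ratio $\sigma_0^2d_{\mathrm{IU}}^2/\sigma_{\mathrm{I}}^2\beta$. To be fair, the paper's own proof is no stronger at this point---it simply states that ``it can be verified'' that the constraint is active---so your plan is at least as rigorous and has the merit of naming the delicate step explicitly; but as written, the promised verification would fail, and closing Lemma \ref{lem1} as stated requires an additional hypothesis (roughly, $c_3N_{\mathrm{pas}}\alpha\leq\sigma_0^2$ over the feasible range, i.e., receiver noise dominating the passive-element-scaled amplification noise) or a restriction to the operating regime in which the lemma is subsequently applied.
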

\begin{proof}
See Appendix \ref{proof_lem2}.
\end{proof}
Lemma \ref{lem1} shows that all active reflecting elements should adopt a common amplification factor due to the same path-loss over the LoS channels associated with the active elements.
Moreover, it is observed from \eqref{opt_a_n} that given the amplification power $P_{\mathrm{I}}$, the optimal amplification factor, $\alpha^*$, decreases with the number of active elements or signal power.
\begin{remark}
\textbf{\emph{(Amplification noise power)}}
\emph{By substituting $\alpha^*$ in \eqref{opt_a_n} into \eqref{n_act}, the amplification noise power is given by}
\begin{equation}
    z_{\mathrm{L,act}}+z_{\mathrm{NL,act}}=\frac{\mathcal{I}_{\mathbb{R}^+}(N_{\mathrm{act}})P_{\mathrm{I}}\sigma_{\mathrm{I}}^2\beta/d_{\mathrm{IU}}^{2}}{P_{\mathrm{B}}\beta/D_{\mathrm{BI}}^{2}+\sigma_{\mathrm{I}}^{2}},\label{noise_term}
\end{equation}
\emph{where the indicator function $\mathcal{I}_{\mathbb{R}^+}(N_{\mathrm{act}})=1$ when $N_{\mathrm{act}}>0$ and $\mathcal{I}_{\mathbb{R}^+}(N_{\mathrm{act}})=0$ otherwise, i.e., the noise term exists only when the hybrid IRS consists of active elements.
Note that the amplification noise power in \eqref{noise_term} is a constant that depends on the total amplification power, $P_{\mathrm{I}}$, and the path-loss of the BS-IRS and IRS-user channels. 
}
\end{remark}
Based on Lemma \ref{lem1}, the ergodic capacity of the worst-case user with the optimal IRS phase shifts in \eqref{opt_phase_1} and \eqref{opt_phase_2} and amplification factors of active elements in \eqref{opt_a_n} is given by
\begin{equation}
\tilde C^*\!=\!\log _{2}\left(1+\frac{\frac{P_{\mathrm{B}} \beta^{2}}{D^2_{\mathrm{BI}}d^2_{\mathrm{IU}}}\left(\gamma_{1} \left(\sqrt{A_{\mathrm{sum}}N_{\mathrm{act}}}+N_{\mathrm{pas}}\right)^{2}+\gamma_{2}\left(A_{\mathrm{sum}}\!+\!N_{\mathrm{pas}}\right)\right)}{A_{\mathrm{sum}} \sigma_{\mathrm{I}}^{2}\beta / d_{\mathrm{IU}}^{2}\!+\!\sigma_{0}^{2}}\right),\label{capa_approx2}
\end{equation}
where $A_{\mathrm{sum}}\triangleq\frac{\mathcal{I}_{\mathbb{R}^+}(N_{\mathrm{act}})P_{\mathrm{I}}}{P_{\mathrm{B}}\beta/D_{\mathrm{BI}}^{2}+\sigma_{\mathrm{I}}^{2}}$.

\subsection{Active/Passive Elements Allocation Optimization}\label{ele_alloc}
Given the optimal IRS phase shifts and amplification factors of active elements, in the next, we focus on optimizing the active/passive elements allocation for maximizing the ergodic capacity subject to the total deployment budget constraint, which is formulated as follows.
\begin{align}
    &\mathrm{(P4)}~~~~~\max_{N_{\mathrm{act}},N_{\mathrm{pas}}}
    \quad~~\log _{2}\left(1+\frac{\frac{P_{\mathrm{B}} \beta^{2}}{D^2_{\mathrm{BI}}d^2_{\mathrm{IU}}}\left(\gamma_{1} \left(\sqrt{A_{\mathrm{sum}}N_{\mathrm{act}}}+N_{\mathrm{pas}}\right)^{2}+\gamma_{2}\left(A_{\mathrm{sum}}\!+\!N_{\mathrm{pas}}\right)\right)}{A_{\mathrm{sum}} \sigma_{\mathrm{I}}^{2}\beta / d_{\mathrm{IU}}^{2}\!+\!\sigma_{0}^{2}}\right) \\
    &\qquad~~~\!~\quad~\text{s.t.}\qquad\quad \eqref{cons_C},\eqref{cons_C_AnP}\nonumber.
\end{align}

Problem (P4) is a non-convex optimization problem due to the discrete active/passive-element number and the non-concave objective function, making it difficult to obtain a closed-form expression for the optimal solution in general.
Although the optimal number of active (passive) elements can be numerically obtained by applying the one-dimensional search over $N_{\mathrm{act}}=\{0,1,\cdots,\lfloor\frac{W_0}{W_{\mathrm{act}}}\rfloor\}$ and hence $N_{\mathrm{pas}} = \lfloor\frac{W_0-N_{\mathrm{act}}W_{\mathrm{act}}}{W_{\mathrm{pas}}}\rfloor$, it yields little useful insight into the optimal IRS active/passive elements allocation. 
Thus, we consider two special channel setups in the following to obtain closed-form expressions for their corresponding optimal active/passive elements allocation.

\subsubsection{LoS channel model}
For the LoS channels with $K_1\to\infty$ and $K_2\to\infty$, we obtain from \eqref{gamma_approx} that $\gamma_1\to 1$ and $\gamma_2\to 0$. Then, the approximated ergodic capacity, $\tilde C$ in \eqref{sig_approx}, is equal to the exact capacity, $C$ in \eqref{ergo_capa}.
Note that the term $A_{\mathrm{sum}}$ is a positive constant, i.e., $A_{\mathrm{sum}}=\frac{P_{\mathrm{I}}}{P_{\mathrm{B}}\beta / D_{\mathrm{BI}}^{2}+\sigma_{\mathrm{I}}^{2}}$ when $N_{\mathrm{act}}>0$, and $A_{\mathrm{sum}}=0$ when $N_{\mathrm{act}}=0$.
Therefore, in the following, we solve problem (P4) in two cases, corresponding to the case of passive IRS with $N_{\mathrm{act}}=0$ and the case of hybrid IRS with $N_{\mathrm{act}}>0$ and $N_{\mathrm{pas}}>0$, respectively. 
To address the discrete active/passive elements deployment cost in constraints \eqref{cons_C} and \eqref{cons_C_AnP}, we first relax the integer values $N_{\mathrm{act}}$ and $N_{\mathrm{pas}}$ into continuous values, $\tilde N_{\mathrm{act}}$ and $\tilde N_{\mathrm{pas}}$, and then apply the integer rounding technique to reconstruct them based on their optimal continuous values.
Moreover, it can be shown that the equality in \eqref{cons_C} holds in the optimal solution to problem (P4), i.e., $\tilde N_{\mathrm{pas}}=\frac{W_0-W_{\mathrm{act}}\tilde N_{\mathrm{act}}}{W_{\mathrm{pas}}}$.

First, consider the case of $\tilde N_{\mathrm{act}}=0$. By substituting $\tilde N_{\mathrm{pas}}=\frac{W_0}{W_{\mathrm{pas}}}$, $\gamma_1\to 1$, and $\gamma_2\to 0$ into \eqref{capa_approx2}, the hybrid IRS reduces to the passive IRS for which the achievable capacity under the LoS channels is given by
\begin{equation}
   C = C^*_{\text {L,pas}} \triangleq  \log _{2}\left(1+\frac{W_0^{2} P_{\mathrm{B}} \beta^{2}}{W_{\mathrm{pas}}^{2} D_{\mathrm{BI}}^{2} d_{\mathrm{IU}}^{2} \sigma_{0}^{2}}\right).\label{C_p}
\end{equation} 
Next, when $\tilde N_{\mathrm{act}}>0$ under the LoS channel model, the achievable capacity in \eqref{capa_approx2} reduces to
\begin{equation}
    C = C_{\text {L,hyb}} \triangleq \log _{2}\left(1+\frac{P_{\mathrm{B}}\beta^2(\sqrt{A_{\mathrm{sum}}\tilde N_{\mathrm{act}}}+\tilde N_{\mathrm{pas}})^2/D_{\mathrm{BI}}^2d_{\mathrm{IU}}^2}{A_{\mathrm{sum}}\sigma_{\mathrm{I}}^2\beta/d_{\mathrm{IU}}^2+\sigma_0^2}\right).\label{C_h}
\end{equation}
By substituting $\tilde N_{\mathrm{pas}}=\frac{W_0-W_{\mathrm{act}}\tilde N_{\mathrm{act}}}{W_{\mathrm{pas}}}$ into \eqref{C_h}, the optimal solution to problem (P4) given $\tilde N_{\mathrm{act}}>0$ can be obtained by solving the following equivalent maximization problem.
\begin{align}
    &\mathrm{(P5)}~~~~~\max_{\tilde N_{\mathrm{act}}}
    \quad~~\xi_1\left(-\tilde N_{\mathrm{act}}+\xi_2\sqrt{\tilde N_{\mathrm{act}}}+\xi_3\right)^2\label{obj_p3}\\
    &\qquad~~~\!~\quad~\text{s.t.} ~~~~~0< \tilde N_{\mathrm{act}}\leq\frac{W_0}{W_{\mathrm{act}}},
\end{align}
where
\begin{align}
    \xi_1=\frac{P_{\mathrm{B}}\beta^2W_{\mathrm{act}}^2/D_{\mathrm{BI}}^2d_{\mathrm{IU}}^2W_{\mathrm{pas}}^2}{A_{\mathrm{sum}}\sigma_{\mathrm{I}}^2\beta/d_{\mathrm{IU}}^2+\sigma_0^2},
    \qquad \xi_2 = \frac{\sqrt{A_{\mathrm{sum}}}W_{\mathrm{pas}}}{W_{\mathrm{act}}},
    \qquad \text{and } \xi_3 = \frac{W_0}{W_{\mathrm{act}}}.
\end{align}
\begin{theorem}\label{the_opt_N}
\textbf{\emph{(Optimal active/passive elements allocation)}} \emph{Under the LoS channel model and given $\tilde N_{\mathrm{act}}>0$, the optimal solution to problem (P4) is}
\begin{equation}\label{opt_na}
\begin{cases}
&\tilde N^*_{\mathrm{act}}=\frac{W_0}{W_{\mathrm{act}}},\tilde N^*_{\mathrm{pas}}=0, \qquad\qquad\qquad\qquad\qquad\qquad~~~\emph{if } W_0< \frac{W_{\mathrm{pas}}^{2} P_{\mathrm{I}} / W_{\mathrm{act}}}{4 P_{\mathrm{B}} \beta / D_{\mathrm{BI}}^{2}+4 \sigma_{\mathrm{I}}^{2}},\\
&\tilde N^*_{\mathrm{act}}=\frac{P_{\mathrm{I}}W_{\mathrm{pas}}^2/W_{\mathrm{act}}^2}{4P_{\mathrm{B}}\beta/D_{\mathrm{BI}}^2+4\sigma_{\mathrm{I}}^2},\tilde N^*_{\mathrm{pas}}=\frac{W_0}{W_{\mathrm{pas}}}-\frac{P_{\mathrm{I}}W_{\mathrm{pas}}/W_{\mathrm{act}}}{4P_{\mathrm{B}}\beta/D_{\mathrm{BI}}^2+4\sigma_{\mathrm{I}}^2},\qquad \emph{otherwise}.
\end{cases}
\end{equation}
\end{theorem}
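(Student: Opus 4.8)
The plan is to solve (P5) by a one-dimensional change of variables and then translate the optimizer back to the element numbers via the budget equality. First I would substitute $t=\sqrt{\tilde N_{\mathrm{act}}}$, so that the feasible set $0<\tilde N_{\mathrm{act}}\leq W_0/W_{\mathrm{act}}=\xi_3$ becomes $t\in(0,\sqrt{\xi_3}\,]$ and the objective in \eqref{obj_p3} becomes $\xi_1\,g(t)^2$ with $g(t)\triangleq -t^2+\xi_2 t+\xi_3$, a concave (downward-opening) parabola in $t$ whose vertex sits at $t=\xi_2/2$. Since $\log_2(1+\cdot)$ is increasing and $\xi_1>0$ is a positive constant, maximizing the capacity in \eqref{C_h} is equivalent to maximizing $g(t)^2$ over this interval.

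Next I would argue that the square can be dropped. Because $g$ is concave in $t$, its minimum over the closed interval $[0,\sqrt{\xi_3}\,]$ is attained at an endpoint; evaluating gives $g(0)=\xi_3>0$ and $g(\sqrt{\xi_3})=\xi_2\sqrt{\xi_3}>0$, where $\xi_2>0$ holds whenever $N_{\mathrm{act}}>0$ so that $A_{\mathrm{sum}}>0$. Hence $g(t)>0$ throughout the feasible range, and maximizing $\xi_1\,g(t)^2$ is therefore equivalent to maximizing the concave function $g(t)$ itself.

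The maximization of $g$ is then elementary. Its unconstrained maximizer is $t^*=\xi_2/2$, i.e.\ $\tilde N_{\mathrm{act}}=\xi_2^2/4$, and I would compare this vertex with the right boundary $\sqrt{\xi_3}$. If $\xi_2/2\geq\sqrt{\xi_3}$, equivalently $\xi_2^2/4\geq\xi_3$, then $g$ is increasing on the whole feasible interval and the maximizer is the boundary $t=\sqrt{\xi_3}$, giving $\tilde N^*_{\mathrm{act}}=W_0/W_{\mathrm{act}}$ and $\tilde N^*_{\mathrm{pas}}=0$; otherwise the interior vertex $\tilde N^*_{\mathrm{act}}=\xi_2^2/4$ is optimal, with $\tilde N^*_{\mathrm{pas}}$ recovered from the budget equality $\tilde N_{\mathrm{pas}}=(W_0-W_{\mathrm{act}}\tilde N_{\mathrm{act}})/W_{\mathrm{pas}}$. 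Substituting $\xi_2=\sqrt{A_{\mathrm{sum}}}\,W_{\mathrm{pas}}/W_{\mathrm{act}}$ and $A_{\mathrm{sum}}=P_{\mathrm{I}}/(P_{\mathrm{B}}\beta/D_{\mathrm{BI}}^2+\sigma_{\mathrm{I}}^2)$ turns $\xi_2^2/4$ into $\frac{P_{\mathrm{I}}W_{\mathrm{pas}}^2/W_{\mathrm{act}}^2}{4P_{\mathrm{B}}\beta/D_{\mathrm{BI}}^2+4\sigma_{\mathrm{I}}^2}$ and the threshold $\xi_2^2/4\geq\xi_3$ into $W_0\leq\frac{W_{\mathrm{pas}}^2 P_{\mathrm{I}}/W_{\mathrm{act}}}{4P_{\mathrm{B}}\beta/D_{\mathrm{BI}}^2+4\sigma_{\mathrm{I}}^2}$, exactly matching the two branches of \eqref{opt_na}.

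I expect the main obstacle to be the justification for discarding the square, that is, establishing $g(t)>0$ over the entire feasible interval; this is precisely what lets the problem collapse from maximizing $g^2$ (which would otherwise admit spurious maximizers wherever $g$ is most negative) to a clean concave maximization. A secondary point requiring care is the boundary case $\xi_2^2/4=\xi_3$, where the vertex coincides with $\sqrt{\xi_3}$: both branches of \eqref{opt_na} then return $\tilde N^*_{\mathrm{act}}=W_0/W_{\mathrm{act}}$ and $\tilde N^*_{\mathrm{pas}}=0$, so the strict versus non-strict inequality in the stated threshold is immaterial.
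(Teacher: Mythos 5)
Your proof is correct, and while it shares the paper's key substitution $t=\sqrt{\tilde N_{\mathrm{act}}}$ and arrives at the same vertex/threshold arithmetic, it takes a genuinely cleaner route. The paper works directly with the squared objective $\gamma_{\mathrm{hyb}}(\tilde N_{\mathrm{act}})=\xi_1 g(t)^2$: it differentiates with respect to $t$, locates the three stationary points $n_{\mathrm{A},1}<0<n_{\mathrm{A},2}=\xi_2/2<n_{\mathrm{A},3}=\frac{\xi_2+\sqrt{\xi_2^2+4\xi_3}}{2}$, and tabulates the monotonicity of $g^2$ in three cases according to where $\sqrt{W_0/W_{\mathrm{act}}}$ falls, including a third case $\sqrt{W_0/W_{\mathrm{act}}}>n_{\mathrm{A},3}$ in which $g<0$ and $g^2$ increases again, which it then settles by the explicit comparison $\gamma_{\mathrm{hyb}}(n_{\mathrm{A},2}^2)\geq\gamma_{\mathrm{hyb}}(W_0/W_{\mathrm{act}})$. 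Your positivity argument --- $g$ concave with $g(0)=\xi_3>0$ and $g(\sqrt{\xi_3})=\xi_2\sqrt{\xi_3}>0$, hence $g>0$ on all of $(0,\sqrt{\xi_3}\,]$ --- discards the square at the outset and collapses the problem to vertex-versus-boundary for a concave parabola. It moreover reveals that the paper's third case is vacuous on the feasible set: since $\tilde N_{\mathrm{act}}>0$ gives $A_{\mathrm{sum}}>0$ and hence $\xi_2>0$, one has $n_{\mathrm{A},3}\geq \frac{\xi_2}{2}+\sqrt{\xi_3}>\sqrt{\xi_3}$, so the larger root of $g$ never lies inside the feasible interval and the auxiliary comparison (where the paper's displayed difference $\frac{\xi_2^2}{4}+\xi_3-\xi_2\sqrt{\xi_3}=(\frac{\xi_2}{2}-\sqrt{\xi_3})^2$ is needed) is superfluous. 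What the paper's derivative-and-table method buys is that it requires no structural observation about $g$ and would extend mechanically if the feasible range of $\tilde N_{\mathrm{act}}$ were enlarged past $n_{\mathrm{A},3}^2$; what your argument buys is brevity and transparency, since maximizing a positive concave function needs no case table. Your treatment of the knife-edge case $\xi_2^2/4=\xi_3$, where both branches of \eqref{opt_na} coincide, is also sound and is a point the paper passes over silently.
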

\begin{proof}
See Appendix \ref{proof_lem3}.
\end{proof}
\begin{remark}
\emph{Theorem \ref{the_opt_N} shows that when the total deployment budget is small, i.e., $W_0< \frac{W_{\mathrm{pas}}^{2} P_{\mathrm{I}} / W_{\mathrm{act}}}{4 P_{\mathrm{B}} \beta / D_{\mathrm{BI}}^{2}+4 \sigma_{\mathrm{I}}^{2}}$, only active elements should be employed since they can provide a signal amplification gain and thus a higher rate than passive elements. In contrast, if the total deployment budget is sufficiently large, i.e., $W_0\geq \frac{W_{\mathrm{pas}}^{2} P_{\mathrm{I}} / W_{\mathrm{act}}}{4 P_{\mathrm{B}} \beta / D_{\mathrm{BI}}^{2}+4 \sigma_{\mathrm{I}}^{2}}$, the optimal number of active elements should balance the amplification gain of active elements and the beamforming gain of passive elements, which is independent of $W_0$ but determined by other parameters, such as $P_{\mathrm{I}}$, $W_{\mathrm{act}}$ and $W_{\mathrm{pas}}$.
This is because the performance bottleneck of active elements becomes the limited amplification power when $W_0$ is large, where the limited power can only support partial active elements to operate with the optimal amplification factor.}
\end{remark}

Based on Theorem \ref{the_opt_N} and substituting the optimal number of active elements, $\tilde N_{\mathrm{act}}=\frac{P_{\mathrm{I}} W_{\mathrm{pas}}^{2} / W_{\mathrm{act}}^{2}}{4 P_{\mathrm{B}} \beta / D_{\mathrm{BI}}^{2}+4 \sigma_{\mathrm{I}}^{2}}$, into \eqref{C_h}, the achievable capacity of the worst-case user is obtained as
\begin{equation}
C^*_{\text {L,hyb}} = \log _{2}\left(1+\frac{P_{\mathrm{B}} \beta^{2}\left(\frac{A_{\text {sum }} W_{\mathrm{pas}}}{4 W_{\mathrm{act}}}+\frac{W_0}{W_{\mathrm{pas}}}\right)^{2} / D_{\mathrm{BI}}^{2} d_{\mathrm{IU}}^{2}}{A_{\text {sum }} \sigma_{\mathrm{I}}^{2} \beta / d_{\mathrm{IU}}^{2}+\sigma_{0}^{2}}\right).\label{C_h_2}
\end{equation}

\begin{corollary}
\emph{For the case of LoS channels, given the optimal IRS phase shifts in \eqref{opt_phase_1} and \eqref{opt_phase_2} and active/passive elements allocation in \eqref{opt_na}, the optimal amplification factor of each active element is given by}
\begin{equation}
    \alpha_{n}=\frac{2 W_{\mathrm{act}}}{W_{\mathrm{pas}}} ,\forall  n \in\mathcal{N}_{\mathrm{act}},
\end{equation}
\emph{for achieving the capacity in \eqref{C_h_2}.}
\end{corollary}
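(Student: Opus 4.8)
The plan is to obtain the claimed amplification factor by directly combining the closed-form optimal amplification factor from Lemma~\ref{lem1} with the optimal element allocation from Theorem~\ref{the_opt_N}, specialized to the hybrid (nontrivial) regime in which active elements are actually deployed. Since the corollary targets the capacity in \eqref{C_h_2}, which is derived from the ``otherwise'' branch of \eqref{opt_na}, I would take $\tilde N_{\mathrm{act}}^* = \frac{P_{\mathrm{I}} W_{\mathrm{pas}}^2/W_{\mathrm{act}}^2}{4P_{\mathrm{B}}\beta/D_{\mathrm{BI}}^2+4\sigma_{\mathrm{I}}^2}$ as the operative number of active elements, noting that this is the interior maximizer identified in the proof of Theorem~\ref{the_opt_N}.

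First I would recall from Lemma~\ref{lem1} that, under the favorable amplification power condition \eqref{cons_pb}, all active elements share the common optimal factor $\alpha^* = \sqrt{\frac{P_{\mathrm{I}}/N_{\mathrm{act}}}{P_{\mathrm{B}}\beta/D_{\mathrm{BI}}^2+\sigma_{\mathrm{I}}^2}}$ given in \eqref{opt_a_n}. Next I would substitute $N_{\mathrm{act}} = \tilde N_{\mathrm{act}}^*$ into this expression. The key algebraic simplification is that the ratio $P_{\mathrm{I}}/\tilde N_{\mathrm{act}}^*$ equals $4(P_{\mathrm{B}}\beta/D_{\mathrm{BI}}^2+\sigma_{\mathrm{I}}^2)W_{\mathrm{act}}^2/W_{\mathrm{pas}}^2$, so that both $P_{\mathrm{I}}$ and the path-loss/noise term $P_{\mathrm{B}}\beta/D_{\mathrm{BI}}^2+\sigma_{\mathrm{I}}^2$ cancel inside the square root. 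This leaves $\alpha^* = \sqrt{4 W_{\mathrm{act}}^2/W_{\mathrm{pas}}^2} = 2W_{\mathrm{act}}/W_{\mathrm{pas}}$ for every $n\in\mathcal{N}_{\mathrm{act}}$, which is exactly the claimed value.

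There is essentially no conceptual obstacle here: the result is a one-line algebraic consequence of the two earlier results, and the only things to watch are matching the correct branch of \eqref{opt_na} (the hybrid case with $N_{\mathrm{act}}>0$ rather than the active-only case $\tilde N_{\mathrm{act}}^*=W_0/W_{\mathrm{act}}$) and carrying the factor of $4$ through the square root, which is what produces the leading constant $2$. I would close by remarking that the resulting factor is independent of both $W_0$ and $P_{\mathrm{I}}$ and depends only on the cost ratio $W_{\mathrm{act}}/W_{\mathrm{pas}}$, consistent with the interpretation in the remark following Theorem~\ref{the_opt_N} that in this regime the amplification power is spread optimally across exactly $\tilde N_{\mathrm{act}}^*$ elements.
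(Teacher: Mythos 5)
Your proposal is correct and matches the paper's intended argument: the paper states this corollary without an explicit proof precisely because it follows, as you show, by substituting the interior-maximizer branch $\tilde N^*_{\mathrm{act}}=\frac{P_{\mathrm{I}}W_{\mathrm{pas}}^2/W_{\mathrm{act}}^2}{4P_{\mathrm{B}}\beta/D_{\mathrm{BI}}^2+4\sigma_{\mathrm{I}}^2}$ from Theorem~\ref{the_opt_N} into the common optimal factor $\alpha^*$ of Lemma~\ref{lem1}, whereupon $P_{\mathrm{I}}$ and the term $P_{\mathrm{B}}\beta/D_{\mathrm{BI}}^2+\sigma_{\mathrm{I}}^2$ cancel and the factor of $4$ yields $2W_{\mathrm{act}}/W_{\mathrm{pas}}$. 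Your algebra and branch selection are both correct, so nothing further is needed.
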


Moreover, when $W_0<\frac{W_{\mathrm{pas}}^2P_{\mathrm{I}}/W_{\mathrm{act}}}{4P_{\mathrm{B}}\beta/D_{\mathrm{BI}}^2+4\sigma_{\mathrm{I}}^2}$ and hence $N^*_{\mathrm{act}}=W_0/W_{\mathrm{act}}$, the achievable capacity of the worst-case user aided by the active IRS under the LoS channel model is given by
\begin{equation}
    C^*_{\text {L,act}} \triangleq \log _{2}\left(1+\frac{W_0A_{\mathrm{sum}} P_{\mathrm{B}} \beta^{2} /W_{\mathrm{act}} D_{\mathrm{BI}}^{2} d_{\mathrm{IU}}^{2}}{A_{\mathrm{sum}} \sigma_{\mathrm{I}}^{2} \beta / d_{\mathrm{IU}}^{2}+\sigma_{0}^{2}}\right).\label{C_a}
\end{equation}

Next, we compare the achievable capacity of the hybrid IRS with that of the fully-active and fully-passive IRSs with respect to (w.r.t.) different values of the deployment budget under the LoS channels. To this end, we define the following active/passive elements allocation thresholds for the deployment budget.
\begin{align}
    W_{\mathrm{A-H}}&\triangleq\frac{W_{\mathrm{pas}}^2P_{\mathrm{I}}/W_{\mathrm{act}}}{4P_{\mathrm{B}}\beta/D_{\mathrm{BI}}^2+4\sigma_{\mathrm{I}}^2},\label{B_H-A}\\
    W_{\mathrm{H-P}}&\triangleq\frac{W_{\mathrm{pas}}^2\sigma_0^2d_{\mathrm{IU}}^2}{4W_{\mathrm{act}}\sigma_{\mathrm{I}}^2\beta}+\frac{W_{\mathrm{pas}}^2\sigma_0d_{\mathrm{IU}}}{4W_{\mathrm{act}}\sigma_{\mathrm{I}}}\sqrt{\frac{\sigma_0^2d_{\mathrm{IU}}^2}{\sigma_{\mathrm{I}}^2\beta^2}+\frac{P_{\mathrm{I}}}{P_{\mathrm{B}}\beta^2/D_{\mathrm{BI}}^2+\sigma_{\mathrm{I}}^2\beta}}.\label{condition_H-P}
\end{align}
Then, we have the following result.
\begin{theorem}\label{lem_thres}
\textbf{\emph{(Capacity comparison among different IRS architectures)}}
\emph{Under the LoS channel model, the capacity comparison among different IRS architectures is given as follows.}

\emph{1) When $0\leq W_0< W_{\mathrm{A-H}}$, $C^*_{\text {L,act}}=C^*_{\text {L,hyb}}>C^*_{\text {L,pas}}$;}

\emph{2) When $W_{\mathrm{A-H}}\leq W_0\leq W_{\mathrm{H-P}}$, $C^*_{\text {L,hyb}}>C^*_{\text {L,act}}$ and $C^*_{\text {L,hyb}}>C^*_{\text {L,pas}}$;}

\emph{3) When $W_0>W_{\mathrm{H-P}}$, $C^*_{\text {L,pas}}=C^*_{\text {L,hyb}}>C^*_{\text {L,act}}$.}
\end{theorem}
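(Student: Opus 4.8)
The plan is to exploit the monotonicity of $\log_2(1+\cdot)$ so that comparing the three capacities reduces to comparing their SINR arguments. Introducing the shorthands $a\triangleq P_{\mathrm{B}}\beta^2/(D_{\mathrm{BI}}^2 d_{\mathrm{IU}}^2)$ and $b\triangleq\sigma_{\mathrm{I}}^2\beta/d_{\mathrm{IU}}^2$, the SINRs read off from \eqref{C_p}, \eqref{C_a} and \eqref{C_h_2} are $S_{\text{pas}}=aW_0^2/(W_{\mathrm{pas}}^2\sigma_0^2)$, $S_{\text{act}}=aA_{\mathrm{sum}}W_0/[W_{\mathrm{act}}(A_{\mathrm{sum}}b+\sigma_0^2)]$, and $S_{\text{hyb}}=a(A_{\mathrm{sum}}W_{\mathrm{pas}}/(4W_{\mathrm{act}})+W_0/W_{\mathrm{pas}})^2/(A_{\mathrm{sum}}b+\sigma_0^2)$. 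Throughout, $C^*_{\text{L,hyb}}$ is the optimum of the hybrid architecture in problem (P4), which by Theorem~\ref{the_opt_N} together with the pure-passive case $N_{\mathrm{act}}=0$ equals whichever candidate is largest; the task is therefore to decide, region by region, which candidate dominates. For convenience I set $c\triangleq A_{\mathrm{sum}}W_{\mathrm{pas}}/(4W_{\mathrm{act}})$ and $u\triangleq W_0/W_{\mathrm{pas}}$, and note that substituting $A_{\mathrm{sum}}$ into \eqref{B_H-A} gives the clean form $W_{\mathrm{A-H}}=A_{\mathrm{sum}}W_{\mathrm{pas}}^2/(4W_{\mathrm{act}})$, i.e. $c=u$ iff $W_0=W_{\mathrm{A-H}}$.

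First I would compare hybrid against active. Since $S_{\text{hyb}}$ and $S_{\text{act}}$ share the denominator $A_{\mathrm{sum}}b+\sigma_0^2$, it suffices to compare numerators, and one finds $(c+u)^2-4cu=(c-u)^2\ge 0$ (using $A_{\mathrm{sum}}W_0/W_{\mathrm{act}}=4cu$). Hence $S_{\text{hyb}}\ge S_{\text{act}}$ unconditionally, with equality exactly at $W_0=W_{\mathrm{A-H}}$. Moreover the interior hybrid point of Theorem~\ref{the_opt_N} is feasible (its $\tilde N_{\mathrm{pas}}\ge 0$) precisely when $W_0\ge W_{\mathrm{A-H}}$. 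Thus for $W_0<W_{\mathrm{A-H}}$ the hybrid optimum collapses onto the pure-active boundary, giving $C^*_{\text{L,hyb}}=C^*_{\text{L,act}}$, whereas for $W_0>W_{\mathrm{A-H}}$ the interior solution is feasible and strictly beats active.

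Next I would compare hybrid against passive. Clearing denominators in $S_{\text{hyb}}>S_{\text{pas}}$ turns it into the downward quadratic $\sigma_0^2 c^2+2\sigma_0^2 c\,u-A_{\mathrm{sum}}b\,u^2>0$, which holds exactly for $u$ below its unique positive root $u^{*}=\sigma_0 c(\sigma_0+\sqrt{\sigma_0^2+A_{\mathrm{sum}}b})/(A_{\mathrm{sum}}b)$. The key verification here is to substitute $c$ and confirm that $W_{\mathrm{pas}}u^{*}$ matches, term by term, the closed form $W_{\mathrm{H-P}}$ in \eqref{condition_H-P} (factoring $d_{\mathrm{IU}}/\sigma_{\mathrm{I}}$ out of the radical reproduces both summands). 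This yields $S_{\text{hyb}}>S_{\text{pas}}$ iff $W_0<W_{\mathrm{H-P}}$, with equality at the threshold and the reverse beyond it; in particular, for $W_0>W_{\mathrm{H-P}}$ the interior hybrid point is dominated by pure passive, so the hybrid optimum collapses onto passive and $C^*_{\text{L,hyb}}=C^*_{\text{L,pas}}$.

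It remains to settle active versus passive and to order the thresholds, which I expect to be the main obstacle. A direct comparison gives $S_{\text{act}}>S_{\text{pas}}$ iff $W_0<\tilde W\triangleq A_{\mathrm{sum}}\sigma_0^2 W_{\mathrm{pas}}^2/[W_{\mathrm{act}}(A_{\mathrm{sum}}b+\sigma_0^2)]$, so everything hinges on proving $W_{\mathrm{A-H}}\le\tilde W\le W_{\mathrm{H-P}}$. The upper bound I would obtain via the substitution $s=\sqrt{1+A_{\mathrm{sum}}b/\sigma_0^2}\ge 1$, under which $\tilde W\le W_{\mathrm{H-P}}$ reduces to $s^3-3s^2+4=(s-2)^2(s+1)\ge 0$ and therefore holds unconditionally; the lower bound $W_{\mathrm{A-H}}\le\tilde W$ is equivalent to $A_{\mathrm{sum}}b\le 3\sigma_0^2$, which is exactly the regime $W_{\mathrm{A-H}}\le W_{\mathrm{H-P}}$ presupposed by the three-region partition (amplification noise no larger than three times the receiver noise). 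With this ordering established, the three parts assemble directly: for $W_0<W_{\mathrm{A-H}}\ (\le\tilde W)$ the hybrid equals active and $S_{\text{act}}>S_{\text{pas}}$ (Part~1); for $W_{\mathrm{A-H}}<W_0<W_{\mathrm{H-P}}$ the interior hybrid strictly beats both, with the endpoints giving the stated equalities (Part~2); and for $W_0>W_{\mathrm{H-P}}\ (\ge\tilde W)$ the hybrid equals passive and $S_{\text{pas}}>S_{\text{act}}$ (Part~3). The principal effort throughout is the algebraic identification $W_{\mathrm{pas}}u^{*}=W_{\mathrm{H-P}}$ and the factorization $(s-2)^2(s+1)$ underpinning the threshold ordering.
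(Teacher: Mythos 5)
Your proof is correct, and it reaches the theorem through the same three pairwise SNR comparisons as the paper -- your $\tilde W$ is exactly the paper's threshold $W_{\mathrm{A-P}}$ from Appendix~\ref{proof_lem4}, your downward quadratic $\sigma_0^2c^2+2\sigma_0^2cu-A_{\mathrm{sum}}b\,u^2>0$ is the paper's inequality \eqref{ineq_H-P}, and your identification $W_{\mathrm{pas}}u^{*}=W_{\mathrm{H-P}}$ reproduces \eqref{condition_H-P} -- but you settle the crucial threshold-ordering step by a genuinely different route. The paper enumerates all six permutations of $W_{\mathrm{A-H}},W_{\mathrm{A-P}},W_{\mathrm{H-P}}$ (its Table~\ref{per_thres}) and eliminates five by transitivity contradictions of the form ``interior hybrid beats active, passive beats interior hybrid, active beats passive''; you instead prove $W_{\mathrm{A-P}}\leq W_{\mathrm{H-P}}$ unconditionally via the substitution $s=\sqrt{1+A_{\mathrm{sum}}b/\sigma_0^2}$ and the factorization $s^3-3s^2+4=(s-2)^2(s+1)$, and you show $W_{\mathrm{A-H}}\leq W_{\mathrm{A-P}}$ holds if and only if $A_{\mathrm{sum}}b\leq 3\sigma_0^2$, which you correctly observe is equivalent to $W_{\mathrm{A-H}}\leq W_{\mathrm{H-P}}$, i.e., to the non-degeneracy of the theorem's three-region partition itself. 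This buys you something the paper's argument does not actually deliver: the paper's elimination is incomplete, because when the interior stationary point violates the budget (i.e., $W_0<W_{\mathrm{A-H}}$, so $\tilde N_{\mathrm{pas}}^{*}=u-c<0$) the formula-level inequality ``passive beats interior hybrid'' says nothing about achievable capacities, so the ordering $W_{\mathrm{A-P}}\leq W_{\mathrm{H-P}}\leq W_{\mathrm{A-H}}$ (the regime $A_{\mathrm{sum}}\sigma_{\mathrm{I}}^2\beta/d_{\mathrm{IU}}^2>3\sigma_0^2$ of very strong amplification noise) is internally consistent and is silently excluded by the paper; in that regime part~1 of the theorem would in fact fail on $(W_{\mathrm{A-P}},W_{\mathrm{A-H}})$. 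Your proof makes this hidden hypothesis explicit rather than burying it in an ``all other permutations are infeasible'' claim, while your collapse arguments (hybrid optimum equals pure active for $W_0<W_{\mathrm{A-H}}$ via infeasibility of the interior point, equals pure passive for $W_0>W_{\mathrm{H-P}}$ via domination together with the $N_{\mathrm{act}}=0$ branch) lean on Theorem~\ref{the_opt_N} in the same way the paper leans on Table~\ref{opt_N_A}. One boundary remark, which you already flag: at $W_0=W_{\mathrm{A-H}}$ one has $(c-u)^2=0$ and at $W_0=W_{\mathrm{H-P}}$ one has $u=u^{*}$, so the strict inequalities of part~2 hold only on the open interval -- a slip in the theorem statement itself that the paper's proof does not address either.
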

\begin{proof}
See Appendix \ref{proof_lem4}.
\end{proof}
\begin{remark}
\textbf{\emph{(IRS architecture selection)}}
\emph{Theorem \ref{lem_thres} shows that the optimal architecture selection for the IRS (i.e., passive, active, or hybrid) is determined by the total deployment budget. Specifically, when $W_0$ is small, the hybrid IRS reduces to the fully-active IRS, which is the optimal architecture to achieve the maximum capacity. This is because when $N_{\mathrm{pas}}$ is small, the active IRS can provide a high power amplification gain while passive elements only have limited beamforming gain. In contrast, when $W_0$ is sufficiently large, the hybrid IRS reduces to the fully-passive IRS. This is expected since in this case, the amplification factor for the active elements is limited, which may not be able to mitigate the amplification noise. However, when $W_{\mathrm{A-H}}\leq W_0\leq W_{\mathrm{H-P}}$, there in general exists a trade-off between increasing the power amplification gain (i.e., assigning more active elements) and beamforming gain (i.e., assigning more passive elements); hence, it is necessary to design the optimal active/passive elements allocation for the hybrid IRS to maximize the capacity.}
\end{remark}
\begin{remark}
\textbf{\emph{(Deployment budget thresholds)}}
\emph{It is observed from \eqref{B_H-A} and \eqref{condition_H-P} that $W_{\mathrm{A-H}}$ and $W_{\mathrm{H-P}}$ both increase with the amplification power, $P_{\mathrm{I}}$, and the passive-element cost, $W_{\mathrm{pas}}$, while they decrease with the active-element cost, $W_{\mathrm{act}}$. 
This can be explained as follows. 
First, a higher amplification power, $P_{\mathrm{I}}$, allows more active elements to operate with sufficiently large amplification factors. 
Second, higher passive-element cost, $W_{\mathrm{pas}}$, and/or lower active-element cost, $W_{\mathrm{act}}$, make it more desirable to deploy active elements.}
\end{remark}

\subsubsection{Rayleigh fading channel model}
For the case of Rayleigh fading channels with $K_1=K_2=0$, we obtain from \eqref{gamma_approx} that $\gamma_1=0$ and $\gamma_2=1$. As such, the ergodic capacity under the Rayleigh fading channels is given by
\begin{equation}
    \tilde C=C_{\mathrm{NL}}\triangleq\log _{2}\left(1+\frac{\left(A_{\mathrm{sum}}+{N}_{\mathrm{pas}}\right) \frac{P_{\mathrm{B}} \beta^{2}}{D_{\mathrm{BI}}^{2} d_{\mathrm{IU}}^{2}}}{A_{\mathrm{sum}} \sigma_{\mathrm{I}}^{2} \beta / d_{\mathrm{IU}}^{2}+\sigma_{0}^{2}}\right).
\end{equation}

\begin{lemma}\label{lem_opt_ea_NLoS}
\textbf{\emph{(Optimal active/passive elements allocation for Rayleigh fading channels)}}
\emph{Under the Rayleigh fading channel model and favorable amplification power condition in \eqref{cons_pb}, the optimal number of active and passive elements are given by}
\begin{equation}\label{opt_na_NLoS}
\begin{cases}
& N^*_{\mathrm{act}}=1, N^*_{\mathrm{pas}}=\lfloor\frac{W_0-W_{\mathrm{act}}}{W_{\mathrm{pas}}}\rfloor, \qquad~~~\emph{if } W_0> \frac{A_{\mathrm{sum}}W_{\mathrm{pas}}\sigma_0^2-W_{\mathrm{act}}\sigma_0^2}{A_{\mathrm{sum}}\sigma_{\mathrm{I}}^2\beta/d_{\mathrm{IU}}^2},\\
& N^*_{\mathrm{act}}=0, N^*_{\mathrm{pas}}=\lfloor\frac{W_0}{W_{\mathrm{pas}}}\rfloor,\qquad~~~~~~~\emph{otherwise}.
\end{cases}
\end{equation}
\end{lemma}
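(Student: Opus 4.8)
The plan is to exploit the degeneracy of the Rayleigh case, in which $\gamma_1=0$ annihilates the coherent-beamforming term so that the objective collapses to a ratio that is \emph{affine} in the number of passive elements. Since $C_{\mathrm{NL}}$ is monotonically increasing in its SNR argument
\[
R\triangleq\frac{(A_{\mathrm{sum}}+N_{\mathrm{pas}})P_{\mathrm{B}}\beta^2/(D_{\mathrm{BI}}^2 d_{\mathrm{IU}}^2)}{A_{\mathrm{sum}}\sigma_{\mathrm{I}}^2\beta/d_{\mathrm{IU}}^2+\sigma_0^2},
\]
it suffices to maximize $R$ over the feasible integer pairs $(N_{\mathrm{act}},N_{\mathrm{pas}})$. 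The key observation, inherited from the optimal amplification factor $\alpha^\ast$ in \eqref{opt_a_n}, is that $A_{\mathrm{sum}}$ equals the fixed constant $P_{\mathrm{I}}/(P_{\mathrm{B}}\beta/D_{\mathrm{BI}}^2+\sigma_{\mathrm{I}}^2)$ whenever $N_{\mathrm{act}}>0$ and vanishes otherwise; it is \emph{independent} of the actual number of active elements. Hence both the active-element signal power $A_{\mathrm{sum}}P_{\mathrm{B}}\beta^2/(D_{\mathrm{BI}}^2 d_{\mathrm{IU}}^2)$ and the amplification-noise floor $A_{\mathrm{sum}}\sigma_{\mathrm{I}}^2\beta/d_{\mathrm{IU}}^2$ are frozen the instant a single active element is switched on.

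Building on this, I would first show that \emph{whenever active elements are used at all, exactly one should be deployed}. Indeed, for any candidate with $N_{\mathrm{act}}\ge 2$, reducing the active count to one leaves the denominator of $R$ unchanged, while the freed budget $(N_{\mathrm{act}}-1)W_{\mathrm{act}}$ can be reinvested in $\lfloor (N_{\mathrm{act}}-1)W_{\mathrm{act}}/W_{\mathrm{pas}}\rfloor\ge 1$ extra passive elements (using $W_{\mathrm{act}}>W_{\mathrm{pas}}$), each of which strictly raises the numerator by $P_{\mathrm{B}}\beta^2/(D_{\mathrm{BI}}^2 d_{\mathrm{IU}}^2)$. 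Thus $R$ strictly increases, so every $N_{\mathrm{act}}\ge 2$ allocation is dominated, and the one-dimensional search collapses to the two candidates $(0,\lfloor W_0/W_{\mathrm{pas}}\rfloor)$ and $(1,\lfloor (W_0-W_{\mathrm{act}})/W_{\mathrm{pas}}\rfloor)$, with the budget constraint \eqref{cons_C} tight in each.

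It then remains to compare these two survivors. Denoting by $R_0$ and $R_1$ the corresponding SNR values, I would form $R_1-R_0$, clear the two positive denominators $\sigma_0^2$ and $A_{\mathrm{sum}}\sigma_{\mathrm{I}}^2\beta/d_{\mathrm{IU}}^2+\sigma_0^2$, and collect the terms in $W_0$. Because the numerator of $R$ is affine in $W_0$ through $N_{\mathrm{pas}}$ while its denominator is frozen once the active noise floor is fixed, the sign of $R_1-R_0$ is controlled by a single affine-in-$W_0$ expression, which crosses zero exactly once, at the deployment budget
\[
W_0=\frac{A_{\mathrm{sum}}W_{\mathrm{pas}}\sigma_0^2-W_{\mathrm{act}}\sigma_0^2}{A_{\mathrm{sum}}\sigma_{\mathrm{I}}^2\beta/d_{\mathrm{IU}}^2}
\]
appearing in the statement. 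Evaluating the sign on each side of this crossover decides between the hybrid $(N_{\mathrm{act}}=1)$ and the all-passive $(N_{\mathrm{act}}=0)$ allocation, after which I would reinstate the floor operations by the same relax-then-round argument used for the LoS case.

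The step I expect to be the main obstacle is not the closing algebra but the dominance argument of the second paragraph: making airtight the claim that more than one active element is never useful. This hinges entirely on the fact that the amplification \emph{power} budget $P_{\mathrm{I}}$ --- rather than the element count --- caps the total usable gain, so that splitting it across several elements buys no additional signal while still consuming scarce deployment budget. A secondary technical point is to verify that the single-active configuration is feasible under the favorable amplification-power condition \eqref{cons_pb}, i.e.\ that $\alpha_1=\sqrt{A_{\mathrm{sum}}}$ lies in $[\alpha_{\min},\alpha_{\max}]$; once monotone dominance and this feasibility are secured, the threshold comparison reduces to a routine one-dimensional sign check.
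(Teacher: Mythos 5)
Your overall route is the same as the paper's own proof. The paper likewise (i) observes that for $N_{\mathrm{act}}>0$ the quantity $A_{\mathrm{sum}}$ is frozen at $P_{\mathrm{I}}/(P_{\mathrm{B}}\beta/D_{\mathrm{BI}}^2+\sigma_{\mathrm{I}}^2)$ independently of the active count, so the Rayleigh-fading SNR is maximized by $N_{\mathrm{act}}=1$ with $N_{\mathrm{pas,max}}=\lfloor(W_0-W_{\mathrm{act}})/W_{\mathrm{pas}}\rfloor$, (ii) computes the all-passive value, and (iii) compares the two survivors. Your reinvestment argument is just a more explicit rendition of (i), and — contrary to your stated worry — it is the easy part: within (P4) both the noise floor $A_{\mathrm{sum}}\sigma_{\mathrm{I}}^2\beta/d_{\mathrm{IU}}^2$ and the $A_{\mathrm{sum}}$ contribution to the numerator depend on $N_{\mathrm{act}}$ only through the indicator $\mathcal{I}_{\mathbb{R}^+}(N_{\mathrm{act}})$, so trading every active element beyond the first for passive ones strictly increases the objective; no subtlety about splitting $P_{\mathrm{I}}$ arises. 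Your ``secondary'' feasibility caveat is legitimate but shared with the paper: condition \eqref{cons_pb} only guarantees that \emph{some} $N_{\mathrm{act}}\leq W_0/W_{\mathrm{act}}$ admits $\alpha^*\in[\alpha_{\min},\alpha_{\max}]$, whereas $N_{\mathrm{act}}=1$ specifically requires $P_{\mathrm{I}}\leq\alpha_{\max}^2(P_{\mathrm{B}}\beta/D_{\mathrm{BI}}^2+\sigma_{\mathrm{I}}^2)$, which the paper silently assumes as well.

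The genuine gap is precisely the step you postpone. You correctly locate the crossover budget, but ``evaluating the sign on each side of this crossover'' is the entire content of the threshold comparison, and carrying it out contradicts the printed statement. Cross-multiplying (both denominators are positive) and cancelling the common term $\sigma_0^2 W_0/W_{\mathrm{pas}}$ gives
\begin{equation*}
R_1>R_0 \;\iff\; \sigma_0^2\left(A_{\mathrm{sum}}+\frac{W_0-W_{\mathrm{act}}}{W_{\mathrm{pas}}}\right)>\frac{W_0}{W_{\mathrm{pas}}}\left(\frac{A_{\mathrm{sum}}\sigma_{\mathrm{I}}^2\beta}{d_{\mathrm{IU}}^2}+\sigma_0^2\right) \;\iff\; W_0<\frac{A_{\mathrm{sum}}W_{\mathrm{pas}}\sigma_0^2-W_{\mathrm{act}}\sigma_0^2}{A_{\mathrm{sum}}\sigma_{\mathrm{I}}^2\beta/d_{\mathrm{IU}}^2},
\end{equation*}
i.e., the single-active configuration wins for budgets \emph{below} the threshold — the reverse of the condition in \eqref{opt_na_NLoS}. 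Two sanity checks confirm the ``$<$'' direction: as $W_0\to\infty$, $R_1/R_0\to\sigma_0^2/(A_{\mathrm{sum}}\sigma_{\mathrm{I}}^2\beta/d_{\mathrm{IU}}^2+\sigma_0^2)<1$, so all-passive must prevail at large budgets; and the paper's own Remark 5 says exactly this (large budget $\Rightarrow$ all passive, large $P_{\mathrm{I}}$ and/or small budget $\Rightarrow$ one active), whereas under ``$>$'' a larger $P_{\mathrm{I}}$ would raise the threshold and thereby \emph{disfavor} the active element, which is backwards. So the inequality in \eqref{opt_na_NLoS}, reproduced by the paper's proof behind the words ``it can be shown that,'' is a reversed-sign slip. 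Bottom line: your plan is the paper's plan and is sound, but as written it does not prove the statement — the one computation you leave undone is where the decision is made, and performing it establishes the lemma only with the inequality direction flipped.
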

\begin{proof}
First, when $N_{\mathrm{act}}>0$, it can be shown that
\begin{equation}
    C_{\mathrm{NL}}\leq C^*_{\mathrm{NL,hyb}}\triangleq\log _{2}\left(1+\frac{\left(A_{\mathrm{sum}}+{N}_{\mathrm{pas,max}}\right) \frac{P_{\mathrm{B}} \beta^{2}}{D_{\mathrm{BI}}^{2} d_{\mathrm{IU}}^{2}}}{A_{\mathrm{sum}} \sigma_{\mathrm{I}}^{2} \beta / d_{\mathrm{IU}}^{2}+\sigma_{0}^{2}}\right),\label{C_NLoS_hyb}
\end{equation}
where ${N}_{\mathrm{pas,max}}=\lfloor\frac{W_0-W_{\mathrm{act}}}{W_{\mathrm{pas}}}\rfloor$ and $N_{\mathrm{act}}=1$.
Second, when $N_{\mathrm{act}}=0$, we obtain that $A_{\mathrm{sum}}=0$ and the ergodic capacity achieved by passive IRS is given by
\begin{equation}
    C_{\mathrm{NL}}=C^*_{\mathrm{NL,pas}}\triangleq\log _{2}\left(1+\frac{W_0P_{\mathrm{B}} \beta^{2}}{W_{\mathrm{pas}} \sigma_{0}^{2} D_{\mathrm{BI}}^{2} d_{\mathrm{IU}}^{2}}\right). \label{C_NLoS_pas}
\end{equation}
Then, by comparing $C^*_{\mathrm{NL,hyb}}$ in \eqref{C_NLoS_hyb} and $C^*_{\mathrm{NL,pas}}$ in \eqref{C_NLoS_pas}, it can be shown that when $W_0> \frac{A_{\mathrm{sum}}W_{\mathrm{pas}}\sigma_0^2-W_{\mathrm{act}}\sigma_0^2}{A_{\mathrm{sum}}\sigma_{\mathrm{I}}^2\beta/d_{\mathrm{IU}}^2}$, we have $C^*_{\mathrm{NL,hyb}}>C^*_{\mathrm{NL,pas}}$. Based on the above, the optimal active/passive elements allocation under the Rayleigh fading channel model is given in \eqref{opt_na_NLoS}, thus completing the proof.
\end{proof}
\begin{remark}
\textbf{\emph{(Active/passive elements allocation under the Rayleigh fading channel model)}}
\emph{Lemma \ref{lem_opt_ea_NLoS} shows that under the Rayleigh fading channel model (or rich scattering environment), most of the deployment budget should be assigned to passive elements, while at most one active element needs to be deployed.
Specifically, when the amplification power is sufficiently small and/or the deployment budget is sufficiently large, all the deployment budget should be assigned to passive elements. This is because the power amplification gain provided by active elements cannot mitigate their amplification noise and/or the beamforming gain of passive elements (which prevails active-element power amplification).
On the other hand, when the amplification power is large and/or  the deployment budget is small, one active element is enough to achieve the amplification power gain because it has no active-element beamforming gain under the Rayleigh fading channels, thus making it desirable to assign most of the deployment budget to passive elements.
}
\end{remark}

\section{Simulation Results}\label{sec_sim}
Simulation results are presented in this section to evaluate the effectiveness of the proposed hybrid IRS architecture and active/passive elements allocation design. 
If not specified otherwise, the system setups are as follows.
We consider a two-dimensional (2D) Cartesian coordinate system, where the reference points of the BS, the hybrid IRS, and the worst-case user are located at $(0,0)$ m, $(60,0)$ m, and $(60,20)$ m, respectively. 

The deployment costs of each active/passive element are set as $W_{\mathrm{act}} = 5$ and $W_{\mathrm{pas}} = 1$, respectively, by taking into account the fact that the active element in general incurs higher static operation power and hardware cost.
For each active element, the feasible region of the amplification factor is in the range of $[\alpha_{\min},\alpha_{\max}]=[0,14]$ dB \cite{8403249}.
The system operates at a carrier frequency of 6 GHz with the wavelength of $\lambda=0.05 \mathrm{~m}$. 
We consider the same Rician fading factor for the BS-IRS and IRS-user channels, i.e., $K_1=K_2=K$.
Other parameters are set as $d_{\mathrm{I}}=\frac{\lambda}{4}$, $\beta=-30$ dB, $\sigma_{\mathrm{I}}=\sigma_0=-80$ dBm, $P_{\mathrm{B}}=15$ dBm, and $P_{\mathrm{I}} = 5$ dBm \cite{9377648}.
\subsection{Accuracy of Ergodic Capacity Approximation}
\begin{figure}[t]
\centerline{\includegraphics[width=2.7in]{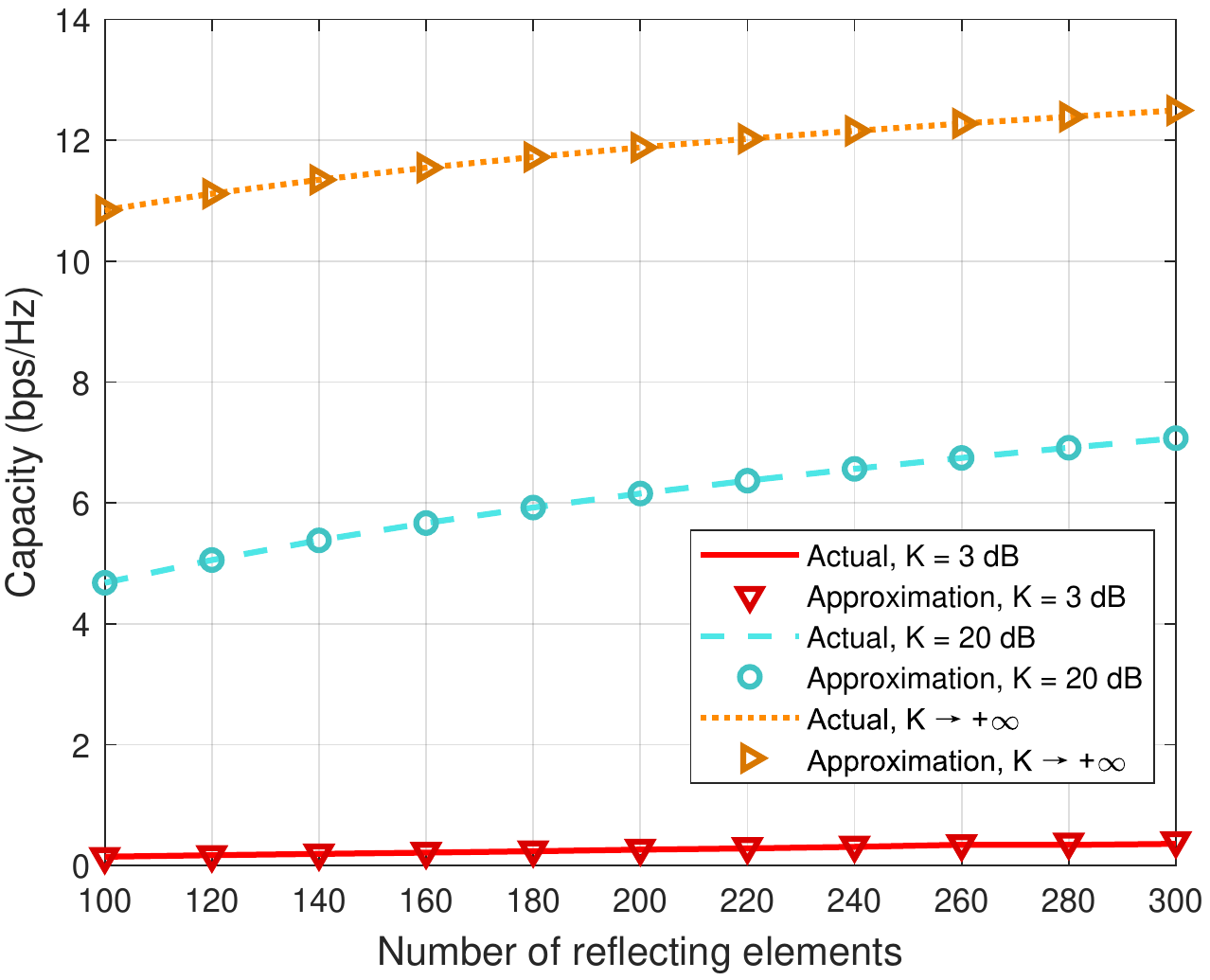}}
\caption{Accuracy of the ergodic capacity approximation given in \eqref{sig_approx}.}\label{eval_approx}
\end{figure}

First, we evaluate in Fig. \ref{eval_approx} the accuracy of the ergodic capacity approximation presented in Lemma \ref{lem_C_approx} (see \eqref{sig_approx}). 
By setting equal active/passive elements allocation, i.e., $N_{\mathrm{act}}=N_{\mathrm{pas}}$, the actual ergodic capacity in \eqref{ergo_capa} is obtained based on 1000 independent channel realizations under different Rician factors of $K\in\{0$ dB, $10$ dB, $+\infty\}$ ($K\to +\infty$ corresponds to the LoS channels).
It is observed that the approximated ergodic capacity in \eqref{sig_approx} is close to the exact capacity in \eqref{ergo_capa} under different Rician factors. This is because when the number of reflecting elements is large, the variance of random channels is averaged due to the law of large numbers.
\subsection{Effect of Active/passive Elements Allocation on Ergodic Capacity}
\begin{figure}[ht]
\centerline{\includegraphics[width=2.7in]{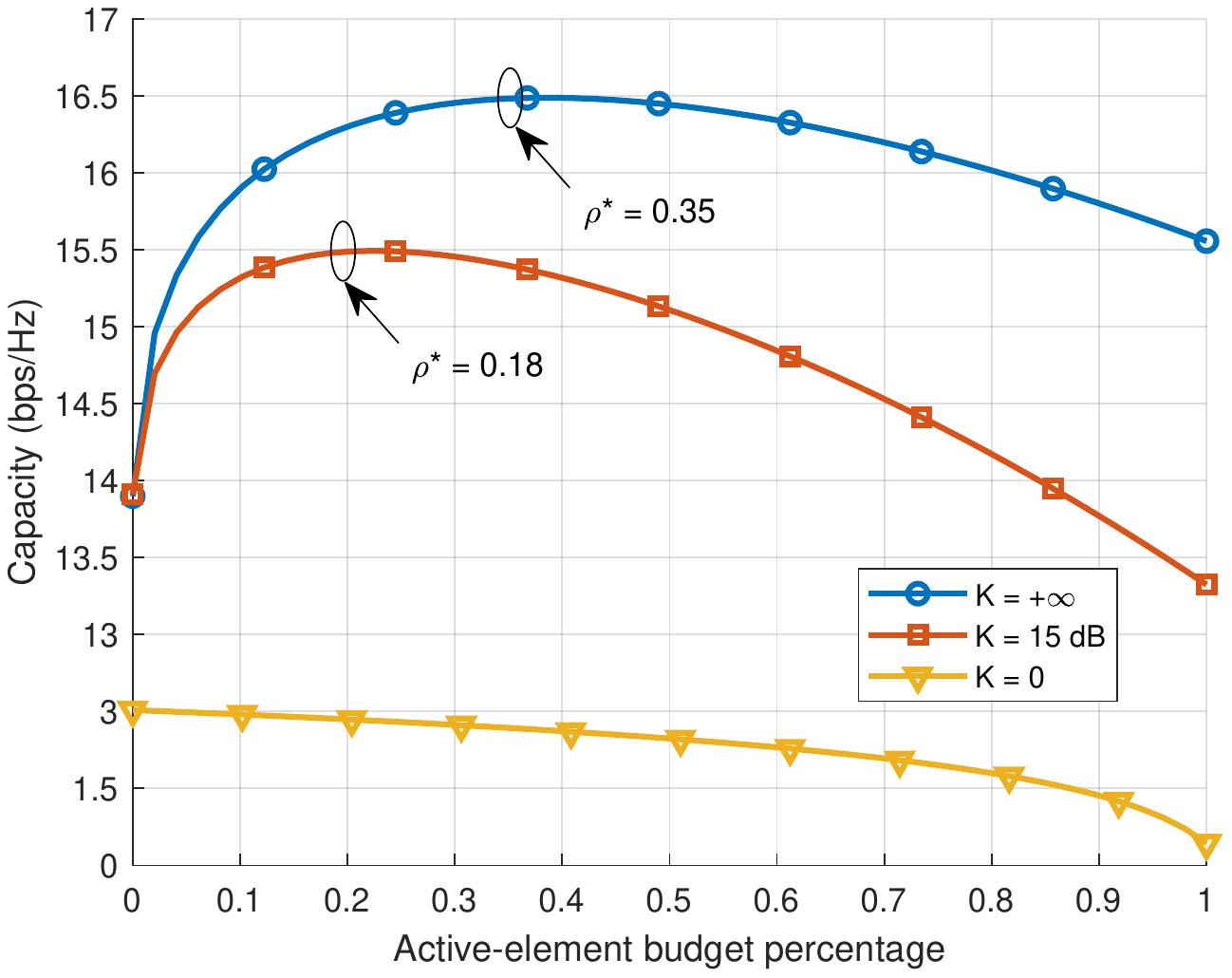}}
\caption{Capacity performance versus budget percentage of active elements.}\label{rate_ratio}
\end{figure}

Next, we investigate the effect of the active/passive elements allocation at the hybrid IRS on the capacity performance.
We set the total deployment budget $W_0=3000$, the amplification power $P_{\mathrm{I}}=15$ dBm, and denote $\rho$ as the percentage of the budget assigned to active elements ($0\leq \rho\leq 1$).
In Fig. \ref{rate_ratio}, we plot the ergodic capacity versus $\rho$ with different Rician factors $K\in\{0$, $15$ dB, $+\infty\}$.
It is observed that for the cases of $K\to\infty$ and $K=15$ dB (with LoS channel components), the ergodic capacity first increases with $\rho$, then decreases after it exceeds a threshold (i.e., $\rho^*=0.35$ for $K\to\infty$, and $\rho^*=0.18$ for $K=15$ dB). This shows that for the general Rician fading channels, the active/passive elements allocation has a significant effect on the capacity maximization.
This is expected because if the budget assigned to active elements is too small, the active-element power amplification gain is not fully exploited due to the small number of active elements. However, if the budget assigned to active elements is too large, the performance bottleneck of active elements becomes the limited amplification power that cannot support all active elements to operate in the amplification mode, thus making it desirable to assign partial budget to passive elements for achieving higher beamforming gain.
Moreover, the optimal budget assigned to active elements increases with the Rician factor because the active elements can achieve both the power amplification gain and beamforming gain over the LoS paths but have no beamforming gain over NLoS paths.
Last, when $K=0$ (corresponding to the Rayleigh fading channels), the total deployment budget should be assigned to passive elements to maximize the ergodic capacity since the beamforming gain of passive elements is larger than the amplification gain of active elements.
\subsection{Effect of Total Deployment Budget}
\begin{figure}[t] \centering    
{\subfigure[{LoS channels with $K\to +\infty$.}] {
\label{fig_3_LoS}
\includegraphics[width=2.7in]{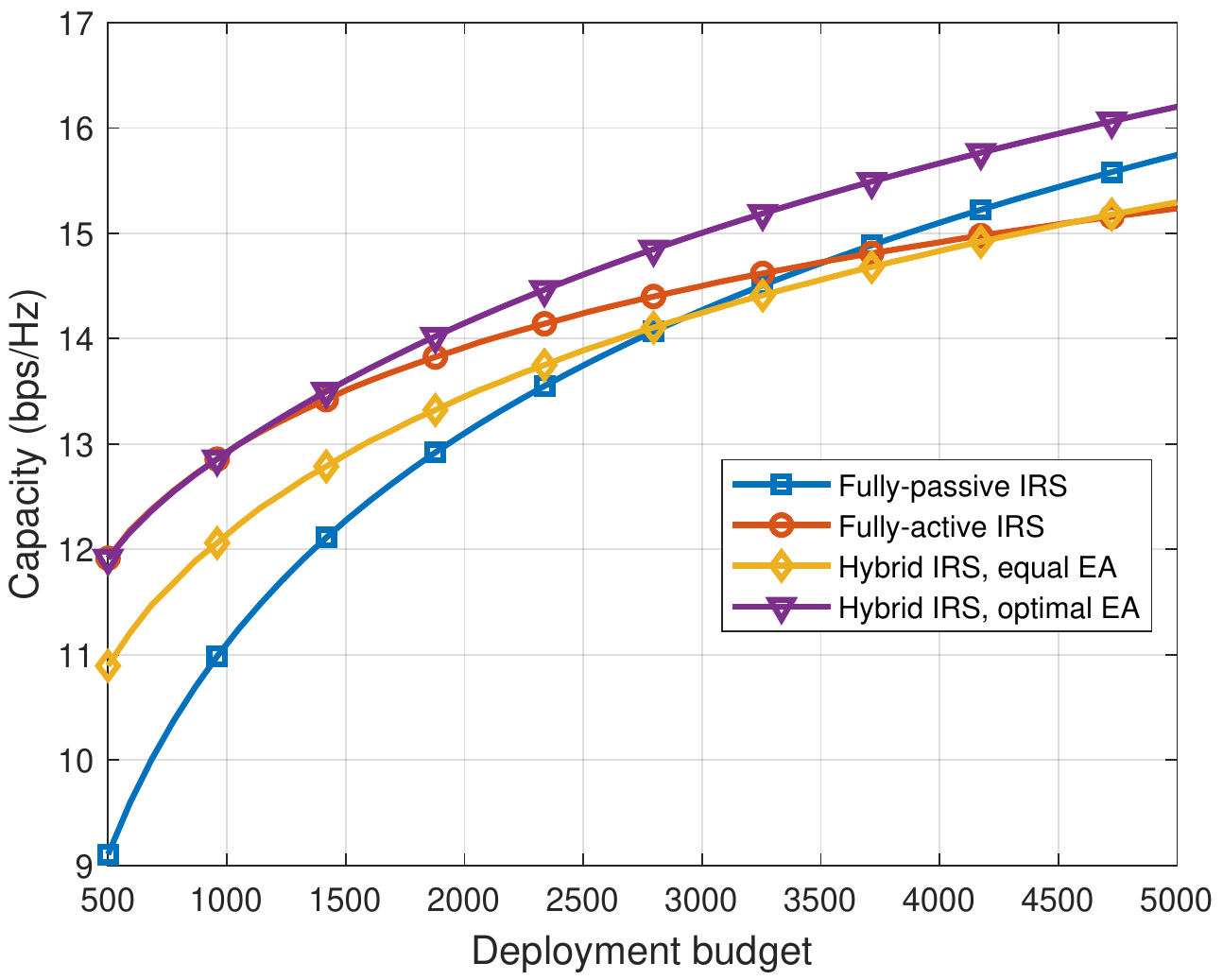}  
}}     
{\subfigure[{Rician fading channels with $K=10$ dB.}] {\label{fig_3_Rician}
\includegraphics[width=2.7in]{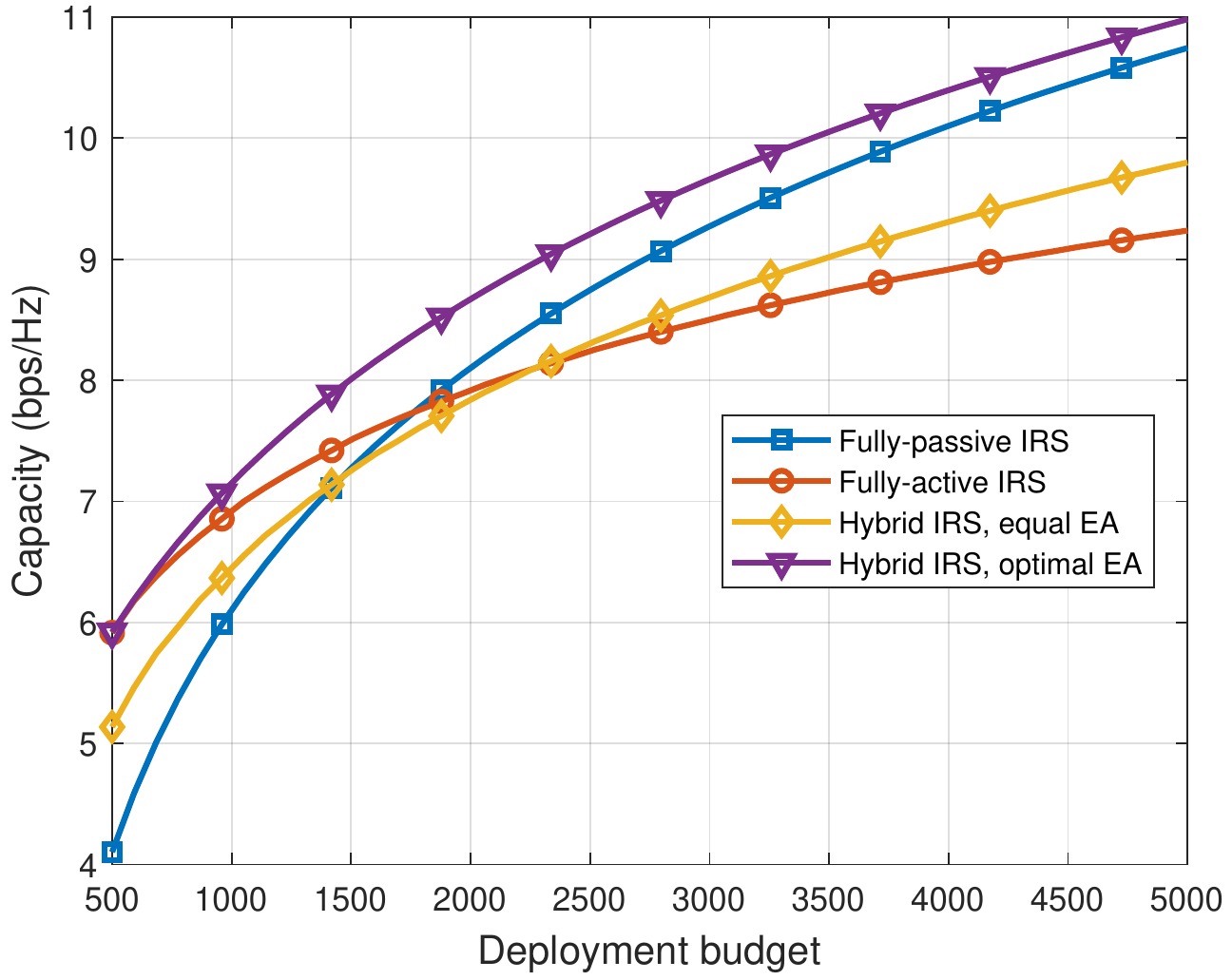}  
}}
{\caption{The ergodic capacity achieved by hybrid IRS, fully-active IRS, and fully-passive IRS versus deployment budget.}}
\end{figure}

Next, we compare the proposed optimal active/passive elements allocation (EA) design against three benchmarks: 1) Fully-active IRS with all the deployment budget assigned to active elements; 2) Fully-passive IRS with all the deployment budget assigned to active elements; 3) Hybrid IRS under equal EA for which the deployment budget is equally assigned to active and passive elements. We apply the optimal IRS beamforming design in \eqref{opt_phase_1}, \eqref{opt_phase_2}, and \eqref{opt_a_n} for the proposed hybrid IRS architecture and the benchmarks.

Figs. \ref{fig_3_LoS} and \ref{fig_3_Rician} show the ergodic capacity of the worst-case user versus the total deployment budget under the LoS and Rician fading channel models, respectively.
First, it is observed that the ergodic capacity achieved by the hybrid IRS architecture with optimal active/passive elements allocation is always larger than or equal to that achieved by the fully-active or fully-passive IRSs. This is expected because the hybrid IRS provides an extra degree of freedom for elements allocation to balance the trade-off between the power amplification and beamforming gains.
Second, the hybrid IRS with the optimal active/passive elements allocation reduces to the fully-active IRS when the deployment budget is sufficiently small (i.e., $W_0<1250$ for the LoS channels, and $W_0<600$ for the Rician fading channels with $K=10$ dB). 
This is because the active elements has a higher power amplification gain as compared to the passive-element beamforming gain when the budget is small.
Third, one can observe that the hybrid IRS with the optimal active/passive elements allocation outperforms that with equal elements allocation, which shows the effectiveness of elements allocation optimization for the hybrid IRS.
Last, it is also observed that given the same deployment budget, the achievable capacity under the LoS channels is higher than that under Rician fading channels since the IRS phase shifts are designed based on the LoS channel components or statistical CSI only.

\begin{figure}[t] \centering    
{\subfigure[{LoS channels with $K\to +\infty$.}] {
\label{fig_4_LoS}
\includegraphics[width=2.7in]{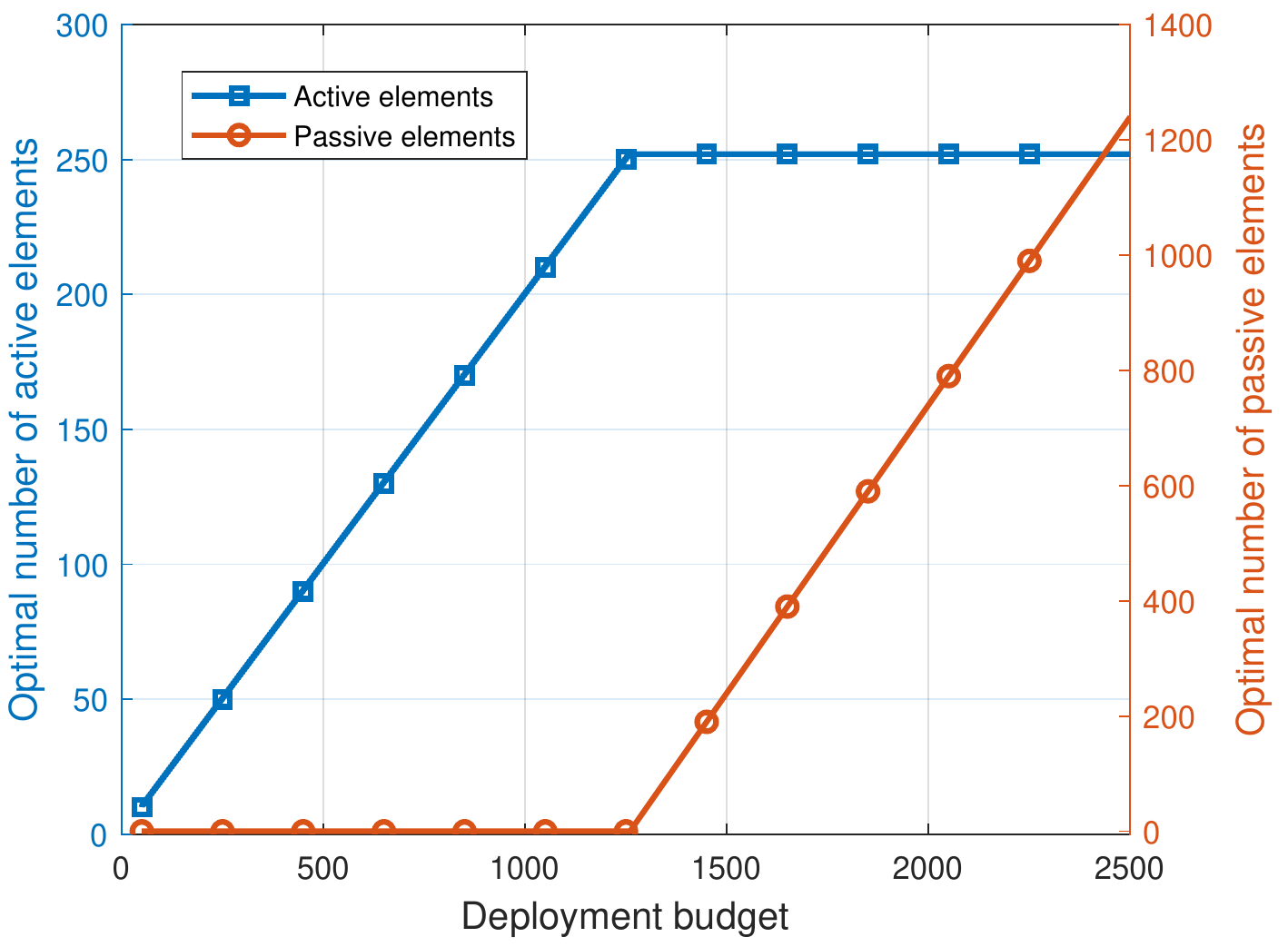}  
}}     
{\subfigure[{Rician fading channels with $K=10$ dB.}] {\label{fig_4_Rician}
\includegraphics[width=2.7in]{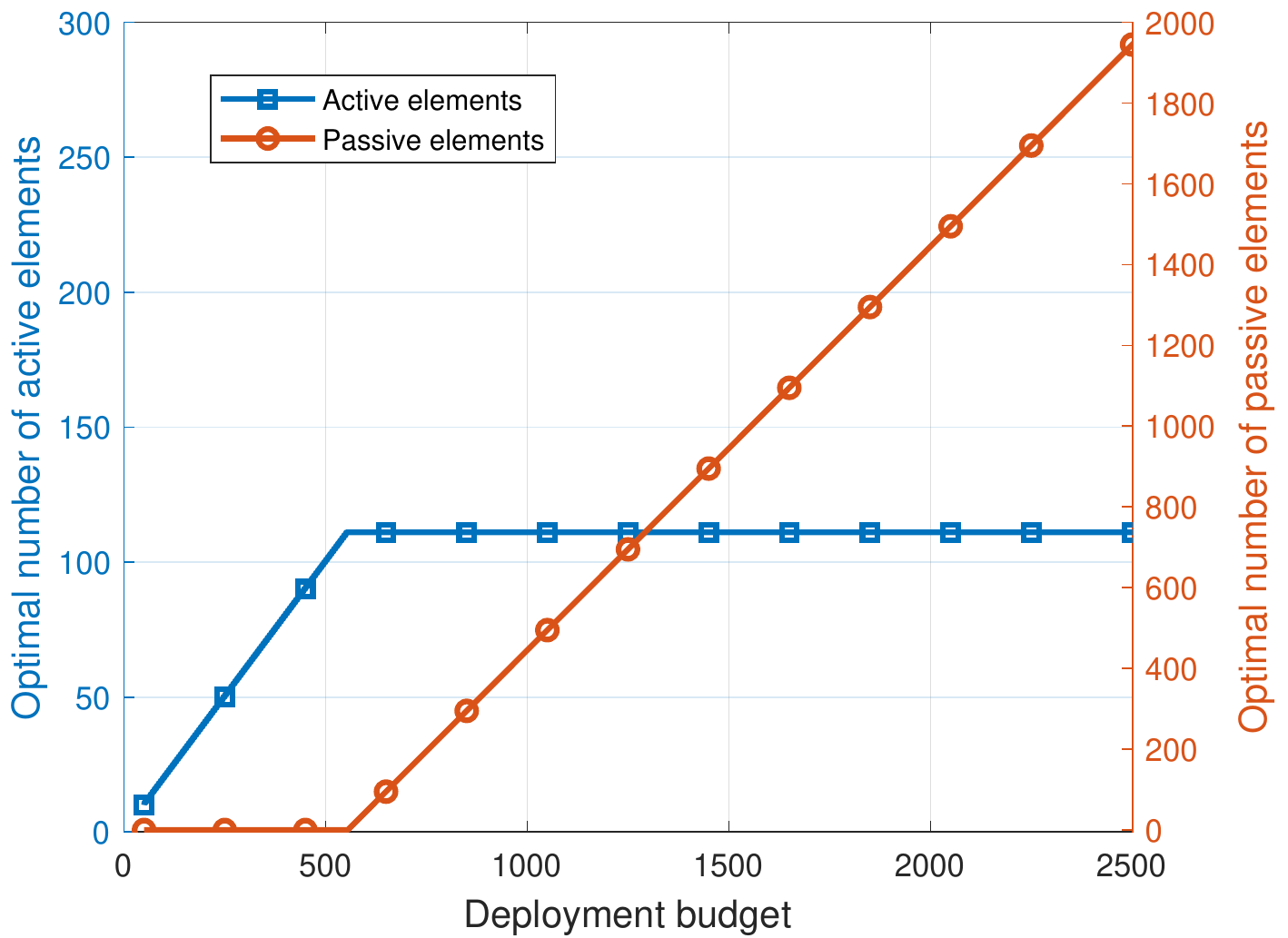}     
}}
{\caption{The optimal numbers of active and passive elements versus total deployment budget.}}
\end{figure}

Figs. \ref{fig_4_LoS} and \ref{fig_4_Rician} present the optimal numbers of active and passive elements for maximizing the ergodic capacity of the worst-case user under different total deployment budgets with $W_{\mathrm{act}}=5$ and $W_{\mathrm{pas}}=1$.
It is observed that as $K$ increases, the deployment budget is first assigned to active elements only and then assigned to more passive elements after it exceeds a threshold.
In addition, we observe that given the same deployment budget, 
the optimal number of active elements for the LoS channels, i.e., $N^*_{\mathrm{act}}=250$ is higher than that for the Rician fading channels with $K=10$ dB, i.e., $N^*_{\mathrm{act}}=120$.
\subsection{Effect of Rician Factor}
In Figs. \ref{fig5_capa_vs_K} and \ref{fig5_ea_vs_K}, we investigate the effect of the Rician factor of the BS-IRS and IRS-user channels on the achievable capacity and active/passive elements allocation. The IRS beamforming design and the active-element amplification factors are given by \eqref{opt_phase_1}, \eqref{opt_phase_2}, and \eqref{opt_a_n}, respectively.

\begin{figure}[t] \centering    
{\subfigure[{Ergodic capacity versus Rician factor.}] {
\label{fig5_capa_vs_K}
\includegraphics[width=2.63in]{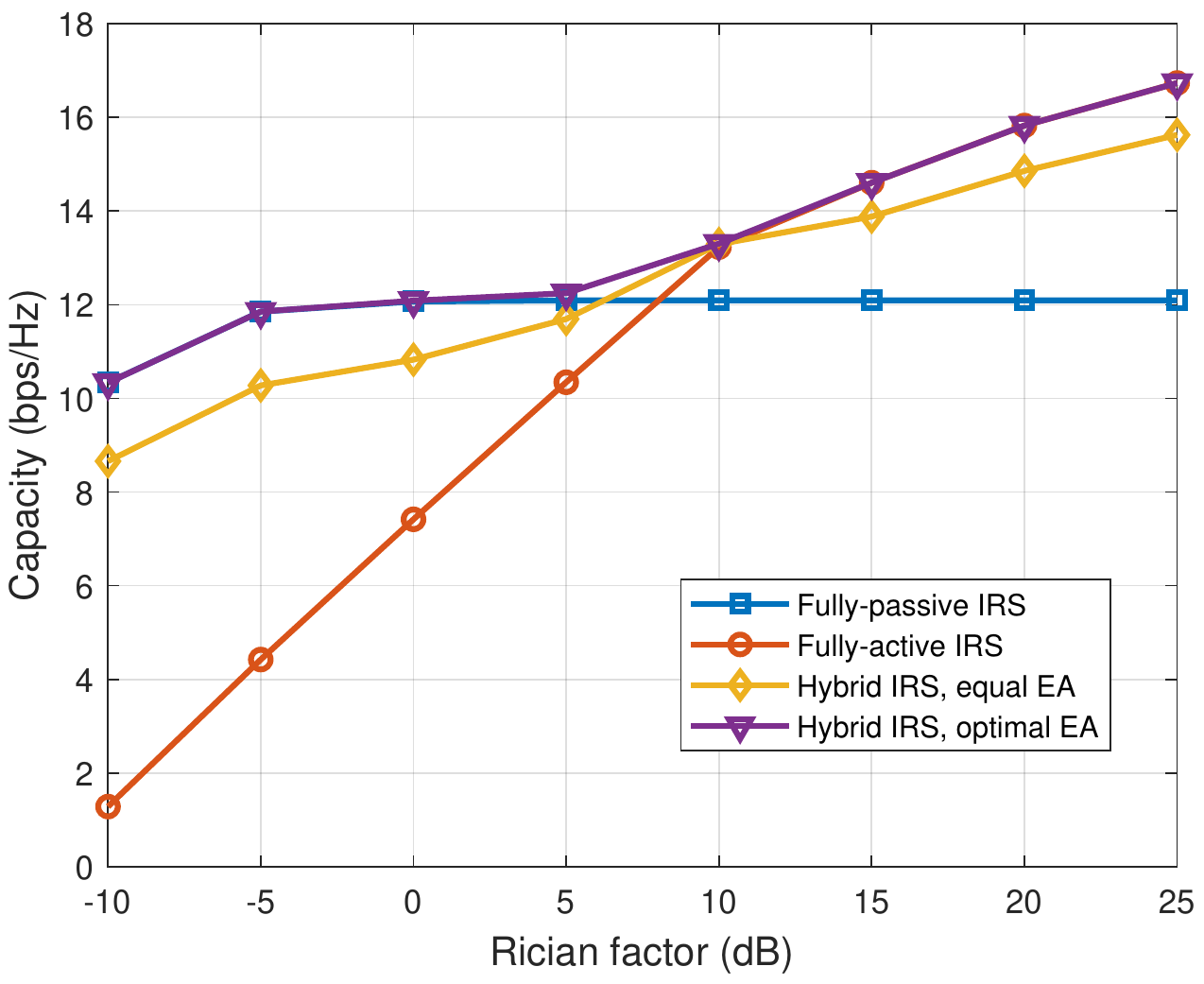}  
}}     
{\subfigure[Optimal number of active and passive elements versus Rician factor.] {\label{fig5_ea_vs_K}
\includegraphics[width=2.93in]{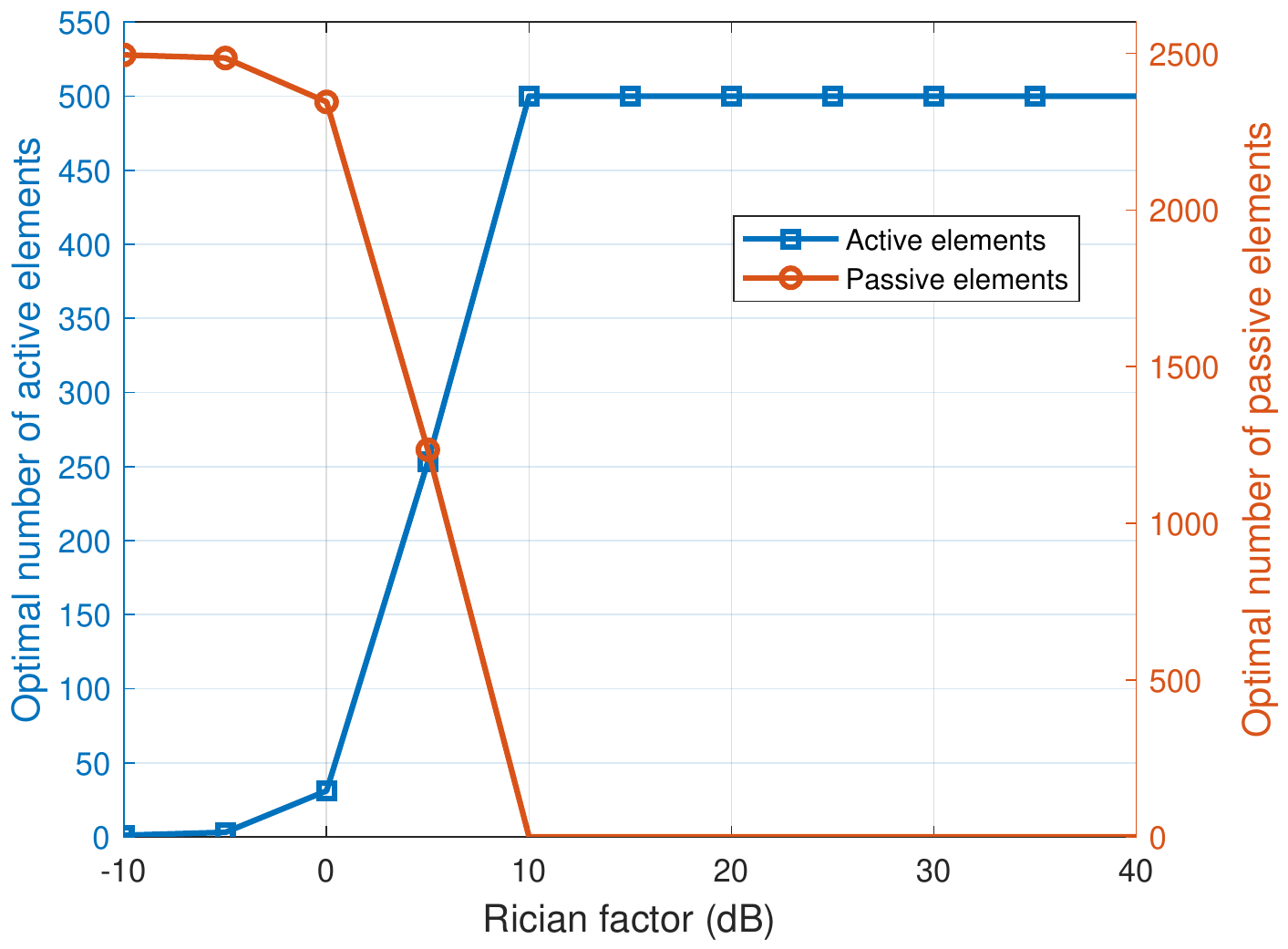}     
}}
{\caption{Effect of the Rician factor on ergodic capacity and active/passive elements allocation.}}
\end{figure}

Specifically, Fig. \ref{fig5_capa_vs_K} presents the ergodic capacity of the worst-case user under different Rician factors. It is observed that the proposed hybrid IRS architecture with the optimal active/passive elements allocation outperforms other benchmarks under different channel conditions, while the active IRS achieves a higher capacity than the passive IRS only when the channels are LoS-dominant (i.e., Rician factor $K$ is large). 
Fig. \ref{fig5_ea_vs_K} plots the optimal numbers of active and passive elements of the hybrid IRS under different Rician factors. 
It is observed that with an increasing Rician factor, the optimal number of active elements first increases and then keeps unchanged after it exceeds a threshold, where the NLoS channel components are negligible as compared to the LoS channel components. 
The reason is that for the NLoS paths, the active elements have no beamforming gain but can achieve power amplification gain due to the high amplification power, while for the LoS paths, active elements can achieve both the power amplification and beamforming gains with multiple elements. 
Therefore, as $K$ increases, more deployment budget should be assigned to the active elements so as to achieve a higher beamforming gain of the active elements.

\subsection{Effect of Amplification Power}
Moreover, we evaluate the effect of the amplification power of active elements on the ergodic capacity and active/passive elements allocation of the proposed hybrid IRS architecture under the general Rician fading channel model with $K=15$ dB.

\begin{figure}[t] \centering    
{\subfigure[{Ergodic capacity versus amplification power.}] {
\label{fig6_capa_vs_P_I}
\includegraphics[width=2.63in]{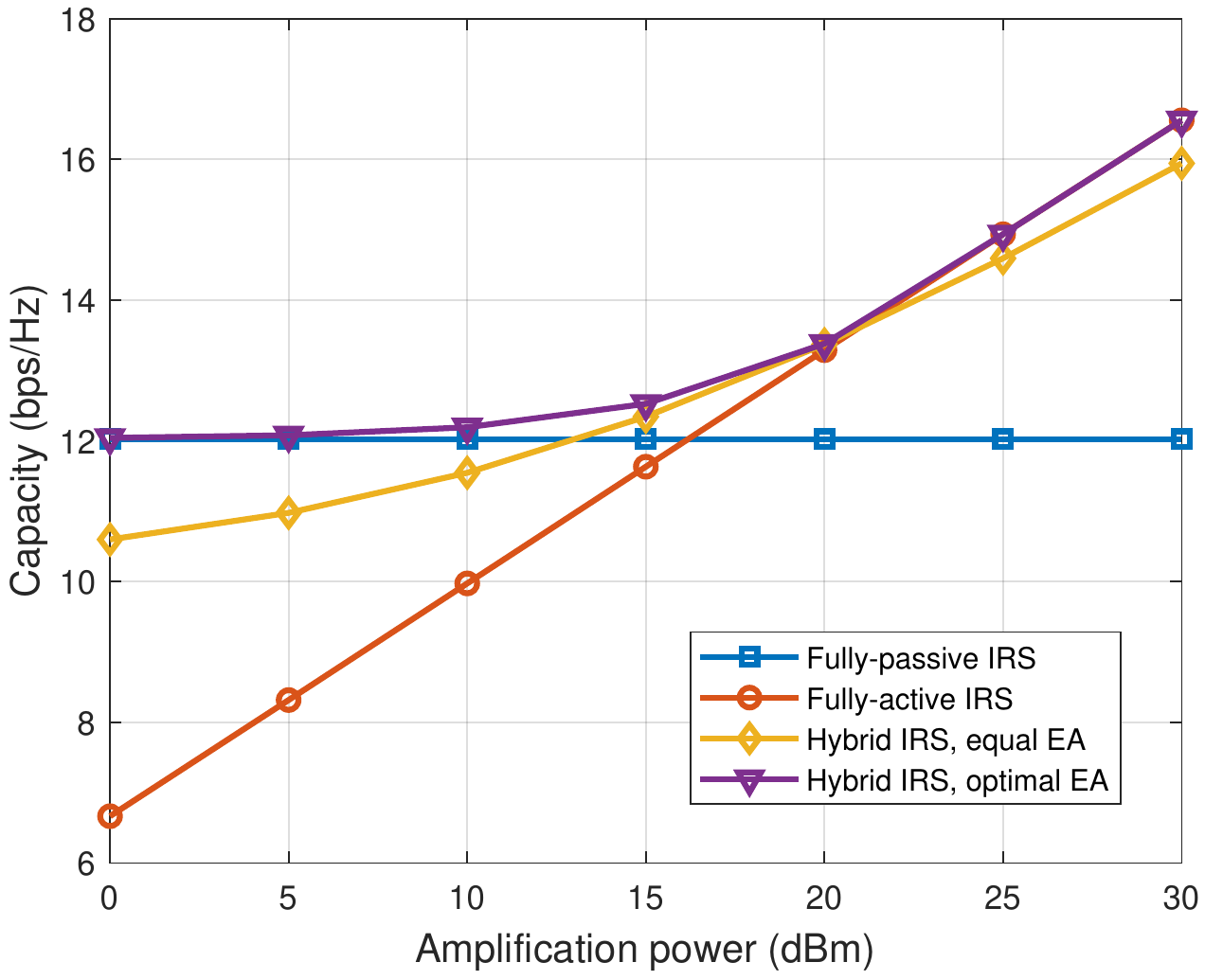}  
}}     
{\subfigure[Optimal numbers of active and passive elements versus amplification power.] {\label{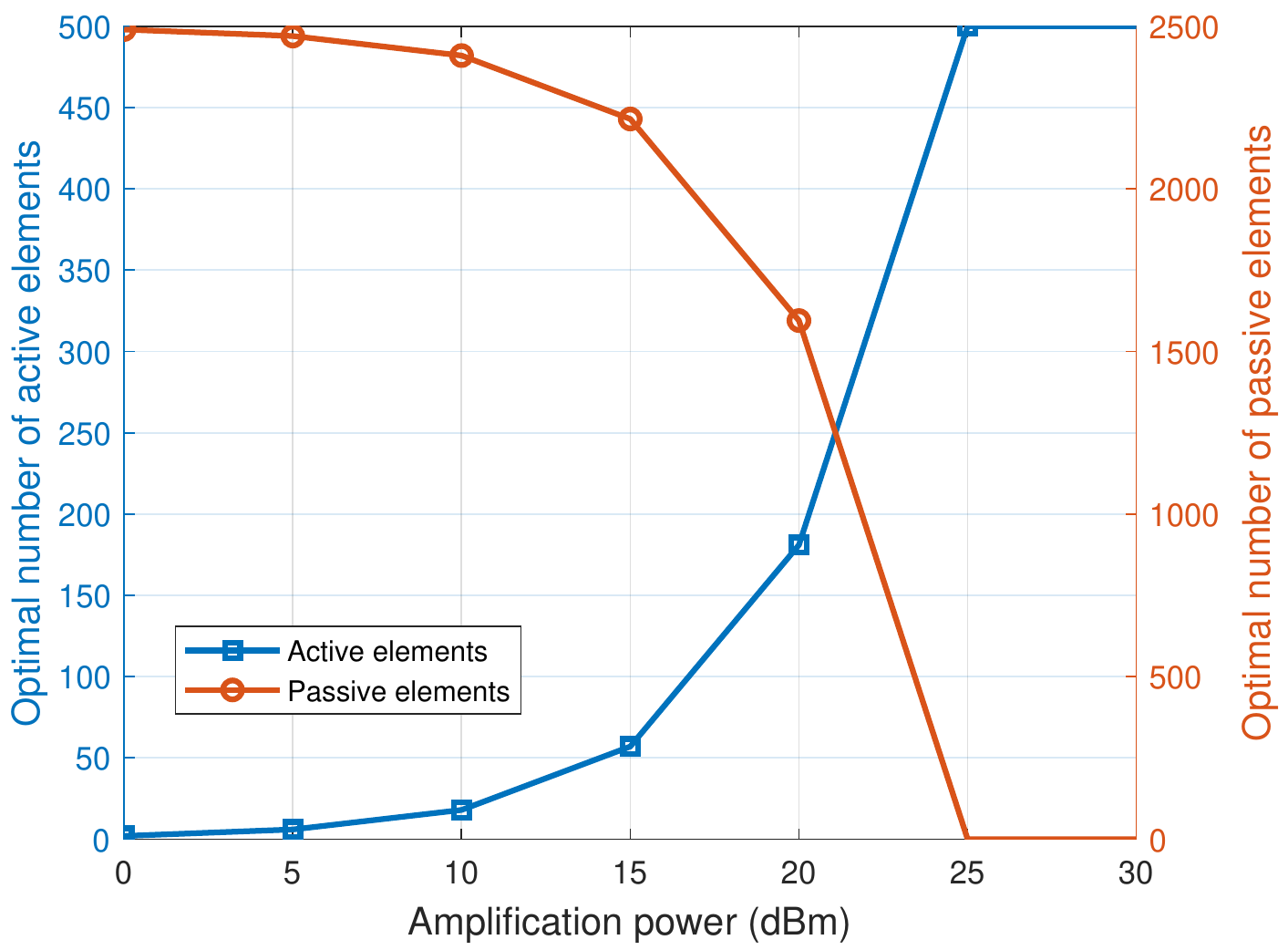}
\includegraphics[width=2.93in]{fig6_ea_vs_P_I.pdf}     
}}
{\caption{Effect of the amplification power on ergodic capacity and elements allocation.}}
\end{figure}

In Fig. \ref{fig6_capa_vs_P_I}, we plot the ergodic capacity versus the amplification power $P_{\mathrm{I}}$ for different schemes.
First, it is observed that the ergodic capacity of the worst-case user aided by the fully-active or hybrid IRSs monotonically increases with the amplification power. Second, we observe that the hybrid IRS with the optimal active/passive elements allocation achieves a much higher capacity than the fully-active IRS when $P_{\mathrm{I}}$ is small and significantly outperforms the fully-passive IRS when $P_{\mathrm{I}}$ is large. This shows the flexibility and advantages of the hybrid IRS under different amplification powers.

In Fig. \ref{fig6_ea_vs_P_I.pdf}, we plot the optimal numbers of active and passive elements for the hybrid IRS architecture versus the amplification power. 
One can observe that the optimal number of active elements increases with the amplification power, which is expected because a higher amplification power can support more active elements to operate in the amplification mode with optimal amplification factors. 
\subsection{Effect of Active/Passive-element Deployment Cost Ratio}
\begin{figure}[t] \centering    
{\subfigure[{Ergodic capacity versus active/passive-element cost ratio.}] {
\label{fig7_capa_vs_c_ratio}
\includegraphics[width=2.63in]{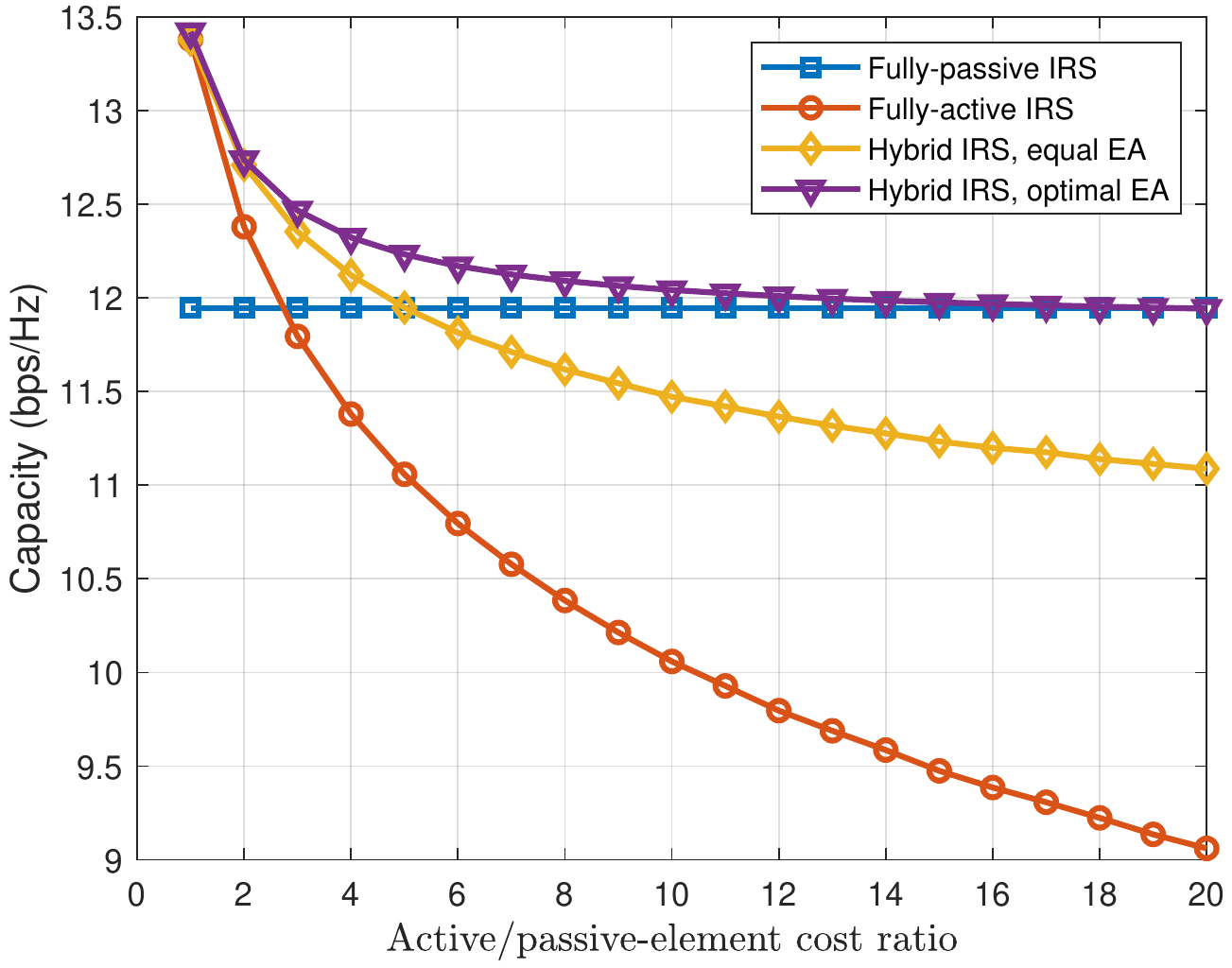}  
}}     
{\subfigure[Optimal numbers of active and passive elements versus deployment cost ratio.] {\label{fig7_ea_vs_c_ratio}
\includegraphics[width=2.93in]{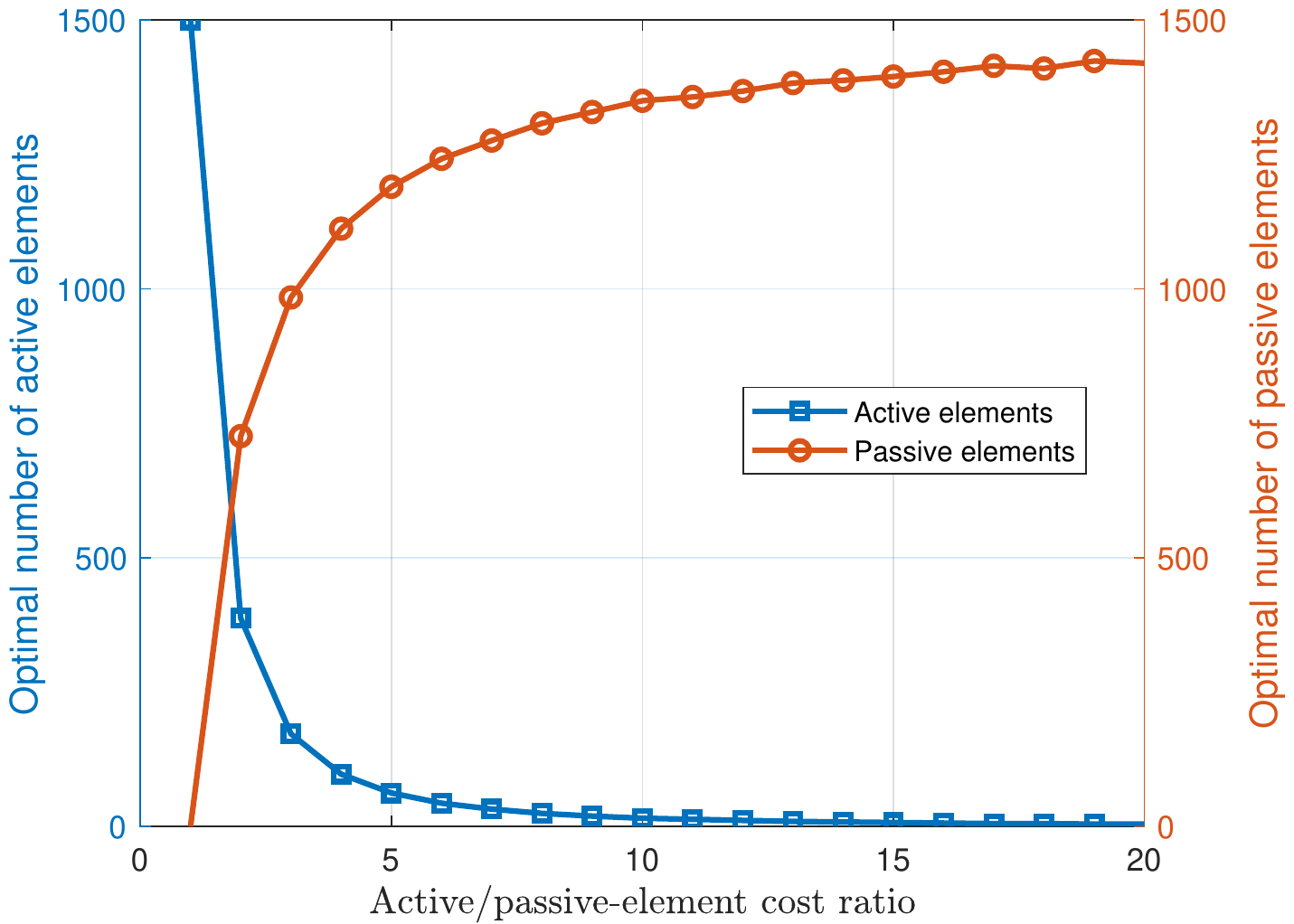}     
}}
{\caption{Effect of the deployment cost ratio on ergodic capacity and elements allocation.}}
\end{figure}
In Fig. \ref{fig7_capa_vs_c_ratio}, we compare the ergodic capacity of the worst-case user versus different ratios of active-over-passive deployment cost, i.e., $W_{\mathrm{act}}/W_{\mathrm{pas}}$, where we set $W_{\mathrm{pas}}=1$, $K=15$ dB, $W_0=1500$, and $P_{\mathrm{B}}=10$ dBm.
First, the hybrid IRS with the optimal active/passive elements allocation has the highest capacity as compared to other benchmarks. Besides, one can observe that as $W_{\mathrm{act}}$ increases, the ergodic capacity decreases when $W_{\mathrm{act}}$ is small, then remains unchanged when $W_{\mathrm{act}}$ is sufficiently large, which can be explained as follows.
When the deployment budget is small, a larger $W_{\mathrm{act}}$ means that a smaller number of active elements can be deployed, thus resulting in reduced ergodic capacity.
When $W_{\mathrm{act}}$ exceeds a threshold, all the deployment budget is assigned to passive elements for maximizing the ergodic capacity. As such, increasing $W_{\mathrm{act}}$ will not change the ergodic capacity and elements allocation.
\section{Conclusions}\label{sec_conclu}
In this paper, we proposed a new hybrid active-passive IRS architecture and studied its optimal active/passive elements allocation in a hybrid IRS aided wireless system based on the statistical CSI only.
Specifically, under the general Rician fading channel model, we formulated an optimization problem to maximize the ergodic capacity of the worst-case user, by jointly optimizing the active/passive elements allocation, their phase shifts, and the amplification factors of active elements, subject to various practical constraints on the active-element amplification factor and amplification power consumption, as well as the total active and passive elements deployment budget.
To solve this problem, we first approximated the ergodic capacity in a simpler form and then proposed an efficient algorithm to obtain the optimal hybrid IRS beamforming and active/passive elements allocation.
Moreover, it was shown that when all channels are of LoS, only active elements need to be deployed when the total deployment budget is sufficiently small, while both active and passive elements should be deployed with a decreasing number ratio when the budget increases and exceeds a certain threshold.
Last, numerical results demonstrated the performance gains achieved by the proposed hybrid IRS architecture with the optimal active/passive elements allocation against the benchmarks of the fully-active and fully-passive IRSs, as well as the hybrid IRS with equal active/passive elements allocation. This validated that the hybrid IRS can flexibly balance the trade-off between the peculiar power amplification gain of active IRS and superior beamforming gain of passive IRS.

\appendices
\section{Proof of Lemma \ref{lem_C_approx}}\label{proof_lem1}
First, based on the Lemma 1 of \cite{6816003}, we obtain the following approximation for the ergodic capacity. 
\begin{align}
    C&=\mathbb{E}\left\{\log _{2}\left(1+\frac{P_{\mathrm{B}}|(\mathbf{h}_{\mathrm{IU}}^{\mathrm{act}})^H\mathbf{\Psi}^{\mathrm{act}}\mathbf{h}_{\mathrm{BI}}^{\mathrm{act}}+(\mathbf{h}_{\mathrm{IU}}^{\mathrm{pas}})^H\mathbf{\Psi}^{\mathrm{pas}}\mathbf{h}_{\mathrm{BI}}^{\mathrm{pas}}|^2}{\sigma_{\mathrm{I}}^2\|(\mathbf{h}_{\mathrm{IU}}^{\mathrm{act}})^H\mathbf{\Psi}^{\mathrm{act}}\|^2+\sigma_0^2}\right)\right\} \\
    &\approx\log_2\left(1+\frac{\mathbb{E}\left\{P_{\mathrm{B}}|(\mathbf{h}_{\mathrm{IU}}^{\mathrm{act}})^H\mathbf{\Psi}^{\mathrm{act}}\mathbf{h}_{\mathrm{BI}}^{\mathrm{act}}+(\mathbf{h}_{\mathrm{IU}}^{\mathrm{pas}})^H\mathbf{\Psi}^{\mathrm{pas}}\mathbf{h}_{\mathrm{BI}}^{\mathrm{pas}}|^2\right\}}{\mathbb{E}\left\{\sigma_{\mathrm{I}}^2\|(\mathbf{h}_{\mathrm{IU}}^{\mathrm{act}})^H\mathbf{\Psi}^{\mathrm{act}}\|^2+\sigma_0^2\right\}}\right).\label{C_approx}
\end{align}
Then, we focus on the derivation of the desired signals at the worst-case user, which can be expressed as
\begin{align}
    &\mathbb{E}\left\{P_{\mathrm{B}}|(\mathbf{h}_{\mathrm{IU}}^{\mathrm{act}})^H\mathbf{\Psi}^{\mathrm{act}}\mathbf{h}_{\mathrm{BI}}^{\mathrm{act}}+(\mathbf{h}_{\mathrm{IU}}^{\mathrm{pas}})^H\mathbf{\Psi}^{\mathrm{pas}}\mathbf{h}_{\mathrm{BI}}^{\mathrm{pas}}|^2\right\}\\
    &=\frac{P_{\mathrm{B}}}{(K_1+1)(K_2+1)}\mathbb{E}\Big\{\Big|\underbrace{\sqrt{K_1K_2}(\bar{\mathbf{h}}_{\mathrm{IU}}^{\mathrm{act}})^H\mathbf{\Psi}^{\mathrm{act}}\bar{\mathbf{h}}_{\mathrm{BI}}^{\mathrm{act}}}_{x_{1}}+\underbrace{\sqrt{K_1}(\tilde{\mathbf{h}}_{\mathrm{IU}}^{\mathrm{act}})^H\mathbf{\Psi}^{\mathrm{act}}\bar{\mathbf{h}}_{\mathrm{BI}}^{\mathrm{act}}}_{x_{2}}\nonumber\\
    &+\underbrace{\sqrt{K_2}(\bar{\mathbf{h}}_{\mathrm{IU}}^{\mathrm{act}})^H\mathbf{\Psi}^{\mathrm{act}}\tilde{\mathbf{h}}_{\mathrm{BI}}^{\mathrm{act}}}_{x_{3}}+\underbrace{(\tilde{\mathbf{h}}_{\mathrm{IU}}^{\mathrm{act}})^H\mathbf{\Psi}^{\mathrm{act}}\tilde{\mathbf{h}}_{\mathrm{BI}}^{\mathrm{act}}}_{x_{4}}+\underbrace{\sqrt{K_1K_2}(\bar{\mathbf{h}}_{\mathrm{IU}}^{\mathrm{pas}})^H\mathbf{\Psi}^{\mathrm{pas}}\bar{\mathbf{h}}_{\mathrm{BI}}^{\mathrm{pas}}}_{x_{5}}\\
    &+\underbrace{\sqrt{K_1}(\tilde{\mathbf{h}}_{\mathrm{IU}}^{\mathrm{pas}})^H\mathbf{\Psi}^{\mathrm{pas}}\bar{\mathbf{h}}_{\mathrm{BI}}^{\mathrm{pas}}}_{x_{6}}+\underbrace{\sqrt{K_2}(\bar{\mathbf{h}}_{\mathrm{IU}}^{\mathrm{pas}})^H\mathbf{\Psi}^{\mathrm{pas}}\tilde{\mathbf{h}}_{\mathrm{BI}}^{\mathrm{pas}}}_{x_{7}}+\underbrace{(\tilde{\mathbf{h}}_{\mathrm{IU}}^{\mathrm{pas}})^H\mathbf{\Psi}^{\mathrm{pas}}\tilde{\mathbf{h}}_{\mathrm{BI}}^{\mathrm{pas}}}_{x_{8}}\Big|^2\Big\}\nonumber\\
    &=\frac{P_{\mathrm{B}}}{(K_1\!+\!1)(K_2\!+\!1)}\Big(\mathbb{E}\left\{\left|x_{1}\!+\!x_{5}\right|^2\right\}\!+\!\mathbb{E}\Big\{\left|x_{2}\right|^2\!+\!\left|x_{3}\right|^2\!+\!\left|x_{4}\right|^2\!+\!\left|x_{6}\right|^2\!+\!\left|x_{7}\right|^2\!+\!\left|x_{8}\right|^2\Big\} \Big)\\
    &=x_{\mathrm{L}}+x_{\mathrm{NL,act}}+x_{\mathrm{NL,pas}},\label{lem_C_approx_1}
\end{align}
where $x_{\mathrm{L}}$, $x_{\mathrm{NL,act}}$, and $x_{\mathrm{NL,pas}}$ are defined in \eqref{x_1}, \eqref{x_2}, and \eqref{x_3}, respectively.
Next, the noise introduced by active reflecting elements and that at the receiver can be expressed as
\begin{align}
    &\mathbb{E}\left\{\sigma_{\rm I}^2\left\|({\mathbf{h}}_{\mathrm{IU}}^{\mathrm{act}})^H\mathbf{\Psi}^{\mathrm{act}}\right\|^2+\sigma_0^2\right\}\\
    &=\mathbb{E}\left\{\sigma_{\rm I}^2\left\|\sqrt{\frac{K_2}{K_2+1}}(\bar{\mathbf{h}}_{\mathrm{IU}}^{\mathrm{act}})^H\mathbf{\Psi}^{\mathrm{act}}+\sqrt{\frac{1}{K_2+1}}(\tilde{\mathbf{h}}_{\mathrm{IU}}^{\mathrm{act}})^H\mathbf{\Psi}^{\mathrm{act}}\right\|^2\right\}+\sigma_0^2\\
    &=\underbrace{\frac{K_2\sigma_{\rm I}^2}{K_2+1}\|(\bar{\mathbf{h}}_{\mathrm{IU}}^{\mathrm{act}})^H\mathbf{\Psi}^{\mathrm{act}}\|^2}_{z_{\mathrm{L,act}}}+\underbrace{\frac{\sigma_{\rm I}^2}{K_2+1}\|(\tilde{\mathbf{h}}_{\mathrm{IU}}^{\mathrm{act}})^H\mathbf{\Psi}^{\mathrm{act}}\|^2}_{z_{\mathrm{NL,act}}}+\sigma_0^2.\label{noise_approx}
\end{align}
Last, the proof is completed by substituting \eqref{lem_C_approx_1} and \eqref{noise_approx} into \eqref{C_approx}.
\section{Proof of Lemma \ref{lem1}}\label{proof_lem2}
First, given favorable amplification power condition in \eqref{cons_pb}, it can be verified that at the optimal solution to problem (P3), the constraint \eqref{cstr_power_HI_decomp_2} is always active, i.e., $\sum_{n=1}^{N_{\mathrm{act}}}\alpha^2_{n}=A_{\mathrm{sum}}$. Then, by substituting $A_{\mathrm{sum}}$ into \eqref{x_NLoS} and \eqref{n_act}, we can show that the NLoS part of the desired signal, $x_{\mathrm{NL,act}}+x_{\mathrm{NL,pas}}$, and the amplification noise , $z_{\mathrm{L,act}}+z_{\mathrm{NL,act}}$, are constants. Thus, problem (P3) can be solved by maximizing the ergodic capacity due to the LoS channel component, $x_{\mathrm{L}}$. 
By using the Cauchy-Schwarz inequality, we have
\begin{align}
    x_{\mathrm{L}}&=\gamma_1\left(\sum_{n=1}^{N_{\mathrm{act}}}\alpha_n+N_{\mathrm{pas}}\right)^2P_{\mathrm{B}}\beta^2/D^2_{\mathrm{BI}}d^2_{\mathrm{IU}}\\
    &=\gamma_1\left(\sum_{n=1}^{N_{\mathrm{act}}}\left(\alpha_n+N_{\mathrm{pas}}/N_{\mathrm{act}}\right)\right)^2P_{\mathrm{B}}\beta^2/D^2_{\mathrm{BI}}d^2_{\mathrm{IU}}\\
    &\leq \gamma_1N_{\mathrm{act}}^2\left(\alpha^*+N_{\mathrm{pas}}/N_{\mathrm{act}}\right)^2P_{\mathrm{B}}\beta^2/D^2_{\mathrm{BI}}d^2_{\mathrm{IU}},
\end{align}
where the equality holds if and only if $\alpha_{n}=\alpha^*,\forall n \in\mathcal{N}_{\mathrm{act}}$ with $\alpha^*=\sqrt{\frac{P_{\mathrm{I}}/N_{\mathrm{act}}}{P_{\mathrm{B}}\beta/D^2_{\mathrm{BI}}+\sigma^2_{\mathrm{I}}}}$ in \eqref{opt_a_n}, which thus completes the proof.
\section{Proof of Theorem \ref{the_opt_N}}\label{proof_lem3}
\begin{table}[t]\centering
\caption{Variations of the Receiver SNR under Different Conditions.}\label{opt_N_A}
\begin{center}
\begin{tabular}{|c|c|}
\hline 
Condition & Variations of $\gamma_{\mathrm{hyb}}(\tilde N_{\mathrm{act}})$ \\
\hline
$0<\sqrt{\frac{W_0}{W_{\mathrm{act}}}}<n_{\mathrm{A},2}$ & Increase for $\tilde N_{\mathrm{act}}\in(0,\frac{W_0}{W_{\mathrm{act}}}]$\\
\hline 
$n_{\mathrm{A},2}\leq \sqrt{\frac{W_0}{W_{\mathrm{act}}}}\leq n_{\mathrm{A},3}$ & Increase for  $\tilde N_{\mathrm{act}}\in(0,n^2_{\mathrm{A},2}]$, decrease for $\tilde N_{\mathrm{act}}\in(n^2_{\mathrm{A},2},\frac{W_0}{W_{\mathrm{act}}}]$\\
\hline 
$\sqrt{\frac{W_0}{W_{\mathrm{act}}}}>n_{\mathrm{A},3}$ &\!\! Increase for $\tilde N_{\mathrm{act}}\in(0,n^2_{\mathrm{A},2}]\cup(n^2_{\mathrm{A},3},\frac{W_0}{W_{\mathrm{act}}}]$, decrease for $\tilde N_{\mathrm{act}}\in(n^2_{\mathrm{A},2},n^2_{\mathrm{A},3}]$ \\
\hline 
\end{tabular}
\end{center}
\end{table}
For the receiver SNR, $\gamma_{\mathrm{hyb}}(\tilde N_{\mathrm{act}}) = \xi_1\left(-\tilde N_{\mathrm{act}}+\xi_2\sqrt{\tilde N_{\mathrm{act}}}+\xi_3\right)^2$, its first-order derivative over $\sqrt{\tilde N_{\mathrm{act}}}$ can be expressed as
\begin{equation}
    \frac{\partial \gamma_{\mathrm{hyb}}(\tilde N_{\mathrm{act}})}{\partial \sqrt{\tilde N_{\mathrm{act}}}}=2\xi_1\left(-\tilde N_{\mathrm{act}}+\xi_2 \sqrt{\tilde N_{\mathrm{act}}}+\xi_3\right)(-2  \sqrt{\tilde N_{\mathrm{act}}}+\xi_2).\label{deri_snr}
\end{equation}

First, it can be shown that \eqref{deri_snr} equals to 0 for $\sqrt{\tilde N_{\mathrm{act}}}=n_{\mathrm{A},1}\triangleq\frac{\xi_2-\sqrt{\xi_2^2+4\xi_3}}{2}$, $\sqrt{\tilde N_{\mathrm{act}}}=n_{\mathrm{A},2}=\frac{\xi_2}{2}$, and $\sqrt{\tilde N_{\mathrm{act}}}=n_{\mathrm{A},3}\triangleq\frac{\xi_2+\sqrt{\xi_2^2+4\xi_3}}{2}$ with $n_{\mathrm{A},1}<0<n_{\mathrm{A},2}<n_{\mathrm{A},3}$.
Then, we summarize in Table \ref{opt_N_A} the optimal number of active elements under different conditions of $\tilde N_{\mathrm{act}}$.

Based on Table \ref{opt_N_A}, when $n_{\mathrm{A},2}\leq \sqrt{\frac{W_0}{W_{\mathrm{act}}}}\leq n_{\mathrm{A},3}$, it is obtained that $\gamma_{\mathrm{hyb}}(\tilde N_{\mathrm{act}})$ increases for $0<\tilde N_{\mathrm{act}}\leq n_{\mathrm{A},2}^2$ and decreases for $n_{\mathrm{A},2}^2<\tilde N_{\mathrm{act}}\leq\frac{W_0}{W_{\mathrm{act}}}$, thus leading to the optimal number of active elements, $\tilde N^*_{\mathrm{act}}=n_{\mathrm{A},2}^2$, for maximizing $\gamma_{\mathrm{hyb}}(\tilde N_{\mathrm{act}})$. Following the similar procedure, the optimal number of active elements under different conditions can be obtained.
Note that when $\sqrt{\frac{W_0}{W_{\mathrm{act}}}}>n_{\mathrm{A},3}$, we have
\begin{equation}
    \gamma_{\mathrm{hyb}}(n_{\mathrm{A},2}^2)-\gamma_{\mathrm{hyb}}(\frac{W_0}{W_{\mathrm{act}}})=\frac{\xi_2^2}{4}+\xi_3-\xi_2\sqrt{\xi_3}=(\frac{\xi_2}{2}-\xi_3)^2\geq 0,
\end{equation}
such that $\gamma_{\mathrm{hyb}}(n_{\mathrm{A},2}^2)\geq \gamma_{\mathrm{hyb}}(\frac{W_0}{W_{\mathrm{act}}})$.
Moreover, when $0<\sqrt{\frac{W_0}{W_{\mathrm{act}}}}<n_{\mathrm{A},2}$, i.e., the deployment cost $W_0 \leq \frac{W_{\mathrm{pas}}^{2} P_{\mathrm{I}} / W_{\mathrm{act}}}{4 P_{\mathrm{B}} \beta / D_{\mathrm{BI}}^{2}+4 \sigma_{\mathrm{I}}^{2}}$, all the deployment budget should be assigned to active elements, i.e., $\tilde N^*_{\mathrm{act}}=\frac{W_0}{W_{\mathrm{act}}}$.
Based on the above, the optimal active/passive elements allocation for the hybrid IRS is given by \eqref{opt_na}, thus completing the proof.
\section{Proof of Theorem \ref{lem_thres}}\label{proof_lem4}
First, we compare the achievable capacity by the fully-passive and fully-active IRSs in \eqref{C_p} and \eqref{C_a}, respectively. By solving
\begin{equation}
    \frac{W_0A_{\mathrm{sum}} P_{\mathrm{B}} \beta^{2} /W_{\mathrm{act}} D_{\mathrm{BI}}^{2} d_{\mathrm{IU}}^{2}}{A_{\mathrm{sum}} \sigma_{\mathrm{I}}^{2} \beta / d_{\mathrm{IU}}^{2}+\sigma_{0}^{2}}>\frac{W_0^2P_{\mathrm{B}}\beta^2}{W_{\mathrm{pas}}^2D_{\mathrm{BI}}^2d_{\mathrm{IU}}^2\sigma_0^2},
\end{equation}
it is obtained that when
\begin{equation}
    W_0 > W_{\mathrm{A-P}}\triangleq\frac{W_{\mathrm{pas}}^2/W_{\mathrm{act}}}{\frac{\beta\sigma_{\mathrm{I}}^2}{d_{\mathrm{IU}}^2\sigma_0^2}+\frac{P_{\mathrm{B}} \beta / D_{\mathrm{BI}}^{2}+\sigma_{\mathrm{I}}^{2}}{P_{\mathrm{I}}}},
\end{equation}
the fully-passive IRS outperforms the fully-active IRS in term of the capacity.
Second, it is obtained from \eqref{deri_snr} that when $W_0>W_{\mathrm{A-H}}$, the hybrid IRS outperforms the fully-active IRS, and achieves its maximum.
Third, we compare the achievable capacity by the hybrid IRS and passive IRS. By comparing \eqref{C_p} and \eqref{C_h_2}, it can be shown that when
\begin{equation}
    \frac{W_0^2P_{\mathrm{B}}\beta^2}{W_{\mathrm{pas}}^2D_{\mathrm{BI}}^2d_{\mathrm{IU}}^2\sigma_0^2}>\frac{P_{\mathrm{B}}\beta^2\left(\frac{A_{\mathrm{sum}}W_{\mathrm{pas}}}{4W_{\mathrm{act}}}+\frac{W_0}{W_{\mathrm{pas}}}\right)^2/D_{\mathrm{BI}}^2d_{\mathrm{IU}}^2}{A_{\mathrm{sum}}\sigma_{\mathrm{I}}^2\beta/d_{\mathrm{IU}}^2+\sigma_0^2},\label{con_hp}
\end{equation}
the fully-passive IRS achieves the maximum capacity.
The condition in \eqref{con_hp} can be re-expressed as
\begin{equation}\label{ineq_H-P}
    \frac{\sigma_{\mathrm{I}}^2\beta}{d_{\mathrm{IU}}^2}W_0^2-\frac{W_{\mathrm{pas}}^2\sigma_0^2}{2W_{\mathrm{act}}}W_0-\frac{A_{\mathrm{sum}}W_{\mathrm{pas}}^4\sigma_0^2}{16W_{\mathrm{act}}^2}>0,
\end{equation}
which leads to
\begin{equation}
    W_0<\frac{W_{\mathrm{pas}}^2\sigma_0^2d_{\mathrm{IU}}^2}{4W_{\mathrm{act}}\sigma_{\mathrm{I}}^2\beta}-\frac{W_{\mathrm{pas}}^2\sigma_0d_{\mathrm{IU}}}{4W_{\mathrm{act}}\sigma_{\mathrm{I}}}\sqrt{\frac{\sigma_0^2d_{\mathrm{IU}}^2}{\sigma_{\mathrm{I}}^2\beta^2}+\frac{P_{\mathrm{I}}}{P_{\mathrm{B}}\beta^2/D_{\mathrm{BI}}^2+\sigma_{\mathrm{I}}^2\beta}}<0,
\end{equation}
and $W_0>W_{\mathrm{H-P}}$. 
Last, we discuss the relations of $W_{\mathrm{A-P}},W_{\mathrm{A-H}}$ and $W_{\mathrm{H-P}}$
to facilitate our analysis, we list all possible permutations of the above thresholds in Table \ref{per_thres}.
\begin{table}[t]\centering
\caption{Possible relations for budget thresholds.}\label{per_thres}
\begin{tabular}{|c|c|}
\hline
\begin{tabular}[c]{@{}c@{}}Permutations\end{tabular} & \begin{tabular}[c]{@{}c@{}}$W_{\mathrm{A-H}}<W_{\mathrm{A-P}}<W_{\mathrm{H-P}},W_{\mathrm{A-H}}<W_{\mathrm{H-P}}<W_{\mathrm{A-P}},W_{\mathrm{A-P}}<W_{\mathrm{A-H}}<W_{\mathrm{H-P}},$\\ $W_{\mathrm{H-P}}<W_{\mathrm{A-P}}<W_{\mathrm{A-H}},W_{\mathrm{A-P}}<W_{\mathrm{H-P}}<W_{\mathrm{A-H}},W_{\mathrm{H-P}}<W_{\mathrm{A-H}}<W_{\mathrm{A-P}}.$\end{tabular} \\ \hline
\end{tabular}
\end{table}
\noindent It can be shown that only one permutation of the above thresholds is feasible, i.e., $W_{\mathrm{A-H}}<W_{\mathrm{A-P}}<W_{\mathrm{H-P}}$. 
Specifically, the hybrid IRS reduces to the fully-active IRS when $W_0<W_{\mathrm{A-H}}$; it reduces to the fully-passive IRS when $W_0>W_{\mathrm{H-P}}$; and it outperforms the fully-active and fully-passive IRSs with the optimal number of active elements of $\tilde N_{\mathrm{act}}=\tilde N^*_{\mathrm{act}}$ otherwise. 
Moreover, all other permutations can be verified to be infeasible by contradiction. For example, for the permutation $W_{\mathrm{A-H}}<W_{\mathrm{H-P}}<W_{\mathrm{A-P}}$, when $W_{\mathrm{A-H}}<W_{\mathrm{H-P}}<W_0<W_{\mathrm{A-P}}$, we obtain that the hybrid IRS outperforms the fully-active IRS when $W_0>W_{\mathrm{A-H}}$, the fully-passive IRS outperforms hybrid IRS when $W_0>W_{\mathrm{H-P}}$, and the fully-active IRS outperforms the fully-passive IRS when $W_0<W_{\mathrm{A-P}}$, which contradicts each other and thus is infeasible.
Combining the above leads to the desired result.

\end{document}